\documentclass{lmcs}
\pdfoutput=1

\usepackage{lastpage}
\lmcsdoi{17}{4}{23}
\lmcsheading{}{\pageref{LastPage}}{}{}%
{Jan.~14,~2021}{Dec.~23,~2021}{}

\keywords{Nested types, GADTs, categorical semantics, parametricity}

\usepackage[utf8]{inputenc}
\usepackage{verbatim}
\usepackage{adjustbox}
\usepackage{marvosym}
\usepackage{graphicx}
\usepackage{amsmath}
\usepackage{amsthm}
\usepackage{amscd}
\usepackage{mathpartir}
\usepackage{xcolor}

\usepackage{bbold}
\usepackage{url}
\usepackage{upgreek}
\usepackage{stmaryrd}

\usepackage{lipsum}
\usepackage{tikz-cd}
\usetikzlibrary{cd}
\usetikzlibrary{calc}
\usetikzlibrary{arrows}

\usepackage{bussproofs}
\EnableBpAbbreviations

\DeclareMathAlphabet{\mathpzc}{OT1}{pzc}{m}{it}

\newcommand{\mcF}{\mathcal{F}}

\newcommand{\id}{id}

\newcommand{\sem}[1]{\llbracket{#1}\rrbracket}
\newcommand{\setsem}[1]{\llbracket{#1}\rrbracket^{\set}}
\newcommand{\relsem}[1]{\llbracket{#1}\rrbracket^{\rel}}
\newcommand{\dsem}[1]{\llbracket{#1}\rrbracket^{\mathsf D}}
\newcommand{\setenv}{\mathsf{SetEnv}}
\newcommand{\relenv}{\mathsf{RelEnv}}

\newcommand{\rel}{\mathsf{Rel}}

\newcommand{\colim}[2]{{{\underrightarrow{\lim}}_{#1}{#2}}}

\newcommand{\graph}[1]{\langle {#1} \rangle}

\newcommand{\nat}{\mathpzc{Nat}}

\newcommand{\df}{\; := \;}

\newcommand{\tvars}{\mathbb{T}}
\newcommand{\fvars}{\mathbb{F}}

\newcommand{\Lan}{\mathsf{Lan}}
\newcommand{\zerot}{\mathbb{0}}
\newcommand{\onet}{\mathbb{1}}

\renewcommand{\nat}{\mathbb{N}}

\newcommand{\Set}{\mathsf{Set}}
\newcommand{\Nat}{\mathsf{Nat}}
\newcommand{\Homrel}{\mathsf{Hom_{Rel}}}

\newcommand{\cse}[3]{\mathsf{case}\,{#1}\,\mathsf{of}\,\{{#2};\,{#3}\}}
\newcommand{\tin}{\mathsf{in}}
\newcommand{\Eq}{\mathsf{Eq}}

\newcommand{\curry}{\mathsf{curry}}

\newcommand{\eval}{\mathsf{eval}}

\newcommand{\ar}[1]{\##1}

\newcommand{\ol}[1]{\overline{#1}}

\theoremstyle{plain}

\newcommand{\inl}{\mathsf{inl}}
\newcommand{\inr}{\mathsf{inr}}
\newcommand{\fold}{\mathsf{fold}}
\newcommand{\ininv}[2]{(\tin^{-1}_{\onet +
  \beta \times \phi (\phi\beta)})_{#1}\, #2}

\newcommand{\cal}{\mathcal}
\newcommand{\F}{\mathcal{F}}

\newcommand{\set}{\mathsf{Set}}

\newcommand{\C}{\mathcal{C}}
\newcommand{\D}{\mathcal{D}}

\renewcommand{\id}{\mathit{id}}
\newcommand{\map}{\mathsf{map}}

\newcommand{\semmap}{\mathit{map}}


\newcommand{\filtype}{\Nat^\emptyset
 (\Nat^\emptyset \, \alpha \, \mathit{Bool})\, (\Nat^\emptyset
  (List \, \alpha) \, (List \, \alpha))}
\newcommand{\filtypeGRose}{\Nat^\emptyset
 (\Nat^\emptyset \, \alpha \, \mathit{Bool})\, (\Nat^\emptyset
  (\mathit{GRose}\,\psi \, \alpha) \, (\mathit{GRose}\,\psi \, (\alpha
  + \onet)))}

\begin{document}

\title[Parametricity for Primitive Nested Types and
  GADTs]{Parametricity for Primitive Nested Types and GADTs}

\author[P.~Johann]{Patricia Johann\rsuper{a}}
\author[E.~Ghiorzi]{Enrico Ghiorzi\rsuper{b}}
\address{Appalachian State University}
\email{johannp@appstate.edu, enrico.ghiorzi@iit.it}
\thanks{Supported by NSF awards CCR-1906388 and CCR-1420175.}


\begin{abstract}
This paper considers parametricity and its consequent free theorems
for nested data types. Rather than representing nested types via their
Church encodings in a higher-kinded or dependently typed extension of
System F, we adopt a functional programming perspective and design a
Hindley-Milner-style calculus with primitives for constructing nested
types directly as fixpoints. Our calculus can express all nested types
appearing in the literature, including truly nested types. At the
level of terms, it supports primitive pattern matching, map functions,
and fold combinators for nested types. Our main contribution is the
construction of a parametric model for our calculus. This is both
delicate and challenging. In particular, to ensure the existence of
semantic fixpoints interpreting nested types, and thus to establish a
suitable Identity Extension Lemma for our calculus, our type system
must explicitly track functoriality of types, and cocontinuity
conditions on the functors interpreting them must be appropriately
threaded throughout the model construction. We also prove that our
model satisfies an appropriate Abstraction Theorem, as well as that it
verifies all standard consequences of parametricity in the presence of
primitive nested types. We give several concrete examples illustrating
how our model can be used to derive useful free theorems, including a
short cut fusion transformation, for programs over nested
types. Finally, we consider generalizing our results to GADTs, and
argue that no extension of our parametric model for nested types can
give a functorial interpretation of GADTs in terms of left Kan
extensions and still be parametric.
\end{abstract}

\maketitle

\section{Introduction}\label{sec:intro}

\emph{Algebraic data types} (ADTs), both built-in and user-defined,
have long been at the core of functional languages such as Haskell,
ML, Agda, Epigram, and Idris. ADTs, such as that of natural numbers,
can be unindexed. But they can also be indexed over other types. For
example, the ADT of lists (here coded in Agda)

\[\begin{array}{l}
\mathtt{data\; List \;(A : Set)\;:\;Set\;where}\\
\hspace*{0.4in}\mathtt{nil\;:\; List\;A}\\
\hspace*{0.4in}\mathtt{cons\;:\;A \rightarrow List\;A \rightarrow List\;A}
\end{array}\]

\noindent
is indexed over its element type $\mathtt{A}$.  The instance of
$\mathtt{List}$ at index $\mathtt{A}$ depends only on itself, and so is
independent of $\mathtt{List\,B}$ for any other index $\mathtt{B}$.
That is, $\mathtt{List}$, like all other ADTs, defines a \emph{family
  of inductive types}, one for each index type.

Over time, there has been a notable trend toward data types whose
non-regular indexing can capture invariants and other sophisticated
properties that can be used for program verification and other
applications.  A simple example of such a type is given by the type
\[\begin{array}{l}
\mathtt{data\; PTree\;(A : Set)\;:\;Set\;where}\\
\hspace*{0.4in}\mathtt{pleaf\;:\;A \rightarrow PTree\;A}\\
\hspace*{0.4in}\mathtt{pnode\;:\;PTree\;(A \times A) \rightarrow PTree\;A}
\end{array}\]
\noindent
of perfect trees, which can be thought of as constraining lists to
have lengths that are powers of 2.  A similar data type
$\mathtt{Nest}$ is given in the canonical paper~\cite{bm98} on nested
types. The above code makes clear that perfect trees at index type
$\mathtt{A}$ are defined in terms of perfect trees at index type
$\mathtt{A \times A}$. This is typical of nested types, one type
instance of which can depend on others, so that the entire family of
types must actually be defined at once. A nested type thus defines not
a family of inductive types, but rather an \emph{inductive family of
  types}.  Nested types include simple nested types, like perfect
trees, none of whose recursive occurrences occur below another type
constructor; ``deep'' nested types~\cite{jp20}, such as the nested
type
\[\begin{array}{l}
\mathtt{data\; PForest\;(A : Set)\;:\;Set\;where}\\
\hspace*{0.4in}\mathtt{fempty\;:\;PForest\;A}\\
\hspace*{0.4in}\mathtt{fnode\;:\; A \rightarrow PTree\;(PForest\;A) \to
PForest\;A}
\end{array}\]
\hspace{-0.04in}of perfect forests, whose recursive occurrences appear
below type constructors for other nested types; and truly nested
types, such as the nested type
\[\begin{array}{l}
\mathtt{data\; Bush\;(A : Set)\;:\;Set\;where}\\
\hspace*{0.4in}\mathtt{bnil\;:\; Bush\;A}\\
\hspace*{0.4in}\mathtt{bcons\;:\;A \rightarrow Bush\;(Bush \; A)
  \rightarrow Bush\;A}
\end{array}\]
\hspace{-0.04in}of bushes, whose recursive occurrences appear below
their own type constructors.

\emph{Relational parametricity} encodes a powerful notion of
type-uniformity, or representation independence, for data types in
polymorphic languages. It formalizes the intuition that a polymorphic
program must act uniformly on all of its possible type instantiations
by requiring that every such program preserves all relations between
pairs of types at which it is instantiated. Parametricity was
originally put forth by Reynolds~\cite{rey83} for System
F~\cite{gir72}, the calculus at the core of all polymorphic functional
languages. It was later popularized as Wadler's ``theorems for
free''~\cite{wad89}, so called because it can deduce properties of
programs in such languages solely from their types, i.e., with no
knowledge whatsoever of the text of the programs involved.  Most of
Wadler's free theorems are consequences of naturality for polymorphic
list-processing functions. However, parametricity can also derive
results that go beyond just naturality, such as correctness for ADTs
of the program optimization known as \emph{short cut
  fusion}~\cite{glp93,joh02}.

But what about nested types? Does parametricity still hold if such
types are added to polymorphic calculi? More practically, can we
justifiably reason type-independently about (functions over) nested
types in functional languages?

Type-independent reasoning about ADTs in functional languages is
usually justified by first representing ADTs by their Church
encodings, and then reasoning type-independently about these
encodings. This is typically justified by constructing a parametric
model --- i.e, a model in which polymorphic functions preserve
relations \emph{\'a la} Reynolds --- for a suitable fragment of System
F, demonstrating that an initial algebra exists for the positive type
constructor corresponding to the functor underlying an ADT of
interest, and showing that each such initial algebra is suitably
isomorphic to its corresponding Church encoding. In fact, this
isomorphism of initial algebras and their Church encodings is one of
the ``litmus tests'' for the goodness of a parametric model.

This approach works well for ADTs, which are always fixpoints of \emph{first-order} functors, and whose Church encodings, which involve
quantification over only type variables, are always expressible in
System F. For example, $\mathtt{List\,A}$ is the fixpoint of the
first-order functor $F\,X = 1 + A \times X$ and has Church encoding
$\forall \alpha. \, \alpha \to (\mathtt{A} \to \alpha \to \alpha) \to
\alpha$. But despite Cardelli's~\cite{car97} claim that ``virtually
any basic type of interest can be encoded within F$_2$'' --- i.e.,
within System F --- non-ADT nested types cannot.  Not even our
prototypical nested type of perfect trees has a Church encoding
expressible in System F\@!  Indeed, $\mathtt{PTree\,A}$ cannot be
represented as the fixpoint of any \emph{first-order} functor. However,
it can be seen as the instance at index $\mathtt{A}$ of the fixpoint
of the \emph{higher-order} functor $H\,F\,A\,=\, (A \to F\,A) \to (F
\,(A \times A) \to F\,A) \to F\,A$. It thus has Church encoding
$\forall f.\, (\forall \alpha.\,\alpha \to f\alpha) \to (\forall
\alpha. \,f (\alpha \times \alpha) \to f\alpha) \to \forall \alpha.\,
f\alpha$, which requires quantification at the higher kind $* \to *$
for $f$. A similar situation obtains for any (non-ADT) nested
type. Unfortunately, higher-kinded quantification is not available in
System F, so if we want to reason type-independently about nested
types in a language based on it we have only two options: (\emph{i})\,move to an extension of System F, such as the higher-kinded
calculus F$_\omega$ or a dependent type theory, and reason via their
Church encodings in a known parametric model for that extension, or
(\emph{ii})\, add nested types to System F as primitives --- i.e., as
primitive type-level fixpoints --- and construct a parametric model
for the result.

Since the type systems of F$_\omega$ and dependent type theories are
designed to extend System F with far more than non-ADT data types, it
seems like serious overkill to pass to their parametric models to
reason about nested types in System F. Indeed, such calculi support
fundamentally new features --- e.g., the full hierarchy of
higher-kinds or term-indexed types --- adding complexity to their
models that is entirely unnecessary for reasoning about nested types.
In addition, to our knowledge no such models represent data types as
primitive fixpoints, so they cannot be used to obtain the results of
this paper.  (See the last paragraph of this section for a more
detailed discussion in the case of F$_\omega$.) Whether or not these
results even hold for such calculi is an open question.

This paper therefore pursues the second option above.  We first design
a Hindley-Milner-style calculus supporting primitive nested types,
together with primitive types of natural transformations representing
morphisms between them. Our calculus can express all nested types
appearing in the literature, including truly nested types.  At the
term-level, it supports primitive pattern matching, map functions, and
fold combinators for nested types.\footnote{We leave incorporating
  general term-level recursion to future work because, as
  Pitts~\cite{pit00} reminds us, ``it is hard to construct models of
  both impredicative polymorphism and fixpoint recursion''. In fact,
  as the development in this paper shows, constructing a parametric
  model even for our predicative calculus with primitive nested types
  --- and even without term-level fixpoints --- is already rather
  involved. On the other hand, our calculus is strongly normalizing,
  so it perhaps edges us toward the kind of provably total practical
  programming language proposed in~\cite{wad89}.}  Our main
contribution is the construction of a parametric model for our
calculus. This is both delicate and challenging. To ensure the
existence of semantic fixpoints interpreting nested types, and thus to
establish a suitable Identity Extension Lemma, our type system must
explicitly track functoriality of types, and cocontinuity conditions
on the functors interpreting them must be appropriately threaded
throughout the model construction. Our model validates all standard
consequences of parametricity in the presence of primitive nested
types, including the isomorphism of primitive ADTs and their Church
encodings, and correctness of short cut fusion for nested types. The
relationship between naturality and parametricity has long been of
interest, and our inclusion of a primitive type of natural
transformations allows us to clearly delineate those consequences of
parametricity that follow from naturality, from those, such as short
cut fusion for nested types, that require the full power of
parametricity.

\vspace*{0.1in}

\noindent
{\bf Structure of this Paper} We introduce our calculus in
Section~\ref{sec:calculus}.  Its type system is based on the
level-2-truncation of the higher-kinded grammar from~\cite{jp19},
augmented with a primitive type of natural
transformations. (Since~\cite{jp19} contains no term calculus, the
issue of parametricity could not even be raised there.)  In
Section~\ref{sec:type-interp} we give set and relational
interpretations of our types. Set interpretations are possible
precisely because our calculus is predicative --- as ensured by our
primitive natural transformation types --- and~\cite{jp19} guarantees
that local finite presentability of $\set$ makes it suitable for
interpreting nested types.  As is standard in categorical models,
types are interpreted as functors from environments interpreting their
type variable contexts to sets or relations, as appropriate. To ensure
that these functors satisfy the cocontinuity properties needed for the
semantic fixpoints interpreting nested types to exist, set environments
must map $k$-ary type constructor variables to appropriately
cocontinuous $k$-ary functors on sets, relation environments must map
$k$-ary type constructor variables to appropriately cocontinuous
$k$-ary relation transformers, and these cocontinuity conditions must
be threaded through our type interpretations in such a way that an
Identity Extension Lemma (Theorem~\ref{thm:iel}) can be
proved. Properly propagating the cocontinuity conditions requires
considerable care, and Section~\ref{sec:iel}, where it is done, is
(apart from tracking functoriality in the calculus so that it is
actually possible) where the bulk of the work in constructing our
model lies.

In Section~\ref{sec:term-interp}, we give set and relational
interpretations for the terms of our calculus. As usual in categorical
models, terms are interpreted as natural transformations from
interpretations of their term contexts to interpretations of their
types, and these must cohere in what is essentially a fibred way.  In
Section~\ref{sec:Nat-type-terms} we prove a scheme deriving free
theorems that are consequences of naturality of polymorphic functions
over nested types. This scheme is very general, and is parameterized
over both the data type and the type of the polymorphic function at
hand. It has, for example, analogues for nested types of Wadler's
map-rearrangement free theorems as instances. In
Section~\ref{sec:abs-and-nat} we prove that our model satisfies an
Abstraction Theorem (Theorem~\ref{thm:abstraction}), which we use to
derive, in Section~\ref{sec:ftnt}, other parametricity results that go
beyond mere naturality. In Section~\ref{sec:GADTs} we show that the
results of this paper do not extend to arbitrary GADTs, and argue that
no extension of the parametric model we construct here for nested
types can give a functorial interpretation of GADTs in terms of left
Kan extensions as proposed in~\cite{jp19} and still be parametric. We
conclude and offer some directions for future work in
Section~\ref{sec:conclusion}.

This paper extends the conference paper~\cite{jgj21} by including more
exposition, proofs of all theorems, and more examples. In addition,
the discussion of parametricity for GADTs in Section~\ref{sec:GADTs}
is entirely new.

\vspace*{0.05in}

\noindent
{\bf Related Work} There is a long line of work on categorical
models of parametricity for System F;\@ see,
e.g.,~\cite{bfss90,bm05,dr04,gjfor15,has94,jac99,mr92,rr94}.  To our
knowledge, all such models treat ADTs via their Church encodings,
verifying in the just-constructed parametric model that each Church
encoding is a solution for the fixpoint equation associated with its
corresponding data type. The present paper draws on this rich
tradition of categorical models of parametricity for System F, but
modifies them to treat nested types (and therefore ADTs) as primitive
data types.

The only other extensions we know of System F with primitive data
types are those in~\cite{mat11,mg01,pit98,pit00,wad89}.
Wadler~\cite{wad89} treats full System F, and sketches parametricity
for its extension with lists. Martin and Gibbons~\cite{mg01} outline a
semantics for a grammar of primitive nested types similar to that
in~\cite{jp19}, but treat only polynomial nested types, i.e., nested
types that are fixpoints of polynomial higher-order
functors. Unfortunately, the model suggested in~\cite{mg01} is not
entirely correct (see~\cite{jp19}), and parametricity is nowhere
mentioned.  Matthes~\cite{mat11} treats System F with non-polynomial
ADTs and nested types, but his focus is on expressivity of generalized
Mendler iteration for them. He gives no semantics whatsoever.

In~\cite{pit00}, Pitts adds list ADTs to full System F with a
term-level fixpoint primitive. Other ADTs are included
in~\cite{pit98}, but nested types are not expressible in either
syntax. Pitts constructs parametric models for his calculi based on
operational, rather than categorical, semantics. A benefit of using
operational semantics to build parametric models is that it avoids
needing to work in a suitable metatheory to accommodate System F's
impredicativity. It is well-known that there are no set-based
parametric models of System F~\cite{rey84}, so parametric models for
it and its extensions are often constructed in a syntactic metatheory
such as the impredicative Calculus of Inductive Constructions (iCIC).
By adding primitive nested types to a Hindley-Milner-style calculus
and working in a categorical setting we side-step such metatheoretic
distractions.

Atkey~\cite{atk12} treats parametricity for arbitrary higher kinds,
constructing a parametric model for System F$_\omega$ within iCIC,
rather than in a semantic category. His construction is in some ways
similar to ours, but he represents (now higher-kinded) data types
using Church encodings rather than as primitives. Moreover, the
$\mathit{fmap}$ functions associated to Atkey's functors must be \emph{given}, presumably by the programmer, together with their underlying
type constructors. This absolves him of imposing cocontinuity
conditions on his model to ensure that fixpoints of his functors
exist, but, unfortunately, he does not indicate which type
constructors support $\mathit{fmap}$ functions. We suspect explicitly
spelling out which types can be interpreted as strictly positive
functors would result in a full higher-kinded extension of a calculus
akin to that presented here.

\section{The Calculus}\label{sec:calculus}

\subsection{Types}
For each $k \ge 0$, we assume countable sets $\tvars^k$ of \emph{type
  constructor variables of arity $k$} and $\fvars^k$ of
\emph{functorial variables of arity $k$}, all mutually disjoint.  The
sets of all type constructor variables and functorial variables are
$\tvars = \bigcup_{k \ge 0} \tvars^k$ and $\fvars = \bigcup_{k \ge 0}
\fvars^k$, respectively, and a \emph{type variable} is any element of
$\tvars \cup \fvars$.  We use lower case Greek letters for type
variables, writing $\phi^k$ to indicate that $\phi \in \tvars^k \cup
\fvars^k$, and omitting the arity indicator $k$ when convenient,
unimportant, or clear from context. Letters from the beginning of the
alphabet denote type variables of arity $0$, i.e., elements of
$\tvars^0 \cup \fvars^0$. We write $\overline{\phi}$ for either a set
$\{\phi_1,\dots,\phi_n\}$ of type constructor variables or a set of
functorial variables when the cardinality $n$ of the set is
unimportant or clear from context. If $V$ is a set of type variables
we write $V, \overline{\phi}$ for $V \cup \overline{\phi}$ when $V
\cap \overline{\phi} = \emptyset$.  We omit the vector notation for a
singleton set, thus writing $\phi$, instead of $\overline{\phi}$, for
$\{\phi\}$.

If $\Gamma$ is a finite subset of\, $\tvars$, $\Phi$ is a finite
subset of\, $\fvars$, $\overline{\alpha}$ is a finite subset of\,
$\fvars^0$ disjoint from $\Phi$, and $\phi^k \in \fvars^k \setminus
\Phi$, then the set $\mcF$ of well-formed types is given in
Definition~\ref{def:wftypes}. The notation there entails that an
application $F F_1 \dots F_k$ is allowed only when $F$ is a type variable
of arity $k$, or $F$ is a subexpression of the form $\mu
\phi^{k}.\lambda \alpha_1 \dots \alpha_k.F'$. Accordingly, an overbar
indicates a sequence of types whose length matches the arity of the
type applied to it.  Requiring that types are always in such
\emph{$\eta$-long normal form} avoids having to consider
$\beta$-conversion at the level of types. In a type
$\Nat^{\ol{\alpha}}F\,G$, the $\Nat$ operator binds all occurrences of
the variables in $\ol{\alpha}$ in $F$ and $G$; intuitively,
$\Nat^{\ol{\alpha}}F\,G$ represents the type of a natural
transformation in $\ol{\alpha}$ from the functor $F$ to the functor
$G$.  In a subexpression $\mu \phi^k.\lambda \ol{\alpha}.F$, the $\mu$
operator binds all occurrences of the variable $\phi$, and the
$\lambda$ operator binds all occurrences of the variables in
$\ol{\alpha}$, in the body $F$.

A \emph{type constructor context}, or \emph{non-functorial context}, is
a finite set $\Gamma$ of type constructor variables, and a \emph{functorial context} is a finite set $\Phi$ of functorial
variables. In Definition~\ref{def:wftypes}, a judgment of the form
$\Gamma;\Phi \vdash F$ indicates that the type $F$ is intended to be
functorial in the variables in $\Phi$ but not necessarily in those in
$\Gamma$.

\begin{defi}\label{def:wftypes}
The formation rules for the set $\F$ of\, \emph{(well-formed) types}
are

\begin{mathpar}
\AXC{\phantom{$\Gamma,\Phi$}}
\UIC{$\Gamma;\Phi \vdash \zerot$}
\DisplayProof
\quad
\AXC{\phantom{$\Gamma,\Phi$}}
\UIC{$\Gamma;\Phi \vdash \onet$}
\DisplayProof
\\
\AXC{$\Gamma;\ol{\alpha^0} \vdash F$}
\AXC{$\Gamma;\ol{\alpha^0}  \vdash G$}
\BIC{$\Gamma;\Phi \vdash \Nat^{\ol{\alpha^0}}F \,G$}
\DisplayProof
\\
\AXC{$\phi^k \in \Gamma \cup \Phi$}
\AXC{$\quad\quad\ol{\Gamma;\Phi \vdash F}$}
\BIC{$\Gamma;\Phi \vdash \phi^k \ol{F}$}
\DisplayProof
\\
\AXC{$\Gamma;\ol{\alpha^0},\phi^k \vdash F$}
\AXC{$\quad\quad\ol{\Gamma;\Phi \vdash G}$}
\BIC{$\Gamma;\Phi \vdash (\mu \phi^k.\lambda
  \ol{\alpha^0}. \,F)\,\ol{G}$}
\DisplayProof
\\
\AXC{$\Gamma;\Phi \vdash F$}
\AXC{$\Gamma;\Phi \vdash G$}
\BIC{$\Gamma; \Phi \vdash F + G$}
\DisplayProof
\\
\AXC{$\Gamma;\Phi \vdash F$}
\AXC{$\Gamma;\Phi \vdash G$}
\BIC{$\Gamma; \Phi \vdash F \times G$}
\DisplayProof
\end{mathpar}
\end{defi}
We write $\vdash F$ for $\emptyset;\emptyset \vdash F$.
Definition~\ref{def:wftypes} ensures that the expected weakening rules
for well-formed types hold, i.e., if $\Gamma; \Phi \vdash F$ is
well-formed, then both $\Gamma, \ol{\psi}; \Phi \vdash F$ and $\Gamma;
\Phi, \ol{\phi} \vdash F$ are also well-formed. Note that weakening
does not change the contexts in which types can be formed.  For
example, a $\mathsf{Nat}$-type --- i.e., a type of the form
$\Nat^{\ol{\alpha}} F \,G$ --- can be formed in any functorial
context. But since $\Nat$-types contain no functorial variables, we
will form them in empty functorial contexts below whenever
convenient. Of course, we cannot \emph{first} weaken the functorial
contexts of $F$ and $G$ to $\Gamma; \ol{\alpha}, \beta \vdash F$ and
$\Gamma; \ol{\alpha}, \beta \vdash G$, and \emph{then} form
$\Nat^{\ol{\alpha}} F \,G$ in the weakened context $\Gamma; \beta$.
If $\Gamma;\emptyset \vdash F$ and $\Gamma;\emptyset \vdash G$, then
our rules allow formation of the type $\Gamma;\emptyset \vdash
\Nat^\emptyset F \,G$, which represents the System F type $\Gamma
\vdash F \to G$. Similarly, if $\Gamma;\ol{\alpha} \vdash F$, then our
rules allow formation of the type $\Gamma; \emptyset \vdash
\Nat^{\ol\alpha} \,\onet \,F$, which represents the System F type
$\Gamma; \emptyset \vdash \forall \ol\alpha . F$. However, some System
F types, such as $\forall \alpha. (\alpha \to \alpha) \to \alpha$, are
not representable in our calculus. Note that, in the rule for
$\mu$-types, no variables in $\Phi$ appear in the body $F$ of $\mu
\phi. \lambda \ol\alpha.F$. This will be critical to proving the
Identity Extension Lemma for our calculus.

Definition~\ref{def:wftypes} allows the formation of all of the types
from Section~\ref{sec:intro}:
\[\begin{array}{lll}
\mathit{List}\, \alpha & = & \mu \beta. \,\onet + \alpha \times
\beta\; \mbox{ or } \; (\mu \phi. \lambda \beta.\,\onet + \beta \times \phi
\beta)\,\alpha\\
\mathit{PTree}\,\alpha & = & (\mu \phi. \lambda \beta.\,\beta +
\phi\,(\beta \times \beta))\,\alpha\\
\mathit{Forest}\,\alpha & = & (\mu \phi. \lambda \beta. \,\onet +
\beta \times \mathit{PTree}\,(\phi \beta))\,\alpha\\
\mathit{Bush}\,\alpha & = & (\mu \phi.\lambda \beta. \,\onet + \beta
\times \phi\,(\phi\beta))\,\alpha
\end{array}\]

Note that since the body $F$ of a type $(\mu \phi. \lambda
\ol\alpha. F)\ol{G}$ can only be functorial in $\phi$ and the
variables in $\ol{\alpha}$, the representation of
$\mathit{List}\,\alpha$ as the ADT $\mu \beta. \,\onet + \alpha \times
\beta$ cannot be functorial in $\alpha$. By contrast, if
$\mathit{List}\,\alpha$ is represented as the nested type $(\mu
\phi. \lambda \beta.\,\onet + \beta \times \phi \beta)\,\alpha$ then
we can choose $\alpha$ to be a functorial variable or not when forming
the type. This observation holds for other ADTs as well; for example,
if $\mathit{Tree}\,\alpha\,\gamma = \mu \beta. \alpha + \beta \times
\gamma \times \beta$, then $\alpha, \gamma; \emptyset \vdash
\mathit{Tree}\,\alpha\,\gamma$ is well-formed, but $\emptyset; \alpha,
\gamma \vdash \mathit{Tree}\,\alpha\,\gamma$ is not. And it also
applies to some non-ADT types, such as $\mathit{GRose}\,\phi\,\alpha =
\mu \beta. \onet + \alpha \times \phi\beta$, in which $\phi$ and
$\alpha$ must both be non-functorial variables.  It is in fact
possible to allow ``extra'' $0$-ary functorial variables in the body
of $\mu$-types (functorial variables of higher arity are the real
problem), which would allow the first-order representations of ADTs to
be functorial. However, since doing this requires some changes to the
formation rule for $\mu$-types, as well as the delicate threading of
some additional conditions throughout our model construction, we do
not pursue this line of investigation here.

Definition~\ref{def:wftypes} allows well-formed types to be functorial
in no variables. Functorial variables can also be demoted to
non-functorial status: if\,$F[\phi :== \psi]$ is the textual
replacement of $\phi$ in $F$, then $\Gamma, \psi^k; \Phi \vdash
F[\phi^k :== \psi^k]$ is derivable whenever $\Gamma; \Phi, \phi^k
\vdash F$ is. The proof is by induction on the structure of $F$. In
addition to textual replacement, we also have a substitution operation
on types.

If $\Gamma; \Phi \vdash F$ is a type, if $\Gamma$ and $\Phi$ contain
only type variables of arity $0$, and if $k=0$ for every occurrence of
$\phi^k$ bound by $\mu$ in $F$, then we say that $F$ is \emph{first-order}; otherwise we say that $F$ is \emph{second-order}. Substitution for first-order types is the usual
capture-avoiding textual substitution. We write $H[\alpha := F]$
for the result of substituting $F$ for $\alpha$ in $H$, and
$H[\alpha_1 := F_1, \ldots ,\alpha_k := F_k]$, or $H[\ol{\alpha := F}]$
when convenient, for $H[\alpha_1 := F_1][\alpha_2 := F_2, \ldots ,\alpha_k
  := F_k]$. The operation of \emph{second-order type substitution along
  $\ol\alpha$} is given in Definition~\ref{def:second-order-subst},
where we adopt a similar notational convention for vectors of types.
Of course, $(\cdot)[\phi^0 :=_\emptyset F]$ coincides with first-order
substitution. We omit $\ol\alpha$ when convenient, but note that it
is not correct to substitute along non-functorial variables.

\begin{defi}\label{def:second-order-subst}
If \,$\Gamma; \Phi,\phi^k \vdash H$ and $\Gamma;\Phi, \ol{\alpha}
\vdash F$ with $|\ol\alpha| = k$, then the operation $H[\phi :=_{\ol
    \alpha} F]$ of \emph{second-order type substitution along
  $\ol\alpha$} is defined by induction on $H$ as follows:
\[\begin{array}{lll}
\zerot[\phi :=_{\ol{\alpha}} F] & = & \zerot\\[0.5ex]
\onet[\phi :=_{\ol{\alpha}} F] & = & \onet\\[0.25ex]
(\Nat^{\ol\beta} G \,K)[\phi :=_{\ol{\alpha}} F]
& = & \Nat^{\ol\beta}\, G \,K\\
(\psi\ol{G})[\phi :=_{\ol{\alpha}} F] & = &
\left\{\begin{array}{ll}
\psi \,\ol{G[\phi :=_{\ol{\alpha}} F]} & \mbox{if } \psi \not = \phi\\
  F[\ol{\alpha  := G[\phi :=_{\ol{\alpha}} F]}]
  & \mbox{if } \psi = \phi
\end{array}\right.\\[2.8ex]
(G + K)[\phi :=_{\ol{\alpha}} F] & = & G[\phi
  :=_{\ol{\alpha}} F] + K[\phi :=_{\ol{\alpha}} F]\\[0.5ex]
(G \times K)[\phi :=_{\ol{\alpha}} F] & = &
G[\phi :=_{\ol{\alpha}} F] \times K[\phi
  :=_{\ol{\alpha}} F]\\[0.5ex]
((\mu \psi. \lambda \ol{\beta}. G)\ol{K})[\phi :=_{\ol{\alpha}}
  F] & = & (\mu \psi. \lambda \ol{\beta}. \,G)\, \ol{K[\phi :=_{\ol{\alpha}} F]}
\end{array}\]
\end{defi}
\noindent
The idea is that the arguments to $\phi$ get substituted for the
variables in $\ol\alpha$ in each $F$ replacing an occurrence of
$\phi$. It is not hard to see that $\Gamma;\Phi \vdash H[\phi
  :=_{\ol{\alpha}} F]$.

\subsection{Terms}\label{sec:terms}

\begin{figure*}

  \begin{adjustbox}{varwidth=5.8in, max width=5.5in, fbox, center}
       \[\begin{array}{ccc}
       \AXC{$\Gamma;\Phi \vdash F$}
       \UIC{$\Gamma;\Phi \,|\, \Delta,x :F \vdash x : F$}
       \hspace*{0.2in}\DisplayProof\hspace*{0.05in}
       &
       \AXC{$\Gamma;\Phi \,|\, \Delta \vdash t : \zerot$}
       \AXC{$\Gamma;\Phi \vdash F$}
       \BIC{$\Gamma;\Phi \,|\, \Delta \vdash \bot_F t  : F$}
       \DisplayProof\hspace*{0.05in}
       &
       \AXC{$\phantom{\Gamma;\Phi}$}
       \UIC{$\Gamma;\Phi \,|\, \Delta \vdash \top : \onet$}
       \DisplayProof\\\\
       \end{array}\]

       \vspace*{-0.15in}

       \[\begin{array}{cc}
       \AXC{$\Gamma;\Phi \,|\, \Delta \vdash s: F$}
       \UIC{$\Gamma;\Phi \,|\, \Delta \vdash \inl \,s: F + G$}
       \hspace*{0.8in}\DisplayProof\hspace*{0.05in}
       &
       \AXC{$\Gamma;\Phi \,|\, \Delta \vdash t : G$}
       \UIC{$\Gamma;\Phi \,|\, \Delta \vdash \inr \,t: F + G$}
       \DisplayProof\\\\
       \end{array}\]

       \vspace*{-0.1in}

       \[\begin{array}{c}
       \;\;\AXC{$\Gamma; \Phi \vdash F,G$}
       \AXC{$\Gamma;\Phi \,|\, \Delta \vdash t : F+G$}
       \AXC{$\Gamma;\Phi \,|\, \Delta, x : F \vdash l : K \hspace{0.2in} \Gamma;\Phi \,|\, \Delta, y : G \vdash r : K$}
       \TIC{$\Gamma;\Phi~|~\Delta \vdash \cse{t}{x \mapsto l}{y \mapsto r} : K$}
       \hspace*{-0.2in}\DisplayProof
       \end{array}\]

       \vspace*{0.05in}

       \[\begin{array}{lll}
       \AXC{$\Gamma;\Phi \,|\, \Delta \vdash s: F$}
       \AXC{$\Gamma;\Phi \,|\, \Delta \vdash t : G$}
       \BIC{$\Gamma;\Phi \,|\, \Delta \vdash (s,t) : F \times G$}
       \DisplayProof\hspace*{0.05in}
       &
       \AXC{$\Gamma;\Phi \,|\, \Delta \vdash t : F \times G$}
       \UIC{$\Gamma;\Phi \,|\, \Delta \vdash \pi_1 t : F$}
       \DisplayProof\hspace*{0.05in}
       &
       \AXC{$\Gamma;\Phi \,|\, \Delta \vdash t : F \times G$}
       \UIC{$\Gamma;\Phi \,|\, \Delta \vdash \pi_2 t : G$}
       \DisplayProof
       \end{array}\]

       \[\begin{array}{c}
       \AXC{$\Gamma; \ol{\alpha} \vdash F$}
       \AXC{$\Gamma; \ol{\alpha} \vdash G$}
       \AXC{$\Gamma; \ol{\alpha} \,|\, \Delta, x : F \vdash t: G$}
       \TIC{$\Gamma; \Phi
         \,|\, \Delta \vdash L_{\ol{\alpha}} x.t : \Nat^{\ol{\alpha}} \,F \,G$}
       \DisplayProof\vspace*{-0.05in}
       \\\\
       \AXC{$\ol{\Gamma;\Phi \vdash K}$}
       \AXC{$\Gamma; \emptyset
         \,|\, \Delta \vdash t : \Nat^{\ol{\alpha}} \,F \,G$}
       \AXC{$\Gamma;\Phi \,|\, \Delta \vdash s: F[\overline{\alpha := K}]$}
       \TIC{$\Gamma;\Phi\,|\, \Delta \vdash t_{\ol K} s:
         G[\overline{\alpha := K}]$}
       \DisplayProof\vspace*{-0.1in}
       \\\\
       \AXC{$\Gamma; \ol{\phi}, \ol{\gamma} \vdash H$}
       \AXC{$\ol{\Gamma; \ol{\beta},\ol{\gamma} \vdash F}$}
       \AXC{$\ol{\Gamma; \ol{\beta},\ol{\gamma} \vdash
           G}$}
       \TIC{$\Gamma; \Phi         ~|~\Delta 
         \vdash \map^{\ol{F},\ol{G}}_H :
         \Nat^\emptyset\;(\ol{\Nat^{\ol{\beta},\ol{\gamma}}\,F\,G})\;
         (\Nat^{\ol{\gamma}}\,H[\ol{\phi :=_{\ol{\beta}} F}]\;H[\ol{\phi
             :=_{\ol{\beta}} G}])$}
       \DisplayProof\vspace*{-0.1in}
       \\\\
       \AXC{$\Gamma; \phi, \ol{\alpha} \vdash H$}
       \UIC{$\Gamma; \Phi  \,|\, \Delta \vdash \tin_H :
         \Nat^{\ol{\beta}} H[\phi :=_{\ol{\beta}} (\mu
           \phi.\lambda \ol{\alpha}.H)\ol{\beta}][\ol{\alpha := \beta}]\,(\mu
         \phi.\lambda \ol{\alpha}.H)\ol{\beta}$}
       \hspace*{0.2in}\DisplayProof\vspace*{-0.1in}
       \\\\
       \AXC{$\Gamma; \phi,\ol{\alpha} \vdash H$}
       \AXC{$\Gamma; \ol\beta \vdash F$}
       \BIC{$\Gamma; \Phi  \,|\, \Delta \vdash \fold^F_H :
         \Nat^\emptyset\; (\Nat^{\ol{\beta}}\,H[\phi
           :=_{\ol{\beta}} F][\ol{\alpha := \beta}]\,F)\;
         (\Nat^{\ol{\beta}}\,(\mu \phi.\lambda
         \ol{\alpha}.H)\ol{\beta}\,F)$}
       \hspace*{0.2in}\DisplayProof\vspace*{-0.1in}
       \end{array}\]

       \vspace*{0.05in}

       \caption{Well-formed terms}\label{fig:terms} \vspace*{-0.00in}
\end{adjustbox}
\end{figure*}

To define our term calculus we assume an infinite set $\cal V$ of term
variables disjoint from $\tvars$ and $\fvars$. If $\Gamma$ is a type
constructor context and $\Phi$ is a functorial context, then a \emph{term context for $\Gamma$ and $\Phi$} is a finite set of bindings of
the form $x : F$, where $x \in {\cal V}$ and $\Gamma; \Phi \vdash
F$. We adopt the above conventions for denoting disjoint unions and
vectors in term contexts. If $\Delta$ is a term context for $\Gamma$
and $\Phi$, then the formation rules for the set of \emph{well-formed
  terms over $\Delta$} are as in Figure~\ref{fig:terms}. In the rule
there for $L_{\ol{\alpha}}x.t$, the $L$ operator binds all occurrences
of the type variables in $\ol{\alpha}$ in the types of $x$ and $t$, as
well as all occurrences of $x$ in $t$. In the rule for $t_{\ol K} s$
there is one type in $\ol K$ for every functorial variable in $\ol
\alpha$. In the rule for $\map^{\ol{F},\ol{G}}_H$ there is one type
$F$ and one type $G$ for each functorial variable in
$\ol\phi$. Moreover, for each $\phi^k$ in $\ol\phi$ the number of
functorial variables in $\ol\beta$ in the judgments for its
corresponding type $F$ and $G$ is $k$. In the rules for $\tin_H$ and
$\fold^F_H$, the functorial variables in $\ol{\beta}$ are fresh with
respect to $H$, and there is one $\beta$ for every
$\alpha$. Substitution for terms is the obvious extension of the usual
capture-avoiding textual substitution, and the rules of
Figure~\ref{fig:terms} ensure that weakening is respected.  Below we
form $L$-terms, which contain no free functorial variables, in empty
functorial contexts whenever convenient. We similarly form $\map$-,
$\tin$-, and $\fold$-terms, which contain no free functorial variables
or free term variables, in empty functorial contexts and empty term
contexts whenever convenient.

The ``extra'' functorial variables $\ol{\gamma}$ in the rules for
$\map^{\ol{F},\ol{G}}_H$ (i.e., those variables not affected by the
substitution of $\phi$) deserve comment. They allow us to map
polymorphic functions over nested types. Suppose, for example, we want
to map the polymorphic function $\mathit{flatten} : \Nat^\beta
(\mathit{PTree}\,\beta)\,(\mathit{List}\,\beta)$ over a list.  Even in
the absence of extra variables the instance of $\map$ required to map
each non-functorial monomorphic instantiation of $\mathit{flatten}$
over a list of perfect trees is well-formed:
\[\begin{array}{l}
\hspace*{-0.1in}\AXC{$\Gamma;\alpha \vdash \mathit{List} \, \alpha$}
\AXC{$\Gamma;\emptyset \vdash \mathit{PTree}\,F$}
\AXC{$\Gamma;\emptyset \vdash \mathit{List}\,G$}
\TIC{$\Gamma;\emptyset \,|\, \emptyset \vdash \map^{\mathit{PTree}\,F,
    \,\mathit{List}\,G}_{\mathit{List}\,\alpha} :
  \Nat^\emptyset (\Nat^\emptyset\,(\mathit{PTree}\, F)\,(\mathit{List}\, G))\,
  (\Nat^\emptyset\,(\mathit{List}\, (\mathit{PTree}\,
  F))\,(\mathit{List}\, (\mathit{List}\, G)))$} \DisplayProof
\end{array}\]
But in the absence of $\ol \gamma$, the instance
\[\Gamma;\emptyset~|~\emptyset \vdash
\map^{\mathit{PTree}\,\beta,\mathit{List}\,\beta}_{\mathit{List}\,\alpha}
: \Nat^\emptyset ( \Nat^\beta(\mathit{PTree}\,
\beta)\,(\mathit{List}\, \beta))\, (\Nat^\beta\,(\mathit{List}\,
(\mathit{PTree}\, \beta))\,(\mathit{List}\, (\mathit{List}\,
\beta)))\] required to map the \emph{polymorphic} $\mathit{flatten}$
function over a list of perfect trees is not: indeed, the functorial
contexts for $F$ and $G$ in the rule for $\map^{F,G}_H$ would have to
be empty, but because the polymorphic $\mathit{flatten}$ function is
natural in $\beta$ it cannot possibly have a type of the form
$\Nat^\emptyset F\, G$ as would be required for it to be the function
input to $\map$. Untypeability of this instance of $\map$ is
unsatisfactory in a polymorphic calculus, where we naturally expect to
be able to manipulate entire polymorphic functions rather than just
their monomorphic instances, but the ``extra'' variables $\ol \gamma$
remedy the situation, ensuring that the instance of $\map$ needed to
map the polymorphic $\mathit{flatten}$ function is typeable as
follows:

\vspace*{0.1in}

\begin{adjustbox}{varwidth=6.4in, max width=\linewidth, center}
\[\begin{array}{l}
\AXC{$\Gamma;\alpha,\gamma \vdash \mathit{List} \, \alpha$}
\AXC{$\Gamma;\gamma \vdash \mathit{PTree}\,\gamma \hspace*{0.3in}
  \Gamma;\gamma \vdash \mathit{List}\,\gamma$}
\BIC{$\Gamma;\emptyset~|~\emptyset \vdash
  \map^{\mathit{PTree}\,\gamma,\mathit{List}\,\gamma}_{\mathit{List}\,\alpha}  : \Nat^\emptyset
  (\Nat^\gamma(\mathit{PTree}\,
  \gamma)\,(\mathit{List}\, \gamma))\,
 (\Nat^\gamma\,(\mathit{List}\,
  (\mathit{PTree}\, \gamma))\,(\mathit{List}\, (\mathit{List}\,
  \gamma)))$}
\DisplayProof
  \end{array}\]
\end{adjustbox}

\vspace*{0.1in}

Our calculus is expressive enough to define a function
$\mathit{reversePTree :
  \Nat^\alpha\,(\mathit{PTree}\,\alpha)\,(\mathit{PTree}\,\alpha)}$
that reverses the order of the leaves in a perfect tree. This function
maps, e.g., the perfect tree
\[\mathit{pnode}\,(\mathit{pnode}\,(\mathit{pleaf}\,
((1, 2), (3, 4))))\]
to the perfect tree
\[\mathit{pnode}\,(\mathit{pnode}\,(\mathit{pleaf}\,
((4, 3), (2, 1))))\]
i.e., maps $((1,2),(3,4))$
to $((4,3),(2,1))$.  It can be defined as
\[\vdash (\fold_{\beta + \phi(\beta \times \beta)}^{\mathit{PTree}\,
  \alpha})_\emptyset \, s : \Nat^{\alpha}
(\mathit{PTree}\,\alpha)\,(\mathit{PTree}\,\alpha)\] where
\[\begin{array}{lll}
\fold_{\beta + \phi(\beta \times \beta)}^{\mathit{PTree}\,\alpha} & :
& \Nat^{\emptyset} (\Nat^{\alpha} (\alpha + \mathit{PTree}\,(\alpha
\times \alpha)) \; (\mathit{PTree}\,\alpha))\; (\Nat^{\alpha}
(\mathit{PTree}\,\alpha) \; (\mathit{PTree}\,\alpha))\\
\tin_{\beta + \phi(\beta \times \beta)} & : & \Nat^{\alpha} (\alpha +
\mathit{PTree}\,(\alpha \times \alpha)) \; (\mathit{PTree}\, \alpha)\\
\map_{\mathit{PTree}\,\alpha}^{\alpha \times \alpha, \alpha \times
  \alpha} & : & \Nat^\emptyset
(\Nat^{\alpha} (\alpha \times \alpha)\, (\alpha \times
\alpha))\,
(\Nat^{\alpha} (\mathit{PTree}\,(\alpha \times
\alpha))\, (\mathit{PTree}\,(\alpha \times \alpha)))
\end{array}\]
and
$\mathit{pleaf}$, $\mathit{pnode}$, $\mathit{swap}$, and $s$ are the terms
\[\begin{array}{l}
\vdash \tin_{\beta + \phi (\beta \times \beta)} \circ (
L_{\alpha} x.\, \inl\, x) \; : \; \Nat^{\alpha}\, \alpha\,
(\mathit{PTree}\,\alpha)\\
\vdash \tin_{\beta + \phi (\beta \times \beta)} \circ (
L_{\alpha} x.\, \inr\, x) \; : \; \Nat^{\alpha}\, \mathit{PTree}
(\alpha \times \alpha)\,(\mathit{PTree}\,\alpha)\\
\vdash L_{\alpha} p.\, (\pi_2 p, \pi_1 p) :
\Nat^{\alpha} (\alpha \times \alpha)\, (\alpha \times \alpha)\\
\vdash L_{\alpha} t. \,\cse{t}{b \mapsto
 \mathit{pleaf}\, b}{t' \mapsto \mathit{pnode}\,
    (((\map_{\mathit{PTree}\,\alpha}^{\alpha \times \alpha, \alpha \times
      \alpha})_\emptyset\, \mathit{swap})_\alpha\, t')}\\
\hspace*{0.5in}:  \Nat^{\alpha} (\alpha + \mathit{PTree}\,(\alpha \times
\alpha))\; \mathit{PTree}\,\alpha
\end{array}\]
respectively. Here, if $\Gamma; \emptyset \,|\, \Delta \vdash t:
\Nat^{\overline{\alpha}} F\,G$ and $\Gamma; \emptyset \,|\, \Delta
\vdash s: \Nat^{\overline{\alpha}} G\,H$ are terms then the \emph{composition} $s \circ t$ of $t$ and $s$ is defined by $s \circ t =
\Gamma; \emptyset\,|\, \Delta \vdash L_{\overline{\alpha}}
x. s_{\overline{\alpha}}(t_{\overline{\alpha}}x):
\Nat^{\overline{\alpha}} F\,H$. Our calculus can similarly define a
$\mathtt{rustle}$ function that changes the placement of the data in a
bush. This function maps, e.g., the bush
\[\begin{array}{l}
\mathit{bcons}\,0 \,(\mathit{bcons}\, (\mathit{bcons}\, 1\,
(\mathit{bcons}\, (\mathit{bcons}\, 2\,\mathit{bnil}) \,
\mathit{bnil}))\\ \hspace*{0.93in}(\mathit{bcons}\, (\mathit{bcons}\,
(\mathit{bcons}\, 3 \, (\mathit{bcons}\, (\mathit{bcons}\, 4\,
\mathit{bnil})\,  \mathit{bnil}))\, \mathit{bnil})\, \mathit{bnil}))
\end{array}\]
to the bush
\[\begin{array}{c}
\hspace*{-1.64in}\mathit{bcons}\,4\, (\mathit{bcons}\, (\mathit{bcons}\, 0
\,(\mathit{bcons}\, (\mathit{bcons}\, 3 \, \mathit{bnil}) \,
\mathit{bnil}))\\
\hspace*{0.93in}(\mathit{bcons}\, (\mathit{bcons}\, (\mathit{bcons}\, 2
\,(\mathit{bcons}\, (\mathit{bcons}\, 1 \, \mathit{bnil}) \,
\mathit{bnil})) \, \mathit{bnil})\, \mathit{bnil}))
\end{array}\]
It can be defined as
\[\vdash (\fold^{\mathit{Bush}\,\alpha}_{\onet + \beta \times \phi (\phi \beta)})_\emptyset\, balg
: \Nat^\alpha (\mathit{Bush}\,\alpha)\,(\mathit{Bush}\,\alpha)\]
where
\[\begin{array}{lll}
\vdash \fold_{\onet + \beta \times \phi
  (\phi\beta)}^{\mathit{Bush}\,\alpha} : \Nat^{\emptyset}\, (\Nat^{\alpha}\,
(\onet + \alpha \times \mathit{Bush}\, (\mathit{Bush}\, \alpha))) \;
(\mathit{Bush}\,\alpha))\; (\Nat^{\alpha} \,(\mathit{Bush}\,\alpha) \;
(\mathit{Bush}\,\alpha))\\
\end{array}\]
and $\mathit{bnil}$, $\mathit{bcons}$, $\tin^{-1}_{\onet + \beta
  \times \phi (\phi\beta)}$, $\mathit{balg}$, and $\mathit{consalg}$
are the terms
\[\begin{array}{l}
\vdash \tin_{\onet + \beta \times \phi (\phi\beta)} \circ (
L_{\alpha}\, x.\, \inl\, x) \; : \; \Nat^{\alpha}\, \onet\;
(\mathit{Bush}\,\alpha)\\
\vdash \tin_{\onet + \beta \times \phi (\phi\beta)} \circ (
L_{\alpha}\, x.\, \inr\, x ) \; : \; \Nat^{\alpha}\, (\alpha \times
\mathit{Bush}\,(\mathit{Bush}\,\alpha))\; (\mathit{Bush}\,\alpha)\\[0.05in]
\vdash (\fold_{\onet + \beta \times \phi (\phi\beta)}^{(\onet + \beta
  \times \phi (\phi\beta))[\phi := \mathit{Bush}\,\alpha]})_\emptyset\,
((\map_{\onet + \beta \times \phi (\phi\beta)}^{(\onet + \beta \times
  \phi (\phi\beta))[\phi := \mathit{Bush}\,\alpha][\beta := \alpha],
  \mathit{Bush}\,\alpha})_\emptyset\, \tin_{\onet + \beta \times \phi
  (\phi\beta)})\\[0.05in]
\hspace*{0.2in} : \Nat^{\alpha}\, (\mathit{Bush}\,\alpha)\; (\onet +
\alpha \times \mathit{Bush}\,(\mathit{Bush}\,\alpha))\\
  \vdash L_{\alpha} \, s.
  \cse{s}{* \mapsto \mathit{bnil}_\alpha *}{(a, bba) \mapsto consalg_\alpha (a, bba)}
  : \Nat^\alpha \, (\onet + \alpha \times \mathit{Bush} (\mathit{Bush} \,\alpha)) \, (\mathit{Bush}\,\alpha) \\
  \vdash L_{\alpha} \, (a, bba).
  \mathsf{case}\, (\ininv{\mathit{Bush}\,\alpha}{bba})\, \mathsf{of}\,\{
  \\
  \hspace{8em}\, * \mapsto \mathit{bcons}_{\alpha} (a,\mathit{bnil}_{\mathit{Bush}\,\alpha} *);  \\
  \hspace{8em} (ba, bbba) \mapsto \mathsf{case}\, (\ininv{\alpha}{ba})
  \, \mathsf{of}\,\{  \\
  \hspace{16em}* \mapsto \mathit{bcons}_\alpha (a, \mathit{bcons}_{\mathit{Bush}\,\alpha}
        (\mathit{bnil}_\alpha *, bbba)) ;  \\
  \hspace{16em}(a', bba') \mapsto \mathit{bcons}_\alpha (a', \mathit{bcons}_{\mathit{Bush}\,\alpha}
  (\mathit{bcons}_\alpha (a, bba'), bbba)) \} \! \}\\
\hspace*{0.2in} : \Nat^\alpha \, (\alpha \times \mathit{Bush} (\mathit{Bush} \,\alpha)) \, (\mathit{Bush}\,\alpha) \\
\end{array}\]
respectively.

Unfortunately, our calculus cannot express types of recursive
functions --- such as a concatenation function for perfect trees or a
zip function for bushes --- that take as inputs a nested type and an
argument of another type, both of which are parameterized over the
same variable. The fundamental issue is that recursion is expressible
only via $\fold$, which produces natural transformations in some
variables $\ol\alpha$ from $\mu$-types to other functors $F$. The
restrictions on $\Nat$-types entail that $F$ cannot itself be a
$\Nat$-type containing $\ol{\alpha}$, so, e.g., $\Nat^\alpha
(\mathit{PTree}\,\alpha) \,(\Nat^\emptyset (\mathit{PTree}\,\alpha)\,
(\mathit{PTree}\,(\alpha \times \alpha)))$ is not well-formed.
Uncurrying gives $\Nat^\alpha (\mathit{PTree}\,\alpha \times
\mathit{PTree}\,\alpha)\, (\mathit{PTree}\,(\alpha \times \alpha))$,
which is well-formed, but $\mathsf{fold}$ cannot produce a term of this
type because $\mathit{PTree}\,\alpha \times \mathit{PTree}\,\alpha$ is
not a $\mu$-type. Our calculus can, however, express types of
recursive functions that take multiple nested types as arguments,
provided they are parameterized over disjoint sets of type variables
and the return type of the function is parameterized over only the
variables occurring in the type of its final argument. Even for ADTs
there is a difference between which folds over them we can type when
they are viewed as ADTs (i.e., as fixpoints of first-order functors)
versus as proper nested types (i.e., as fixpoints of higher-order
functors). This is because, in the return type of $\mathsf{fold}$, the
arguments of the $\mu$-type must be variables bound by $\Nat$.  For
ADTs, the $\mu$-type takes no arguments, making it possible to write
recursive functions, such as a concatenation function for lists of
type $\alpha ; \emptyset \vdash \Nat^\emptyset \ (\mu \beta. \onet +
\alpha \times \beta)\, (\Nat^\emptyset (\mu \beta. \onet + \alpha
\times \beta) \, (\mu \beta. \onet + \alpha \times \beta))$.  This is
not possible for nested types --- even when they are semantically
equivalent to ADTs.

Interestingly, even some recursive functions of a single proper nested
type --- e.g., a reverse function for bushes that is a true involution
--- cannot be expressed as folds because the algebra arguments needed
to define them are again recursive functions with types of the same
problematic form as the type of, e.g., a zip function for perfect
trees.  Expressivity of folds for nested types has long been a vexing
issue, and this is naturally inherited by our calculus. Adding more
expressive recursion combinators could help, but since this is
orthogonal to the issue of parametricity in the presence of primitive
nested types we do not consider it further here.

\section{Interpreting Types}\label{sec:type-interp}

We denote the category of sets and functions by $\set$. The category
$\rel$ has as its objects triples $(A,B,R)$ where $R$ is a relation
between the objects $A$ and $B$ in $\set$, i.e., a subset of $A \times
B$, and has as its morphisms from $(A,B,R)$ to $(A',B',R')$ pairs $(f
: A \to A',g : B \to B')$ of morphisms in $\set$ such that $(f a,g\,b)
\in R'$ whenever $(a,b) \in R$. We write $R : \rel(A,B)$ in place of
$(A,B,R)$ when convenient.  If $R : \rel(A,B)$ we write $\pi_1 R$ and
$\pi_2 R$ for the \emph{domain} $A$ of $R$ and the \emph{codomain} $B$
of $R$, respectively.  If $A : \set$, then we write $\Eq_A =
(A,A,\{(x,x)~|~ x \in A\})$ for the \emph{equality relation} on $A$.

The key idea underlying Reynolds' parametricity is to give each type
$F(\alpha)$ with one free variable $\alpha$ both an \emph{object
  interpretation} $F_0$ taking sets to sets and a \emph{relational
  interpretation} $F_1$ taking relations $R : \rel(A,B)$ to relations
$F_1 (R) : \rel(F_0 (A), F_0 (B))$, and to interpret each term
$t(\alpha,x) : F(\alpha)$ with one free term variable $x : G(\alpha)$
as a map $t_0$ associating to each set $A$ a function $t_0(A) : G_0(A)
\to F_0(A)$. These interpretations are to be given inductively on the
structures of $F$ and $t$ in such a way that they imply two
fundamental theorems. The first is an \emph{Identity Extension Lemma},
which states that $F_1(\Eq_A) = \Eq_{F_0(A)}$, and is the essential
property that makes a model relationally parametric rather than just
induced by a logical relation. The second is an \emph{Abstraction
  Theorem}, which states that, for any $R :\rel(A, B)$,
$(t_0(A),t_0(B))$ is a morphism in $\rel$ from
$(G_0(A),G_0(B),G_1(R))$ to $(F_0(A),F_0(B),F_1(R))$. The Identity
Extension Lemma is similar to the Abstraction Theorem except that it
holds for \emph{all} elements of a type's interpretation, not just
those that are interpretations of terms. Similar theorems are
expected to hold for types and terms with any number of free
variables.

The key to proving the Identity Extension Lemma in our setting
(Theorem~\ref{thm:iel}) is a familiar ``cutting down'' of the
interpretations of universally quantified types to include only the
``parametric'' elements; the relevant types in our calculus are the
$\Nat$-types.  This cutting down requires, as usual, that the set
interpretations of types (Section~\ref{sec:set-interp}) are defined
simultaneously with their relational interpretations
(Section~\ref{sec:rel-interp}). While the set interpretations are
relatively straightforward, their relation interpretations are less
so, mainly because of the cocontinuity conditions required to ensure
that they are well-defined. We develop these conditions in
Sections~\ref{sec:set-interp} and~\ref{sec:rel-interp}. This separates
our set and relational interpretations in space, but otherwise has no
impact on the fact that they are given by mutual induction.

\subsection{Interpreting Types as Sets}\label{sec:set-interp}

We interpret types in our calculus as $\omega$-cocontinuous functors
on locally finitely presentable categories~\cite{ar94}. Both $\set$
and $\rel$ are locally finitely presentable categories. Since functor
categories of locally finitely presentable categories are again
locally finitely presentable, the fixpoints interpreting $\mu$-types
in $\set$ and $\rel$ must all exist, and thus the set and relational
interpretations of all of the types in Definition~\ref{def:wftypes},
are well-defined~\cite{jp19}. To bootstrap this process, we interpret
type variables as $\omega$-cocontinuous functors in
Definitions~\ref{def:set-env} and~\ref{def:reln-env}. If $\C$ and $\D$
are locally finitely presentable categories, we write $[\C,\D]$ for
the category of $\omega$-cocontinuous functors from $\C$ to $\D$.

\begin{defi}\label{def:set-env}
A \emph{set environment} maps each type variable in $\tvars^k \cup
\fvars^k$ to an element of $[\set^k,\set]$.  A morphism $f : \rho \to
\rho'$ for set environments $\rho$ and $\rho'$ with $\rho|_\tvars =
\rho'|_\tvars$ maps each type constructor variable $\psi^k \in \tvars$
to the identity natural transformation on $\rho \psi^k = \rho'\psi^k$
and each functorial variable $\phi^k \in \fvars$ to a natural
transformation from the $k$-ary functor $\rho \phi^k$ on $\set$ to the
$k$-ary functor $\rho' \phi^k$ on $\set$.  Composition of morphisms on
set environments is given componentwise, with the identity morphism
mapping each set environment to itself. This gives a category of set
environments and morphisms between them, denoted $\setenv$.
\end{defi}
When convenient we identify a functor in $[\set^0, \set]$ with its
value on $\ast$ and consider a set environment to map a type variable
of arity $0$ to a set.  If $\ol{\alpha} = \{\alpha_1,\dots,\alpha_k\}$
and $\ol{A} = \{A_1,\dots,A_k\}$, then we write $\rho[\ol{\alpha := A}]$
for the set environment $\rho'$ such that $\rho' \alpha_i = A_i$ for
$i = 1,\dots,k$ and $\rho' \alpha = \rho \alpha$ if $\alpha \not \in
\{\alpha_1,\dots,\alpha_k\}$. We note that $\lambda \ol{A}.\,
\rho[\ol{\alpha  := A}]$ is a functor in $\ol{A}$.

We can now define our set interpretation. Its action on objects of
$\setenv$ is given in Definition~\ref{def:set-sem}, and its action on
morphisms of $\setenv$ is given in Definition~\ref{def:set-sem-funcs}.
If $\rho$ is a set environment we write $\Eq_\rho$ for the equality
relation environment such that $\Eq_\rho \phi = \Eq_{\rho \phi}$ for
every type variable $\phi$; see Definitions~\ref{def:rel-transf}
and~\ref{def:reln-env} for the definitions of a relation transformer
and a relation environment, and Equation~\ref{def:eq-transf} for the
definition of the relation transformer $\Eq_F$ on a functor $F$.
Equality relation environments appear in the third clause of
Definition~\ref{def:set-sem}.  The relational interpretation also
appearing in the third clause of Definition~\ref{def:set-sem} is
given in Definition~\ref{def:rel-sem}.

\begin{defi}\label{def:set-sem}
The \emph{set interpretation} $\setsem{\cdot} : \F \to [\setenv, \set]$
is defined by:
\begin{align*}
  \setsem{\Gamma;\Phi \vdash \zerot}\rho &= 0\\
  \setsem{\Gamma;\Phi \vdash \onet}\rho &= 1\\
  \setsem{\Gamma; \emptyset
    \vdash \Nat^{\ol{\alpha}}
    \,F\,G}\rho &= \{\eta : \lambda \ol{A}. \,\setsem{\Gamma;
    \ol{\alpha} \vdash
    F}\rho[\ol{\alpha := A}]
      \Rightarrow \lambda \ol{A}.\,\setsem{\Gamma;
        \ol{\alpha} \vdash G}\rho[\ol{\alpha := A}] \\
      &\hspace{0.3in}|~\forall \overline{A}, \overline{B} :
      \set. \forall \overline{R : \rel(A, B)}.\\
      &\hspace{0.4in}(\eta_{\overline{A}}, \eta_{\overline{B}})
      : \relsem{\Gamma; \ol{\alpha} \vdash F}\Eq_{\rho}[\ol{\alpha := R}]
      \rightarrow \relsem{\Gamma; \ol{\alpha} \vdash
        G}\Eq_{\rho}[\ol{\alpha := R}] \} \\
  \setsem{\Gamma;\Phi \vdash \phi\ol{F}}\rho &=
  (\rho\phi)\,\ol{\setsem{\Gamma;\Phi \vdash
    F}\rho}\\
  \setsem{\Gamma;\Phi \vdash F+G}\rho &=
  \setsem{\Gamma;\Phi \vdash F}\rho +
  \setsem{\Gamma;\Phi \vdash G}\rho\\
  \setsem{\Gamma;\Phi \vdash F\times G}\rho &=
  \setsem{\Gamma;\Phi \vdash F}\rho \times
  \setsem{\Gamma;\Phi \vdash G}\rho\\
  \setsem{\Gamma;\Phi \vdash (\mu \phi.\lambda
    \ol{\alpha}. H)\ol{G}}\rho &= (\mu
    T^\set_{H,\rho})\ol{\setsem{\Gamma;\Phi \vdash G}\rho}\\
    \text{where } T^\set_{H,\rho}\,F & = \lambda
  \ol{A}. \setsem{\Gamma;\phi, \ol{\alpha} \vdash
    H}\rho[\phi :=  F][\ol{\alpha := A}]\\
  \text{and } T^\set_{H,\rho}\,\eta &= \lambda
  \ol{A}. \setsem{\Gamma;\phi, \ol{\alpha} \vdash
    H}\id_\rho[\phi := \eta][\ol{\alpha := \id_{A}}]
\end{align*}
\end{defi}
If $\rho \in \setenv$ and $\vdash F$ then we write $\setsem{\vdash F}$
instead of $\setsem{\vdash F}\rho$ since the environment is
immaterial. The third clause of Definition~\ref{def:set-sem} does
indeed define a set: local finite presentability of $\set$ and
$\omega$-cocontinuity of $\setsem{\Gamma;\ol{\alpha} \vdash F}$
ensure that $\{\eta : \lambda \ol{A}.\,\setsem{\Gamma;\ol{\alpha}
  \vdash F}\rho[\ol{\alpha := A}]$ $\Rightarrow \lambda
\ol{A}.\,\setsem{\Gamma;\ol{\alpha} \vdash G}\rho[\ol{\alpha := A}]\}$
(containing $\setsem{\Gamma;\emptyset \vdash
  \Nat^{\ol{\alpha}}\,F\,G}\rho$) can be embedded in the set
\[ \prod_{\substack{\ol{S} =
    (S_1,\dots,S_{|\ol{\alpha}|})\\ S_1,\dots,S_{|\ol{\alpha}|} \mbox{ are
      finite cardinals}}} \hspace*{-0.4in}(\setsem{\Gamma;\ol{\alpha}
  \vdash G}\rho[\ol{\alpha := S}])^{(\setsem{\Gamma;\ol{\alpha} \vdash
    F}\rho[\ol{\alpha := S}])}\] Also, $\setsem{\Gamma; \emptyset
  \vdash \Nat^{\ol\alpha} F\,G}$ is $\omega$-cocontinuous since it is
constant (in particular, on $\omega$-directed sets). This is because
Definition~\ref{def:set-env} ensures that restrictions to
$\mathbb{T}^k$ of morphisms between set environments are identities.
Interpretations of $\Nat$-types ensure that the interpretations
$\setsem{\Gamma \vdash F \to G}$ and $\setsem{\Gamma \vdash \forall
  \ol\alpha. F}$ of the System F arrow types and $\forall$-types
representable in our calculus are as expected in any parametric model.

To make sense of the last clause in Definition~\ref{def:set-sem}, we
need to know that, for each $\rho \in \setenv$, $T^\set_{H,\rho}$ is
an $\omega$-cocontinuous endofunctor on $[\set^k, \set]$, and thus
admits a fixpoint.  Since $T_{H,\rho}^\set$ is defined in terms of
$\setsem{\Gamma;\phi, \ol{\alpha} \vdash H}$, this means that
interpretations of types must be such functors, which in turn means
that the actions of set interpretations of types on objects and on
morphisms in $\setenv$ are intertwined. Fortunately, we know
from~\cite{jp19} that, for every $\Gamma; \ol{\alpha} \vdash G$,
$\setsem{\Gamma; \ol{\alpha} \vdash G}$ is actually in $[\set^k,\set]$
where $k = |\ol \alpha|$. This means that for each $\setsem{\Gamma;
  \phi^k, \ol{\alpha} \vdash H}$, the corresponding operator
$T^\set_{H}$ can be extended to a \emph{functor} from $\setenv$ to
$[[\set^k,\set],[\set^k,\set]]$. The action of $T^\set_H$ on an object
$\rho \in \setenv$ is given by the higher-order functor
$T_{H,\rho}^\set$, whose actions on objects (functors in $[\set^k,
  \set]$) and morphisms (natural transformations) between them are
given in Definition~\ref{def:set-sem}. Its action on a morphism $f :
\rho \to \rho'$ is the higher-order natural transformation
$T^\set_{H,f} : T^\set_{H,\rho} \to T^\set_{H,\rho'}$ whose action on
$F : [\set^k,\set]$ is the natural transformation $T^\set_{H,f}\, F :
T^\set_{H,\rho}\,F \to T^\set_{H,\rho'}\,F$ whose component at
$\ol{A}$ is $(T^\set_{H,f}\, F)_{\ol{A}} = \setsem{\Gamma;
  \phi,\ol{\alpha} \vdash H}f[\phi := \id_F][\ol{\alpha := \id_A}]$.
Note that this action of $T_{H,\rho}^\set$ on morphisms appears in the
final clause of Definition~\ref{def:set-sem-funcs} below. It will
indeed be well-defined there because the functorial action of
$\setsem{\Gamma; \phi,\ol{\alpha} \vdash H}$ will already be available
from the induction hypothesis on $H$.

Using $T^\set_H$, we can define the functorial action of set
interpretation.
\begin{defi}\label{def:set-sem-funcs}
Let $f: \rho \to \rho'$ be a morphism between set environments $\rho$
and $\rho'$ (so that $\rho|_\tvars = \rho'|_\tvars$). The action
$\setsem{\Gamma;\Phi \vdash F}f$ of\, $\setsem{\Gamma;\Phi \vdash F}$
on $f$ is given by:
\begin{itemize}
\item If \,$\Gamma;\Phi \vdash \zerot$ then $\setsem{\Gamma;\Phi \vdash
  \zerot}f = \id_0$
\item If \,$\Gamma;\Phi \vdash \onet$ then $\setsem{\Gamma;\Phi \vdash
  \onet}f = \id_1$
\item If \,$\Gamma; \emptyset
  \vdash \Nat^{\ol{\alpha}}\,F\,G$ then
  $\setsem{\Gamma; \emptyset
    \vdash \Nat^{\ol{\alpha}}\,F\,G} f =
  \id_{\setsem{\Gamma; \emptyset
      \vdash \Nat^{\ol{\alpha}}\,F\,G}\rho}$
\item If \,$\Gamma;\Phi \vdash \phi \ol{F}$ then
\[\setsem{\Gamma;\Phi \vdash \phi \ol{F}} f : \setsem{\Gamma;\Phi
  \vdash \phi \ol{F}}\rho \to \setsem{\Gamma;\Phi \vdash
  \phi\ol{F}}\rho' = (\rho\phi) \ol{\setsem{\Gamma;\Phi \vdash
    F}\rho} \to (\rho'\phi) \ol{\setsem{\Gamma;\Phi \vdash
    F}\rho'}\] is defined by \[\setsem{\Gamma;\Phi \vdash \phi
  \ol{F}} f = (f\phi)_{\ol{\setsem{\Gamma;\Phi \vdash
      F}\rho'}}\, \circ\, (\rho\phi) {\ol{\setsem{\Gamma;\Phi
      \vdash F}f}} = (\rho'\phi) {\ol{\setsem{\Gamma;\Phi \vdash
      F}f}}\, \circ\, (f \phi)_{\ol{\setsem{\Gamma;\Phi \vdash
      F}\rho}}\]  The latter equality holds because $\rho\phi$ and
  $\rho'\phi$ are functors and $f\phi : \rho\phi \to \rho'\phi$ is a
  natural transformation, so the following naturality square commutes:
{\footnotesize\begin{equation}\label{eq:cd2}
\begin{CD}
  (\rho\phi) \ol{\setsem{\Gamma;\Phi \vdash F}\rho} @> (f\phi)_{
    \ol{\setsem{\Gamma;\Phi \vdash F}\rho}} >> (\rho'\phi)
  \ol{\setsem{\Gamma;\Phi \vdash F}\rho} \\ @V(\rho\phi)
  \ol{\setsem{\Gamma;\Phi \vdash F}f}VV @V (\rho'\phi)
  \ol{\setsem{\Gamma;\Phi \vdash F}f} VV \\ (\rho\phi)
  \ol{\setsem{\Gamma;\Phi \vdash F}\rho'} @>(f\phi)_{
    \ol{\setsem{\Gamma;\Phi \vdash F}\rho'}}>> (\rho'\phi)
  \ol{\setsem{\Gamma;\Phi \vdash F}\rho'}
\end{CD}
\end{equation}}
\item If\, $\Gamma;\Phi \vdash F + G$ then $\setsem{\Gamma;\Phi
  \vdash F + G}f$ is defined by \[\setsem{\Gamma;\Phi \vdash
  F + G}f(\inl\,x) = \inl\,(\setsem{\Gamma;\Phi \vdash
  F}f x)\] and $\setsem{\Gamma;\Phi \vdash F +
  G}f(\inr\,y) = \inr\,(\setsem{\Gamma;\Phi \vdash G}f y)$
\item If \,$\Gamma;\Phi\vdash F \times G$ then
  $\setsem{\Gamma;\Phi \vdash F \times G}f =
  \setsem{\Gamma;\Phi \vdash F}f \times \setsem{\Gamma;\Phi \vdash
    G}f$
\item If \,$\Gamma;\Phi \vdash (\mu \phi.\lambda
  \ol{\alpha}. H)\ol{G}$ then
  \[\begin{array}{lll}
  \setsem{\Gamma;\Phi \vdash (\mu  \phi.\lambda
    \ol{\alpha}. H)\ol{G}} f &: &\setsem{\Gamma;\Phi
    \vdash (\mu \phi.\lambda \ol{\alpha}. H)\ol{G}} \rho \to
  \setsem{\Gamma;\Phi \vdash (\mu
    \phi.\lambda\ol{\alpha}. H)\ol{G}} \rho'\\
  &= &(\mu
  T^\set_{H,\rho})\ol{\setsem{\Gamma;\Phi \vdash G}\rho} \to (\mu
  T^\set_{H,\rho'})\ol{\setsem{\Gamma;\Phi \vdash G}\rho'}
  \end{array}\]
  is
  defined by
  \[\begin{array}{ll} & (\mu T^\set_{H,f})_{\ol{\setsem{\Gamma;\Phi \vdash
      G}\rho'}} \circ (\mu T^\set_{H,\rho})\ol{\setsem{\Gamma;\Phi
      \vdash G}f}\\ = & (\mu T^\set_{H,\rho'})\ol{\setsem{\Gamma;\Phi
      \vdash G}f} \circ (\mu T^\set_{H,f})_{\ol{\setsem{\Gamma;\Phi
      \vdash G}\rho}}\end{array}\]  The latter equality holds because $\mu
  T^\set_{H,\rho}$ and $\mu T^\set_{H,\rho'}$ are functors and $\mu
  T_{H,f}^\set : \mu T_{H,\rho}^\set \to \mu T_{H,\rho'}^\set$ is a
  natural transformation, so the following naturality square commutes:
{\footnotesize\begin{equation}\label{eq:cd3}
\begin{CD}
 (\mu T^\set_{H,\rho}) \ol{\setsem{\Gamma;\Phi \vdash G}\rho} @> (\mu
  T^\set_{H,f})_{\ol{\setsem{\Gamma;\Phi \vdash G}\rho}} >> (\mu
  T^\set_{H,\rho'}) \ol{\setsem{\Gamma;\Phi \vdash G}\rho} \\
 @V(\mu T^\set_{H,\rho}) \ol{\setsem{\Gamma;\Phi \vdash G}f}VV @V  (\mu
 T^\set_{H,\rho'}) \ol{\setsem{\Gamma;\Phi \vdash G}f} VV \\
(\mu T^\set_{H,\rho}) \ol{\setsem{\Gamma;\Phi \vdash G}\rho'} @>(\mu
 T^\set_{H,f})_{\ol{\setsem{\Gamma;\Phi \vdash G}\rho'}}>> (\mu
 T^\set_{H,\rho'}) \ol{\setsem{\Gamma;\Phi \vdash G}\rho'}
\end{CD}
\end{equation}}
\end{itemize}
\end{defi}

\noindent
Definitions~\ref{def:set-sem} and~\ref{def:set-sem-funcs} respect
weakening, i.e., ensure that a type and its weakenings have the same
set interpretations.

\subsection{Interpreting Types as Relations}\label{sec:rel-interp}

\begin{defi}\label{def:rel-transf}
A \emph{$k$-ary relation transformer} $F$ is a triple $(F^1, F^2,F^*)$,
where
\begin{itemize}
\item $F^1,F^2 : [\set^k,\set]$ and $F^* : [\rel^k, \rel]$ are
  functors
\item If $R_1:\rel(A_1,B_1),\dots,R_k:\rel(A_k,B_k)$, then $F^* \ol{R} :
  \rel(F^1 \ol{A}, F^2 \ol{B})$
\item If $(\alpha_1, \beta_1) \in \Homrel(R_1,S_1),\dots, (\alpha_k,
  \beta_k) \in \Homrel(R_k,S_k)$ then $F^* \ol{(\alpha, \beta)} = (F^1
  \ol{\alpha}, F^2 \ol{\beta})$
\end{itemize}
We define $F\ol{R}$ to be $F^*\overline{R}$ and
$F\overline{(\alpha,\beta)}$ to be $F^*\overline{(\alpha,\beta)}$.
\end{defi}
The last clause of Definition~\ref{def:rel-transf} expands to: if
$\ol{(a,b) \in R}$ implies $\ol{(\alpha\,a,\beta\,b) \in S}$ then
$(c,d) \in F^*\ol{R}$ implies $(F^1 \ol{\alpha}\,c,F^2 \ol{\beta}\,d)
\in F^*\ol{S}$. When convenient we identify a $0$-ary relation
transformer $(A,B,R)$ with $R : \rel(A,B)$, and write $\pi_1 F$ for
$F^1$ and $\pi_2 F$ for $F^2$. Below we extend these conventions to
relation environments in the obvious ways.

\begin{defi}
The category $RT_k$ of $k$-ary relation transformers is given by the
following data:
\begin{itemize}
\item An object of $RT_k$ is a relation transformer
\item A morphism $\delta : (F^1,F^2,F^*) \to (G^1,G^2,G^*)$ in $RT_k$
  is a pair of natural transformations $(\delta^1, \delta^2)$, where
  $\delta^1 : F^1 \to G^1$ and $\delta^2 : F^2 \to G^2$, such that,
  for all $\ol{R : \rel(A, B)}$, if $(x, y) \in F^*\ol{R}$ then
  $(\delta^1_{\ol{A}}x, \delta^2_{\ol{B}}y) \in G^*\ol{R}$
\item Identity morphisms and composition are inherited from the
  category of functors on $\set$
\end{itemize}
\end{defi}

\begin{defi}\label{def:RT-functor}
A \emph{higher-order relation transformer} $H$ on $RT_k$ is a triple $H
= (H^1,H^2,H^*)$, where
\begin{itemize}
\item $H^1$ and $H^2$ are functors from $[\set^k,\set]$ to $[\set^k,\set]$
\item $H^*$ is a functor from $RT_k$ to $[\rel^k,\rel]$
\item $H^*(F^1,F^2,F^*) \ol{R} : \rel(H^1F^1 \ol{A}, H^2F^2 \ol{B})$
  whenever $R_1:\rel(A_1,B_1),\dots,R_k:\rel(A_k,B_k)$
\item $H^*(F^1,F^2,F^*)\, \ol{(\alpha, \beta)} = (H^1F^1\ol{\alpha},
  H^2F^2 \ol{\beta})$ whenever $(\alpha_1, \beta_1) \in
  \Homrel(R_1,S_1),\dots, (\alpha_k, \beta_k)$ $\in \Homrel(R_k,S_k)$
\item For all $\overline{R : \rel(A,B)}$,
  $\pi_1((H^*(\delta^1,\delta^2))_{\overline{R}}) = (H^1
  \delta^1)_{\overline{A}}$ and
  $\pi_2((H^*(\delta^1,\delta^2))_{\overline{R}}) = (H^2
  \delta^2)_{\overline{B}}$
\end{itemize}
\end{defi}
\noindent
Definition~\ref{def:RT-functor} entails that every higher-order
relation transformer $H$ is an endofunctor on $RT_k$, where
\begin{itemize}
\item The action of $H$ on objects is given by $H\,(F^1,F^2,F^*) =
  (H^1F^1,\,H^2F^2,\,H^*(F^1,F^2,F^*))$
\item The action of $H$ on morphisms is given by
  $H\,(\delta^1,\delta^2) = (H^1\delta^1,H^2\delta^2)$ for
  $(\delta^1,\delta^2) : (F^1,F^2,F^*)\to (G^1,G^2,G^*)$
\end{itemize}
Note that the last condition of Definition~\ref{def:RT-functor}
entails that if $(\delta^1,\delta^2) : (F^1,F^2,F^*)\to (G^1,G^2,G^*)$
in $RT_k$, if $R_1:\rel(A_1,B_1),\dots,R_k:\rel(A_k,B_k)$, and if $(x,
y) \in H^*(F^1,F^2,F^*)\ol{R}$, then
\[((H^1\delta^1)_{\ol{A}}x, (H^2\delta^2)_{\ol{B}}y) \in
H^*(G^1,G^2,G^*)\ol{R}\]

\begin{defi}\label{def:RT-nat-trans}
A \emph{morphism} $\sigma : H \to K$ between higher-order relation
transformers $H$ and $K$ on $RT_k$ is a pair $\sigma = (\sigma^1,
\sigma^2)$, where $\sigma^1 : H^1 \to K^1$ and $\sigma^2 : H^2 \to
K^2$ are natural transformations between endofunctors on
$[\set^k,\set]$ such that $((\sigma^1_{F^1})_{\overline{A}},
(\sigma^2_{F^2})_{\overline{B}})$ is a morphism from $H^*
F\overline{R}$ to $K^* F\overline{R}$ in $\rel$ for any $F \in RT_k$
and any $k$-tuple of relations $\overline{R : \rel(A, B)}$.
\end{defi}

Definition~\ref{def:RT-nat-trans} entails that a morphism $\sigma$
between higher-order relation transformers is a natural transformation
between endofunctors on $RT_k$ whose component at $F = (F^1,F^2,F^*)
\in RT_k$ is given by $\sigma_F = (\sigma^1_{F^1}, \sigma^2_{F^2})$.
Moreover, $\sigma^i_{F^i}$ is natural in $F^i : [\set^k,\set]$, and,
for every $F = (F^1,F^2,F^*) \in RT_k$, both
$(\sigma^1_{F^1})_{\overline{A}}$ and
$(\sigma^2_{F^2})_{\overline{A}}$ are natural in $\overline{A}$.

Critically, we can compute $\omega$-directed colimits in
$RT_k$. Indeed, if $\cal D$ is an $\omega$-directed set then $\colim{d
  \in {\cal D}}{(F^1_d, F^2_d,F^*_d)} = (\colim{d \in {\cal
    D}}{F^1_d}, \colim{d \in {\cal D}}{F^2_d}, \colim{d \in {\cal
    D}}{F^*_d})$.  We define a higher-order relation transformer $T =
(T^1,T^2,T^*)$ on $RT_k$ to be \emph{$\omega$-cocontinuous} if $T^1$
and $T^2$ are $\omega$-cocontinuous endofunctors on $[\set^k,\set]$
and $T^*$ is an $\omega$-cocontinuous functor from $RT_k$ to
$[\rel^k,\rel]$, i.e., is in $[RT_k,[\rel^k,\rel]]$.
Now, for any $k$, any $A : \set$, and any $R : \rel(A, B)$, let
$K^\set_A$ be the constantly $A$-valued functor from $\set^k$ to
$\set$ and $K^\rel_R$ be the constantly $R$-valued functor from
$\rel^k$ to $\rel$.  Also let $0$ denote either the initial object of
either $\set$ or $\rel$, as appropriate.  Observing that, for every
$k$, $K^\set_0$ is initial in $[\set^k,\set]$, and $K^\rel_0$ is
initial in $[\rel^k,\rel]$, we have that, for each $k$, $K_0 =
(K^\set_0,K^\set_0,K^\rel_0)$ is initial in $RT_k$. Thus, if $T =
(T^1,T^2,T^*) : RT_k \to RT_k$ is a higher-order relation transformer
on $RT_k$ then we can define the relation transformer $\mu T$ to be
$\colim{n \in \nat}{T^n K_0}$. It is not hard to see that $\mu T$ is
given explicitly as
\begin{equation}\label{eq:mu}
\mu T = (\mu T^1,\mu T^2, \colim{n \in \nat}{(T^{n}K_0)^*})
\end{equation}
Moreover, $\mu T$ really is a fixpoint for $T$ if $T$ is
$\omega$-cocontinuous:
\begin{lem}\label{lem:fp}
For any $T : [RT_k,RT_k]$, $\mu T \cong T(\mu T)$.
\end{lem}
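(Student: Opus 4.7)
The plan is to run the standard Adámek initial-algebra construction inside $RT_k$, using the three structural facts already established immediately before the lemma: that $K_0=(K^\set_0,K^\set_0,K^\rel_0)$ is initial in $RT_k$; that $\omega$-directed colimits in $RT_k$ are computed componentwise; and that, because $T\in[RT_k,RT_k]$, the components $T^1,T^2$ are $\omega$-cocontinuous on $[\set^k,\set]$ and $T^*$ is an $\omega$-cocontinuous functor $RT_k\to[\rel^k,\rel]$.

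First, initiality of $K_0$ gives a unique morphism $\iota:K_0\to TK_0$ in $RT_k$, and applying $T$ generates the $\omega$-chain $K_0\xrightarrow{\iota}TK_0\xrightarrow{T\iota}T^2K_0\to\cdots$. By the componentwise description of colimits, its colimit is exactly the triple $\mu T=(\mu T^1,\mu T^2,\colim{n\in\nat}{(T^nK_0)^*})$ of equation~(\ref{eq:mu}). Second, applying the functor $T$ and invoking its $\omega$-cocontinuity yields
\[
T(\mu T)\;=\;T\bigl(\colim{n\in\nat}{T^nK_0}\bigr)\;\cong\;\colim{n\in\nat}{T^{n+1}K_0},
\]
and reindexing the cofinal subchain $T^{n+1}K_0$ inside the original chain $T^nK_0$ (the two cocones refine one another via the connecting morphisms) identifies this colimit with $\mu T$. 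The isomorphism $\mu T\cong T(\mu T)$ thus obtained is exactly the one produced by Adámek's theorem in this setting.

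The only step requiring care is verifying that the reindexing/cofinality argument can be carried out coherently in all three components of $RT_k$ simultaneously, so that the resulting isomorphism is an actual morphism in $RT_k$ rather than three unrelated isomorphisms in $[\set^k,\set]$, $[\set^k,\set]$, and $[\rel^k,\rel]$. This amounts to checking that the componentwise colimit projections and the componentwise actions of $T$ commute with the coherence squares coming from Definition~\ref{def:RT-functor}, and that the condition relating the two natural components through the relational component is preserved by the colimit. Both are routine once one uses that colimits of relation transformers are computed pointwise and that $T^*$ is $\omega$-cocontinuous into $[\rel^k,\rel]$, so I would bundle them as a direct componentwise verification rather than a separate lemma.
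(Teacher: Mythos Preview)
Your proposal is correct and follows the same approach the paper has in mind: the paper's proof is just the remark immediately following the lemma that the isomorphism is given by the pair $(\mathit{in}_1,\mathit{in}_2):T(\mu T)\to\mu T$ and $(\mathit{in}_1^{-1},\mathit{in}_2^{-1}):\mu T\to T(\mu T)$, together with the observation that the former need not be a morphism in $RT_k$ without $\omega$-cocontinuity of $T$. Your Ad\'amek-style argument is precisely the standard way to manufacture and justify those morphisms, and your closing paragraph about coherence of the three components is exactly the content of the paper's caveat about when $(\mathit{in}_1,\mathit{in}_2)$ lives in $RT_k$.
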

\noindent
The isomorphism is given by the morphisms $(\mathit{in}_1,
\mathit{in}_2) : T(\mu T) \to \mu T$ and $(in_1^{-1}, in_2^{-1}) : \mu
T \to T(\mu T)$ in $RT_k$. The latter is always a morphism in $RT_k$,
but the former need not be if $T$ is not $\omega$-cocontinuous.

It is worth noting that the third component in Equation~(\ref{eq:mu})
is the colimit in $[\rel^k,\rel]$ of third components of relation
transformers, rather than a fixpoint of an endofunctor on
$[\rel^k,\rel]$. There is thus an asymmetry between the first two
components of $\mu T$ and its third component, which reflects the
important conceptual observation that the third component of a
higher-order relation transformer on $RT_k$ need not be a functor on
all of $[\rel^k,\rel]$. In particular, although we can define
$T_{H,\rho}\, F$ for a relation transformer $F$ in
Definition~\ref{def:rel-sem} below, it is not clear how we could
define it for an arbitrary $F : [\rel^k,\rel]$.

\begin{defi}\label{def:reln-env}
A \emph{relation environment} maps each type variable in $\tvars^k \cup
\fvars^k$ to a $k$-ary relation transformer.  A morphism $f : \rho \to
\rho'$ between relation environments $\rho$ and $\rho'$ with
$\rho|_\tvars = \rho'|_\tvars$ maps each type constructor variable
$\psi^k \in \tvars$ to the identity morphism on $\rho \psi^k = \rho'
\psi^k$ and each functorial variable $\phi^k \in \fvars$ to a morphism
from the $k$-ary relation transformer $\rho \phi$ to the $k$-ary
relation transformer $\rho' \phi$. Composition of morphisms on
relation environments is given componentwise, with the identity
morphism mapping each relation environment to itself. This gives a
category of relation environments and morphisms between them, denoted
$\relenv$.
\end{defi}
We identify a $0$-ary relation transformer with the relation
(transformer) that is its codomain and consider a relation environment
to map a type variable of arity $0$ to a relation.  We write
$\rho[\ol{\alpha := R}]$ for the relation environment $\rho'$ such
that $\rho' \alpha_i \, = R_i$ for $i = 1,\dots,k$ and $\rho' \alpha =
\rho\alpha$ if $\alpha \not \in \{\alpha_1,\dots,\alpha_k\}$.  If $\rho$
is a relation environment, we write $\pi_1 \rho$ and $\pi_2 \rho$ for
the set environments mapping each type variable $\phi$ to the functors
$(\rho\phi)^1$ and $(\rho\phi)^2$, respectively.

\begin{defi}\label{def:relenv-functor}
  For each $k$, an \emph{$\omega$-cocontinuous environment transformer}
  $H$ is a triple $H = (H^1,H^2,H^*)$, where
\begin{itemize}
\item $H^1$ and $H^2$ are objects in $[\setenv,[\set^k,\set]]$
\item $H^*$ is an object in $[\relenv,[\rel^k,\rel]]$
\item $H^*\rho\, \ol{R} : \rel(H^1(\pi_1 \rho)\, \ol{A}, H^2(\pi_2
  \rho)\, \ol{B})$ whenever $R_1 : \rel(A_1,B_1),\dots,R_k :
  \rel(A_k,B_k)$
\item $H^*\rho\, \ol{(\alpha, \beta)} = (H^1(\pi_1 \rho)\,\ol{\alpha},
  H^2(\pi_2 \rho)\, \ol{\beta})$ whenever \[(\alpha_1, \beta_1) \in
  \Homrel(R_1,S_1),\dots, (\alpha_k, \beta_k) \in \Homrel(R_k,S_k)\]
\item For all $\overline{R : \rel(A,B)}$, $\pi_1(H^*f
  \,{\overline{R}}) = H^1 (\pi_1 f)\,{\overline{A}}$ and $\pi_2(H^*f
  \,{\overline{R}}) = H^2 (\pi_2 f)\,{\overline{B}}$
\end{itemize}
\end{defi}
\noindent
Definition~\ref{def:relenv-functor} entails that every
$\omega$-cocontinuous environment transformer $H$ is a
$\omega$-cocontinuous functor from $\relenv$ to $RT_k$, where
\begin{itemize}
\item The action of $H$ on $\rho$ in $\relenv$ is given by $H \rho = (H^1
  (\pi_1 \rho),\,H^2 (\pi_2 \rho),\,H^*\rho)$
\item The action of $H$ on morphisms $f : \rho \to \rho'$ in $\relenv$
  is given by $Hf = (H^1 (\pi_1 f),H^2 (\pi_2 f))$
\end{itemize}
\noindent
Note that the last condition of Definition~\ref{def:relenv-functor}
entails that if $f : \rho \to \rho'$, if
$R_1:\rel(A_1,B_1),\dots,R_k:\rel(A_k,B_k)$, and if $(x, y) \in
H^*\rho\,\ol{R}$, then \[(H^1(\pi_1 f)\,{\ol{A}}\,x, H^2(\pi_2
f)\,{\ol{B}}\,y) \in H^*\rho'\,\ol{R}\]

Considering $\relenv$ as a subcategory of the product $\Pi_{\phi^k \in
  \tvars \cup \fvars} RT_k$, computation of $\omega$-directed colimits
in $RT_k$ extends componentwise to $\relenv$. Recalling from the start
of this subsection that Definition~\ref{def:rel-sem} is given mutually
inductively with Definition~\ref{def:set-sem} we can now define our
relational interpretation. As with the set interpretation, the
relational interpretation is given in two parts, in
Definitions~\ref{def:rel-sem} and~\ref{def:rel-sem-funcs}.

\begin{defi}\label{def:rel-sem}
The \emph{relational interpretation} $\relsem{\cdot} : \F \to [\relenv,
 \rel]$ is defined by
\begin{align*}
  \relsem{\Gamma;\Phi \vdash \zerot}\rho &= 0\\
  \relsem{\Gamma;\Phi \vdash \onet}\rho &= 1\\
  \relsem{\Gamma; \emptyset \vdash \Nat^{\ol{\alpha}} \,F\,G}\rho &= \{\eta
  : \lambda \ol{R}.\,\relsem{\Gamma; \ol{\alpha} \vdash
    F}\rho[\ol{\alpha := R}] \Rightarrow \lambda \ol{R}. \,\relsem{
    \Gamma; \ol{\alpha} \vdash G}\rho[\ol{\alpha := R}]\}\\
  &=
  \{(t,t') \in \setsem{\Gamma; \emptyset
    \vdash \Nat^{\ol{\alpha}}
    \,F\,G} (\pi_1 \rho) \times \setsem{
    \Gamma;\emptyset
    \vdash \Nat^{\ol{\alpha}} \,F\,G} (\pi_2
  \rho)~|~\\
  & \hspace{0.3in} \forall {R_1 : \rel(A_1,B_1)}\,\dots\,{R_k : \rel(A_k,B_k)}.\\
  & \hspace{0.4in} (t_{\ol{A}},t'_{\ol{B}}) \in
  (\relsem{\Gamma; \ol{\alpha} \vdash G}\rho[\ol{\alpha :=
      R}])^{\relsem{\Gamma;\ol{\alpha}\vdash F}\rho[\ol{\alpha := R}]} \}\\
  \relsem{\Gamma;\Phi \vdash \phi \ol{F}}\rho &=
  (\rho\phi)\ol{\relsem{\Gamma;\Phi \vdash
    F}\rho}\\
  \relsem{\Gamma;\Phi \vdash F+G}\rho &=
  \relsem{\Gamma;\Phi \vdash F}\rho +
  \relsem{\Gamma;\Phi \vdash G}\rho\\
  \relsem{\Gamma;\Phi \vdash F\times G}\rho &=
  \relsem{\Gamma;\Phi \vdash F}\rho \times
  \relsem{\Gamma;\Phi \vdash G}\rho\\
   \relsem{\Gamma;\Phi \vdash (\mu \phi.\lambda
    \ol{\alpha}. H)\ol{G}}\rho
  &= (\mu T_{H,\rho})\ol{\relsem{\Gamma;\Phi \vdash G}\rho}\\
  \text{where }	T_{H,\rho}
    &= (T^\set_{H,\pi_1\rho}, T^\set_{H,\pi_2\rho}, T^\rel_{H,\rho}) \\
  \text{and } T^\rel_{H,\rho}\,F
    &= \lambda \ol{R}. \relsem{
      \Gamma;\phi,\ol{\alpha} \vdash H}\rho[\phi :=
    F][\ol{\alpha := R}]\\
  \text{and } T^\rel_{H,\rho}\,\delta
    &= \lambda \ol{R}. \relsem{
      \Gamma;\phi,\ol{\alpha} \vdash H}\id_\rho[\phi :=
    \delta][\ol{\alpha := \id_{\ol{R}}}]
\end{align*}
\end{defi}

The interpretations in Definitions~\ref{def:rel-sem}
and~\ref{def:rel-sem-funcs} below respect weakening, and also ensure
that the interpretations $\relsem{\Gamma \vdash F \to G}$ and
$\relsem{\Gamma \vdash \forall \ol\alpha. F}$ of the System F arrow
types and $\forall$-types representable in our calculus are as
expected in any parametric model.  As for set interpretations,
$\relsem{\Gamma; \emptyset \vdash \Nat^{\ol\alpha} F\,G}$ is
$\omega$-cocontinuous; indeed, it is constant (on $\omega$-directed
sets) because Definition~\ref{def:reln-env} ensures that restrictions
to $\mathbb{T}^k$ of morphisms between relational environments are
identities. If $\rho \in \relenv$ and $\vdash F$, then we write
$\relsem{\vdash F}$ instead of $\relsem{\vdash F}\rho$.  For the last
clause in Definition~\ref{def:rel-sem} to be well-defined we need
$T_{H,\rho}$ to be an $\omega$-cocontinuous higher-order relation
transformer on $RT_k$, where $k$ is the arity of $\phi$, so that, by
Lemma~\ref{lem:fp}, it admits a fixpoint. Since $T_{H,\rho}$ is
defined in terms of $\relsem{\Gamma;\phi, \ol{\alpha} \vdash H}$, this
means that relational interpretations of types must be
$\omega$-cocontinuous environment transformer from $\relenv$ to
$RT_0$, which in turn entails that the actions of relational
interpretations of types on objects and on morphisms in $\relenv$ are
intertwined. As for set interpretations, we know from~\cite{jp19}
that, for every $\Gamma; \ol{\alpha} \vdash F$, $\relsem{\Gamma;
  \ol{\alpha} \vdash F}$ is actually in $[\rel^k,\rel]$ where $k =
|\ol \alpha|$. We first define the actions of each of these functors
on morphisms between environments, and then argue that they are
well-defined and have the required properties. To do this, we show
that $T_H = (T^\set_H, T^\set_H, T^\rel_H) $ is a \emph{higher-order
  $\omega$-cocontinuous environment transformer}, as defined by

\begin{defi}\label{def:howcet}
A \emph{higher-order $\omega$-cocontinuous environment transformer}
is a triple $H = (H^1,H^2,H^*)$, where
\begin{itemize}
\item $H^1$ and $H^2$ are objects in
  $[\setenv,[[\set^k,\set],[\set^k,\set]]]$
\item $H^*$ is an object in $[\relenv,[RT_k,[\rel^k,\rel]]]$
\item $(H^1(\pi_1 \rho) F^1,H^2(\pi_2 \rho) F^2, H^* \rho F) \in
  RT_k$ whenever $F = (F^1,F^2,F^*) \in RT_k$
\item $H^*\rho\, (\delta^1, \delta^2)$ is a morphism in $RT_k$
  whenever $(\delta^1, \delta^2)$ is a morphism in $RT_k$ and,
  moreover, $H^*\rho\, (\delta^1, \delta^2) = (H^1(\pi_1
  \rho)\,\delta^1, H^2(\pi_2 \rho)\, \delta^2)$
\item For all $F = (F^1,F^2,F^*) \in RT_k$ and all $R_1 :
  \rel(A_1,B_1),\dots,R_k : \rel(A_k,B_k)$, $\pi_1(H^* f \,F\,\ol{R})$ $=
  H^1 (\pi_1 f)\,F^1\,\ol{A}$ and $\pi_2(H^*f \,F\,\ol{R}) = H^2
  (\pi_2 f)\,F^2\,\ol{B}$
\end{itemize}
\end{defi}
\noindent
Definition~\ref{def:howcet} entails that every higher-order
$\omega$-cocontinuous environment transformer $H$ is a
$\omega$-cocontinuous functor from $\relenv$ to
the category of $\omega$-cocontinuous higher-order
relation transformers on $RT_k$, where
\begin{itemize}
\item The action of $H$ on $\rho$ in $\relenv$ is given by $H \rho = (H^1
  (\pi_1 \rho),\,H^2 (\pi_2 \rho),\,H^*\rho)$
\item The action of $H$ on morphisms $f : \rho \to \rho'$ in $\relenv$
  is given by $Hf = (H^1 (\pi_1 f),H^2 (\pi_2 f))$
\end{itemize}
\noindent
Note that the last condition of Definition~\ref{def:howcet} entails
that if $f : \rho \to \rho'$, if $F = (F^1,F^2,F^*) \in RT_k$, if $R_1 :
\rel(A_1,B_1),\dots,R_k : \rel(A_k,B_k)$, and if $(x, y) \in
H^*\rho\,F\,\ol{R}$, then \[(H^1(\pi_1 f)\,F^1\,\ol{A}\,x, H^2(\pi_2
f)\,F^2\,\ol{B}\,y) \in H^*\rho'\,F\,\ol{R}\]

Now, the fact that $T_H = (T^\set_H, T^\set_H, T^\rel_H) $ is a
higher-order $\omega$-cocontinuous environment transformer follows
from the analogue for relations of the argument immediately preceding
Definition~\ref{def:set-sem-funcs}, together with the observations
that the action of $T^\rel_{H,\rho}$ on morphisms will be well-defined
by Definition~\ref{def:rel-sem-funcs}, and the functorial action of
$\relsem{\Gamma; \phi,\ol{\alpha} \vdash H}$ will already be available
in the final clause of Definition~\ref{def:howcet} from the induction
hypothesis on $H$ and Lemma~\ref{lem:rel-transf-morph} below. The
action of $T_H$ on an object $\rho \in \relenv$ is given by the
$\omega$-cocontinuous higher-order relation transformer $T_{H,\rho}$
whose actions on objects and morphisms are given in
Definition~\ref{def:howcet}. The action of $T_H$ on a morphism $f :
\rho \to \rho'$ is the morphism $T_{H,f} : T_{H,\rho} \to T_{H,\rho'}$
between higher-order relation transformers whose action on any $F \in
RT_k$ is the morphism of relation transfomers $T_{H,f}\, F :
T_{H,\rho}\, F \to T_{H,\rho'}\, F$ whose component at $\ol{R}$ is
$(T_{H,f}\, F)_{\ol{R}} = \relsem{\Gamma; \phi,\ol{\alpha} \vdash
  H}f[\phi := \id_F][\ol{\alpha := \id_R}]$, i.e., is
$(\setsem{\Gamma; \phi,\ol{\alpha} \vdash H}\,(\pi_1 f)\,[\phi :=
  \id_{F^1}][\ol{\alpha := \id_A}], \setsem{\Gamma; \phi,\ol{\alpha}
  \vdash H}\,(\pi_2 f)[\phi := \id_{F^2}][\ol{\alpha := \id_B}])$.
Using $T_H$, we can define the functorial action of relational
interpretation.
\begin{defi}\label{def:rel-sem-funcs}
Let $f: \rho \to \rho'$ for relation environments $\rho$ and $\rho'$
(so that $\rho|_\tvars = \rho'|_\tvars$). The action
$\relsem{\Gamma;\Phi \vdash F}f$ of $\relsem{\Gamma;\Phi \vdash F}$ on
the morphism $f$ is given exactly as in
Definition~\ref{def:set-sem-funcs}, except that all interpretations
are relational interpretations and all occurrences of $T^\set_{H,f}$
are replaced by $T_{H,f}$.
\end{defi}

If we define $\sem{\Gamma;\Phi \vdash F}$ to be $(\setsem{\Gamma;\Phi
  \vdash F}, \setsem{\Gamma;\Phi \vdash F},\relsem{\Gamma;\Phi \vdash
  F})$, then this is an immediate consequence of
\begin{lem}\label{lem:rel-transf-morph}
For every $\Gamma;\Phi \vdash F$, $\sem{\Gamma;\Phi \vdash F}$ is an
$\omega$-cocontinuous environment transformer.

\end{lem}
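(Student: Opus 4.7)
The plan is to proceed by induction on the formation rules for $F$ given in Definition~\ref{def:wftypes}, verifying in each case the five clauses of Definition~\ref{def:relenv-functor} together with $\omega$-cocontinuity of all three components of $\sem{\Gamma;\Phi \vdash F}$. The base cases $\zerot$ and $\onet$ are immediate: all three components are constant functors valued at $0$ and $1$ respectively, and every condition holds trivially.

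For the variable case $\phi\ol{F}$, the interpretation $\rho \mapsto (\rho\phi)\,\ol{\sem{\Gamma;\Phi \vdash F}\rho}$ is a composite of $\omega$-cocontinuous assignments: evaluation $\rho \mapsto \rho\phi$ is $\omega$-cocontinuous because $\omega$-directed colimits in $\relenv$ are computed componentwise; each $\sem{\Gamma;\Phi \vdash F_i}$ is $\omega$-cocontinuous by the induction hypothesis; and each $\rho\phi$ is itself an $\omega$-cocontinuous relation transformer by Definition~\ref{def:reln-env}. The compatibility conditions follow from the fact that a relation transformer applied to relation transformers is again a relation transformer. The sum and product cases are similar, using that $+$ and $\times$ are $\omega$-cocontinuous in $\set$ and $\rel$ and commute with the projections $\pi_1$ and $\pi_2$. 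For $\Nat^{\ol\alpha}F\,G$, which is formed only in an empty functorial context, Definitions~\ref{def:set-env} and~\ref{def:reln-env} force the action on morphisms of $\sem{\Gamma;\emptyset \vdash \Nat^{\ol\alpha}F\,G}$ to be an identity; a functor whose arrow action is identity is automatically $\omega$-cocontinuous, and the compatibility conditions follow directly from Definition~\ref{def:rel-sem}.

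The main obstacle is the $\mu$-type case $(\mu\phi.\lambda\ol\alpha.H)\ol{G}$. By the induction hypothesis, $\sem{\Gamma;\phi,\ol\alpha \vdash H}$ is an $\omega$-cocontinuous environment transformer, which by the argument sketched just before Definition~\ref{def:rel-sem-funcs} upgrades $T_H = (T_H^\set, T_H^\set, T_H^\rel)$ to a higher-order $\omega$-cocontinuous environment transformer in the sense of Definition~\ref{def:howcet}. Consequently each $T_{H,\rho}$ is an $\omega$-cocontinuous endofunctor on $RT_k$, and Lemma~\ref{lem:fp} yields $\mu T_{H,\rho}$ as a genuine fixpoint in $RT_k$. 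The nontrivial step is to show that $\rho \mapsto \mu T_{H,\rho}$ is itself $\omega$-cocontinuous from $\relenv$ to $RT_k$; this requires interchanging the $\nat$-indexed colimit defining $\mu$ via the initial chain with an external $\omega$-directed colimit in $\relenv$. The interchange is justified because $\omega$-directed colimits commute with $\omega$-directed colimits in $\set$ and $\rel$ (and hence pointwise in the relevant functor categories and in $RT_k$), and because evaluation $H \mapsto H\,K_0$ is itself $\omega$-cocontinuous. Composing with the $\omega$-cocontinuous functors $\sem{\Gamma;\Phi \vdash G_i}$ supplied by the induction hypothesis, and using once more that a relation transformer applied to relation transformers is again one, yields the five clauses of Definition~\ref{def:relenv-functor} together with $\omega$-cocontinuity for $(\mu T_{H,\rho})\ol{\sem{\Gamma;\Phi \vdash G}\rho}$, completing the induction.
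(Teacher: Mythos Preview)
Your proposal is correct and follows essentially the same approach as the paper: a structural induction on $F$, with the $\mu$-case handled via Lemma~\ref{lem:fp} and Equation~\ref{eq:mu}. The only notable difference is that the paper's proof is a one-line sketch that defers the $\omega$-cocontinuity verification in each case to results from~\cite{jp19}, whereas you expand these arguments directly (componentwise colimits for the variable case, colimit interchange for the initial chain in the $\mu$-case); both routes are sound and yield the same proof structure.
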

\noindent
The proof is a straightforward induction on the structure of $F$,
using an appropriate result from~\cite{jp19} to deduce
$\omega$-cocontinuity of $\sem{\Gamma;\Phi \vdash F}$ in each case,
together with Lemma~\ref{lem:fp} and Equation~\ref{eq:mu} for
$\mu$-types. We note that, for any relation environment $\rho$,
$\sem{\Gamma;\Phi \vdash F}\rho \in RT_0$.

We can prove by simultaneous induction that our interpretations of
types interact well with demotion of functorial variables to
non-functorial ones, along with other useful identities. Indeed, if
$\rho, \rho' : \setenv$, \,$f : \rho \to \rho'$, \,$\rho \phi = \rho
\psi = \rho' \phi = \rho' \psi$, \, $f \phi = f \psi = \id_{\rho
  \phi}$,\, $\Gamma; \Phi, \phi^k \vdash F$,\,
$\Gamma;\Phi,\ol{\alpha} \vdash G$,\, $\Gamma;\Phi,\alpha_1\dots\alpha_k
\vdash H$, and $\ol{\Gamma;\Phi \vdash K}$, then
\begin{gather}
\label{thm:demotion-objects}
\setsem{\Gamma; \Phi, \phi \vdash F} \rho = \setsem{\Gamma, \psi; \Phi
  \vdash F[\phi :== \psi] } \rho\\
\label{thm:demotion-morphisms}
\setsem{\Gamma; \Phi, \phi \vdash F} f = \setsem{\Gamma, \psi; \Phi
  \vdash F[\phi :== \psi]} f\\
\label{eq:subs-var}
\setsem{\Gamma;\Phi \vdash G[\ol{\alpha := K}]}\rho =
\setsem{\Gamma;\Phi,\ol{\alpha} \vdash G}\rho[\ol{\alpha :=
\setsem{\Gamma;\Phi \vdash K}\rho}]\\
\label{eq:subs-var-morph}
\setsem{\Gamma;\Phi \vdash G[\ol{\alpha := K}]}f =
\setsem{\Gamma;\Phi,\ol{\alpha} \vdash G}f[\ol{\alpha :=
\setsem{\Gamma;\Phi \vdash K}f}]\\
\label{eq:subs-const}
\setsem{\Gamma; \Phi \vdash F[\phi := H]}\rho
= \setsem{\Gamma; \Phi, \phi \vdash F}\rho
[\phi := \lambda \ol{A}.\, \setsem{\Gamma;\Phi,\overline{\alpha}\vdash
    H}\rho[\overline{\alpha := A}]] \\
\label{eq:subs-const-morph}
\setsem{\Gamma; \Phi \vdash F[\phi := H]}f
= \setsem{\Gamma; \Phi, \phi \vdash F}f
[\phi := \lambda \ol{A}.\,\setsem{\Gamma;\Phi,\overline{\alpha}\vdash
    H}f[\overline{\alpha := \id_A}]]
\end{gather}
Identities analogous to (\ref{thm:demotion-objects}) through
(\ref{eq:subs-const-morph}) hold for relational interpretations as well.

\section{The Identity Extension Lemma}\label{sec:iel}

In most treatments of parametricity, equality relations on sets are
taken as \emph{given} --- either directly as diagonal relations, or
perhaps via reflexive graphs if kinds are also being tracked --- and
the graph relations used to validate existence of initial algebras are
defined in terms of them. We take a different approach here, giving a
categorical definition of graph relations for morphisms (i.e., natural
transformations) between functors and \emph{constructing} equality
relations as particular graph relations. Our definitions specialize to
the usual ones for the graph relation for morphisms between sets and
equality relations on sets. In light of its novelty, we spell out
our construction in detail.

The standard definition of the graph for a morphism $f : A \to B$ in
$\set$ is the relation $\graph{f} : \rel(A,B)$ defined by $(x,y) \in
\graph{f}$ iff $fx = y$. This definition naturally generalizes to
associate to each natural transformation between $k$-ary functors on
$\set$ a $k$-ary relation transformer as follows:

\begin{defi}\label{dfn:graph-nat-transf}
If $F, G: \Set^k \to \Set$ and $\alpha : F \to G$ is a natural
transformation, then the functor $\graph{\alpha}^*: \rel^k \to \rel$
is defined as follows. Given $R_1 : \rel(A_1, B_1),\dots,R_k :
\rel(A_k,B_k)$, let $\iota_{R_i} : R_i \hookrightarrow A_i \times
B_i$, for $i = 1,\dots,k$, be the inclusion of $R_i$ as a subset of $A_i
\times B_i$,
let $h_{\overline{A \times B}}$ be the unique morphism making the diagram
{\footnotesize\[\begin{tikzcd}[row sep = large]
        F\overline{A}
        &F(\overline{A \times B})
        \ar[l, "{F\overline{\pi_1}}"']
        \ar[r, "{F\overline{\pi_2}}"]
        \ar[d, dashed, "{h_{\overline{A \times B}}}"]
        &F\overline{B}
        \ar[r, "{\alpha_{\ol{B}}}"]
        &G\overline{B} \\
        &F\overline{A} \times G\overline{B}
        \ar[ul, "{\pi_1}"] \ar[urr, "{\pi_2}"']
\end{tikzcd}\]}

\noindent
commute, and let $h_{\overline{R}} : F\overline{R} \to F\overline{A}
\times G\overline{B}$ be $h_{\overline{A \times B}} \circ
F\overline{\iota_R}$. Further, let $\alpha^\wedge\overline{R}$ be the
subobject through which $h_{\overline{R}}$ is factorized by the
mono-epi factorization system in $\set$, as shown in the
following diagram:
{\footnotesize\[\begin{tikzcd}
        F\overline{R}
        \ar[rr, "{h_{\overline{R}}}"]
        \ar[dr, twoheadrightarrow, "{q_{\alpha^\wedge\overline{R}}}"']
        &&F\overline{A} \times G\overline{B} \\
        &\alpha^\wedge\overline{R}
        \ar[ur, hookrightarrow, "{\iota_{\alpha^\wedge\overline{R}}}"']
\end{tikzcd}\]}

\noindent
Then $\alpha^\wedge\overline{R} : \rel(F\overline{A}, G\overline{B})$
by construction, so the action of $\langle \alpha \rangle^*$ on
objects can be given by $\langle \alpha \rangle^* \overline{(A,B,R)} =
(F\overline{A}, G\overline{B}, \iota_{\alpha^\wedge
  \overline{R}}\alpha^\wedge\overline{R})$. Its action on morphisms is
given by $\graph{\alpha}^*\overline{(\beta, \beta')} =
(F\overline\beta, G\overline\beta')$.
\end{defi}

The data in Definition~\ref{dfn:graph-nat-transf} yield the \emph{graph
  relation transformer for $\alpha$}, denoted $\graph{\alpha} = (F, G,
\graph{\alpha}^*)$.

\begin{lem}\label{lem:graph-reln-functors}
If $F,G : [\set^k,\set]$, and if $\alpha : F \to G$ is a natural
transformation, then $\graph{\alpha}$ is in $RT_k$.
\end{lem}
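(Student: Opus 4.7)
The plan is to verify the three conditions of Definition~\ref{def:rel-transf} for the triple $(F, G, \graph{\alpha}^*)$. The hypothesis already supplies $F, G \in [\set^k, \set]$. The condition that $\graph{\alpha}^*\ol R : \rel(F\ol A, G\ol B)$ whenever $\ol{R : \rel(A,B)}$ is built into Definition~\ref{dfn:graph-nat-transf}: the object $\graph{\alpha}^*\ol R = \alpha^\wedge\ol R$ is presented as the image, under the mono-epi factorization in $\set$, of $h_{\ol R} : F\ol R \to F\ol A \times G\ol B$, hence is a subobject of $F\ol A \times G\ol B$. The prescribed action on morphisms, $\graph{\alpha}^*\ol{(\beta, \beta')} = (F\ol\beta, G\ol{\beta'})$, is itself of the required shape. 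What remains is to check that this assignment really defines a functor $\rel^k \to \rel$ — in particular, that it lands in $\rel$ at all.

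The main step is to show that whenever each $(\beta_i, \beta'_i) : R_i \to S_i$ is a morphism in $\rel$, the pair $(F\ol\beta, G\ol{\beta'})$ is a morphism in $\rel$ from $\graph{\alpha}^*\ol R$ to $\graph{\alpha}^*\ol S$. Each $(\beta_i, \beta'_i)$ factors uniquely through a map $r_i : R_i \to S_i$ with $\iota_{S_i} \circ r_i = (\beta_i \times \beta'_i) \circ \iota_{R_i}$. Given $(x, y) \in \graph{\alpha}^*\ol R$, I take $z \in F\ol R$ with $h_{\ol R} z = (x, y)$ and set $z' = F\ol r z \in F\ol S$. Using $\iota_S \circ r = (\beta \times \beta') \circ \iota_R$ componentwise together with functoriality of $F$ gives $F\ol{\iota_S} z' = F\ol{\beta \times \beta'} (F\ol{\iota_R} z)$, so $h_{\ol S} z' = h_{\ol{A' \times B'}}(F\ol{\beta \times \beta'} (F\ol{\iota_R} z))$. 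Projecting by $\pi_1$ and using $\pi_1 \circ h_{\ol{A' \times B'}} = F\ol{\pi_1}$ yields $F\ol\beta x$; projecting by $\pi_2$ and using $\pi_2 \circ h_{\ol{A' \times B'}} = \alpha_{\ol{B'}} \circ F\ol{\pi_2}$, then invoking naturality of $\alpha$ at $\ol{\beta'}$ to rewrite $\alpha_{\ol{B'}} \circ F\ol{\beta'}$ as $G\ol{\beta'} \circ \alpha_{\ol B}$, yields $G\ol{\beta'} y$. Hence $(F\ol\beta x, G\ol{\beta'} y) \in \graph{\alpha}^*\ol S$, as required.

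Naturality of $\alpha$ is the only nontrivial ingredient — it is exactly what makes the second projection cooperate — and I expect it to be the only genuine obstacle. Preservation of identities and composition by $\graph{\alpha}^*$ then reduces componentwise to the same properties for $F$ and $G$, since the two components of $\graph{\alpha}^*\ol{(\beta, \beta')}$ are defined independently by applying $F$ or $G$ to the respective tuple of set maps. This completes the verification that $\graph{\alpha} \in RT_k$.
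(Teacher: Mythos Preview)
Your proof is correct and takes essentially the same approach as the paper: both lift through the underlying maps $r_i : R_i \to S_i$ (the paper calls them $\gamma_i$) and verify that $(F\ol\beta, G\ol{\beta'})$ carries the image of $h_{\ol R}$ into the image of $h_{\ol S}$, with naturality of $\alpha$ being exactly the ingredient needed for the second component. The paper phrases this diagrammatically---first establishing $h_{\ol{C \times D}} \circ F(\ol{\beta \times \beta'}) = (F\ol\beta \times G\ol{\beta'}) \circ h_{\ol{A \times B}}$ via the universal property of the product, then invoking the lifting property of the mono-epi factorization to produce the required map $\epsilon : \alpha^\wedge\ol R \to \alpha^\wedge\ol S$---whereas you chase elements directly; the content is identical. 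One small omission on your side: membership in $RT_k$ requires $\graph{\alpha}^* \in [\rel^k,\rel]$, i.e., $\omega$-cocontinuity, which you do not address; the paper dispatches this with a single ``Clearly'', so it is not a substantive gap.
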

\proof
Clearly, $\graph{\alpha}^*$ is $\omega$-cocontinuous, so
$\graph{\alpha}^* : [\rel^k,\rel]$. Let $\overline{R :
  \rel(A, B)}$, $\overline{S : \rel(C, D)}$, and $\overline{(\beta,
  \beta') : R \to S}$. We want to show that there exists a morphism
$\epsilon : \alpha^\wedge\overline{R} \to \alpha^\wedge\overline{S}$
such that the diagram on the left below commutes. Since
$\ol{(\beta,\beta') : R \to S}$, there exist $\overline{\gamma : R \to
  S}$ such that each diagram in the middle commutes.
Moreover, since both $h_{\overline{C \times D}} \circ
F(\overline{\beta \times \beta'})$ and $(F\overline{\beta} \times
G\overline{\beta'}) \circ h_{\overline{A \times B}}$ make the diagram
on the right commute, they must be equal.
\begin{figure*}[ht]
  \hspace*{-1in}
  \begin{minipage}[b]{0.25\linewidth}
 {\small    \[
    \begin{tikzcd}
        \alpha^\wedge\overline{R}
        \ar[r, hookrightarrow, "{\iota_{\alpha^\wedge\overline{R}}}"]
        \ar[d, "{\epsilon}"']
        & F\overline{A} \times G\overline{B}
        \ar[d, "{F\overline{\beta} \times G\overline{\beta'}}"] \\
        \alpha^\wedge\overline{S}
        \ar[r, hookrightarrow, "{\iota_{\alpha^\wedge\overline{S}}}"']
        & F\overline{C} \times G\overline{D}
    \end{tikzcd}
    \]}
\end{minipage}\hspace*{0.2in}
\begin{minipage}[b]{0.25\linewidth}
{\small    \[
    \begin{tikzcd}
        R_i
        \ar[d, "{\gamma_i}"']
        \ar[r, hookrightarrow, "{\iota_{R_i}}"]
        &A_i \times B_i
        \ar[d, "{\beta_i \times \beta'_i}"] \\
        S_i
        \ar[r, hookrightarrow, "{\iota_{S_i}}"]
        &C_i \times D_i
    \end{tikzcd}
    \]}
\end{minipage}\hspace*{0.2in}
\begin{minipage}[b]{0.25\linewidth}
{\footnotesize \[
      \begin{tikzcd}[row sep = large]
          F\overline{C}
          &F\overline{C} \times G\overline{D}
          \ar[l, "{\pi_1}"'] \ar[r, "{\pi_2}"]
          &G\overline{D}\\
          &F(\overline{A \times B})
          \ar[u, dashed, "{\exists !}"]
          \ar[ul, "{F\ol{\pi_1} \circ F(\overline{\beta \times \beta'})}"]
          \ar[ur, "{\alpha_{\ol{D}} \circ F\ol{\pi_2} \circ F(\overline{\beta \times \beta'})}"']
      \end{tikzcd}
      \]}
\end{minipage}
\end{figure*}


\noindent
We therefore get that the right-hand square in the diagram on the left
below commutes, and thus that the entire diagram does as well.
Finally, by the left-lifting property of $q_{F^\wedge\overline{R}}$
with respect to $\iota_{F^\wedge\overline{S}}$ given by the mono-epi
factorization system, there exists an $\epsilon$ such that the diagram
on the right below commutes as desired.
\begin{figure*}[ht]
  \vspace*{-0.1in}
  \hspace*{-0.5in}
  \begin{minipage}[b]{0.45\linewidth}
{\footnotesize \[
      \begin{tikzcd}
          F\overline{R}
          \ar[d, "{F\overline{\gamma}}"']
          \ar[r, hookrightarrow, "{F\overline{\iota_R}}"]
          \ar[rr, bend left, "{h_{\overline{R}}}"]
          &F(\overline{A \times B})
          \ar[d, "{F(\overline{\beta \times \beta'})}"]
          \ar[r, "{h_{\overline{A \times B}}}"]
          &F\overline{A} \times G\overline{B}
          \ar[d, "{F\overline{\beta} \times G\overline{\beta'}}"] \\
          F\overline{S}
          \ar[r, hookrightarrow, "{F\overline{\iota_S}}"']
          \ar[rr, bend right, "{h_{\overline{S}}}"']
          &F(\overline{C \times D})
          \ar[r, "{h_{\overline{C \times D}}}"']
          &F\overline{C} \times G\overline{D}
      \end{tikzcd}
      \]}
\end{minipage}
  \vspace*{-0.5in}
  \begin{minipage}[b]{0.45\linewidth}
      {\footnotesize
        \[  \vspace*{0.4in}
      \begin{tikzcd}
          F\overline{R}
          \ar[d, "{F\overline{\gamma}}"']
          \ar[r, twoheadrightarrow, "{q_{\alpha^\wedge\overline{R}}}"]
          &\alpha^\wedge\overline{R}
          \ar[d, dashed, "{\epsilon}"]
          \ar[r, hookrightarrow, "{\iota_{\alpha^\wedge\overline{R}}}"]
          &F\overline{A} \times G\overline{B}
          \ar[d, "{F\overline{\beta} \times G\overline{\beta'}}"] \\
          F\overline{S}
          \ar[r, twoheadrightarrow, "{q_{\alpha^\wedge\overline{S}}}"']
          &\alpha^\wedge\overline{S}
          \ar[r, hookrightarrow, "{\iota_{\alpha^\wedge\overline{S}}}"']
          &F\overline{C} \times G\overline{D}
      \end{tikzcd}
      \]}
\end{minipage}
\end{figure*}

\qed

\vspace*{0.1in}

If $f : A \to B$ is a morphism in $\set$ then the definition of the
graph relation transformer $\langle f \rangle$ for $f$ as a natural
transformation between $0$-ary functors $A$ and $B$ coincides with its
standard definition.  Graph relation transformers are thus a
reasonable extension of graph relations to functors.

The action of a graph relation transformer on a graph relation can be
computed explicitly:

\begin{lem}\label{lem:eq-reln-equalities}
If $\alpha : F \to G$ is a morphism in $[\Set^k, \Set]$
and $f_1: A_1 \to B_1, \dots, f_k : A_k \to B_k$,
then $\graph{\alpha}^* \graph{\overline{f}}
= \langle G \ol{f} \circ \alpha_{\ol{A}} \rangle
= \langle \alpha_{\ol{B}} \circ F \ol{f} \rangle$.
\end{lem}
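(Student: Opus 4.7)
The plan is to unpack Definition~\ref{dfn:graph-nat-transf} in the special case $R_i = \graph{f_i}$ and identify the resulting image factorization with a graph. First, naturality of $\alpha$ gives
\[ \alpha_{\ol{B}} \circ F\ol{f} \;=\; G\ol{f} \circ \alpha_{\ol{A}}, \]
so the two right-hand expressions in the statement are literally equal as morphisms $F\ol{A} \to G\ol{B}$; it therefore suffices to show $\graph{\alpha}^* \graph{\ol{f}} = \langle \alpha_{\ol{B}} \circ F\ol{f} \rangle$.

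Next, I would unwind the construction of $h_{\ol{\graph{f}}}$. Let $\iota_{\graph{f_i}} : \graph{f_i} \hookrightarrow A_i \times B_i$ be the inclusion. The key observation is that the composite $p_i := \pi_1 \circ \iota_{\graph{f_i}} : \graph{f_i} \to A_i$ is an isomorphism (its inverse sends $a \mapsto (a, f_i a)$), and that by construction $\pi_2 \circ \iota_{\graph{f_i}} = f_i \circ p_i$. Applying the defining universal property of $h_{\ol{A \times B}}$ and precomposing with $F\ol{\iota_{\graph{f}}}$, one reads off that $h_{\ol{\graph{f}}} : F\ol{\graph{f}} \to F\ol{A} \times G\ol{B}$ is uniquely determined by
\[ \pi_1 \circ h_{\ol{\graph{f}}} \;=\; F\ol{p}, \qquad \pi_2 \circ h_{\ol{\graph{f}}} \;=\; \alpha_{\ol{B}} \circ F\ol{f} \circ F\ol{p}. \]
Hence $h_{\ol{\graph{f}}} = \langle \id_{F\ol{A}},\, \alpha_{\ol{B}} \circ F\ol{f} \rangle \circ F\ol{p}$.

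Finally, I would use uniqueness of the mono-epi factorization in $\set$. Since $F\ol{p}$ is an isomorphism (functors preserve isos), it is epic; and $\langle \id_{F\ol{A}}, \alpha_{\ol{B}} \circ F\ol{f}\rangle$ is monic, being a section of $\pi_1$. So the factorization exhibited above is \emph{the} mono-epi factorization of $h_{\ol{\graph{f}}}$, whence by Definition~\ref{dfn:graph-nat-transf} the subobject $\alpha^\wedge \ol{\graph{f}}$ with inclusion $\iota_{\alpha^\wedge \ol{\graph{f}}}$ coincides, as a subobject of $F\ol{A} \times G\ol{B}$, with $\langle \id_{F\ol{A}}, \alpha_{\ol{B}} \circ F\ol{f}\rangle$. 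But that subobject is precisely the graph relation of $\alpha_{\ol{B}} \circ F\ol{f}$, giving $\graph{\alpha}^* \graph{\ol{f}} = \langle \alpha_{\ol{B}} \circ F\ol{f}\rangle$, and combining with the naturality equation at the start completes the proof.

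The only real obstacle is bookkeeping: keeping track that the ``first projection'' used to build $h$ lands on the $F\ol{A}$ side while the ``second projection'' is postcomposed with $\alpha$, and then correctly identifying $p_i$ as an iso so that the resulting factorization is already in mono-epi form. Once this is in hand the result is essentially immediate from naturality and the uniqueness of image factorizations.
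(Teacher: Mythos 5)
Your proposal is correct and follows essentially the same route as the paper's proof: compute $h_{\ol{\graph{f}}}$ via the universal property of the product, observe it is already (up to the epi part) a mono, read off the image as $\langle \id_{F\ol{A}}, \alpha_{\ol{B}} \circ F\ol{f}\rangle$, and finish with naturality of $\alpha$. The only difference is presentational: you explicitly track the isomorphism $p_i : \graph{f_i} \to A_i$, whereas the paper silently identifies $\graph{f_i}$ with $A_i$ and writes $\iota_{\graph{f_i}} = \langle \id_{A_i}, f_i\rangle$; your version is slightly more careful on that point.
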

\proof
Since $h_{\overline{A \times B}}$ is the unique morphism making the
bottom triangle of the diagram on the left below commute, and since
$h_{\graph{\overline{f}}} = h_{\overline{A \times B}} \circ F
\,\ol{\iota_{\graph{f}}} = h_{\overline{A \times B}} \circ F
\overline{\langle \id_A, f \rangle}$, the universal property of the
product depicted in the diagram on the right gives
$h_{\graph{\overline{f}}} = \langle \id_{F \ol{A}}, \alpha_{\ol{B}}
\circ F\ol{f} \rangle : F \ol{A} \to F \ol{A} \times G \ol{B}$.

\begin{figure*}[ht]
  \vspace*{-0.15in}
  \begin{minipage}[b]{0.45\linewidth}
{\footnotesize
\[\begin{tikzcd}[row sep = large]
        &F\overline{A}
        \ar[d, "{F \overline{\langle \id_A, f \rangle}}" description]
        \ar[dl, equal]
        \ar[dr, "{F\ol{f}}"]\\
        F\overline{A}
        &F(\overline{A \times B})
        \ar[l, "{F\overline{\pi_1}}"']
        \ar[r, "{F\overline{\pi_2}}"]
        \ar[d, "{h_{\overline{A \times B}}}"]
        &F\overline{B}
        \ar[r, "{\alpha_{\ol{B}}}"]
        &G\overline{B}\\
        &F\overline{A} \times G\overline{B}
        \ar[ul, "{\pi_1}"] \ar[urr, "{\pi_2}"']
\end{tikzcd}\]}
\end{minipage}
  \begin{minipage}[b]{0.45\linewidth}
{\footnotesize
\[
      \begin{tikzcd}[row sep = large]
          F\overline{A}
          &F\overline{A} \times G\overline{B}
          \ar[l, "{\pi_1}"'] \ar[r, "{\pi_2}"]
          &G\overline{B}\\
          &F\overline{A}
          \ar[u, dashed, "{\exists !}"]
          \ar[ul, equal]
          \ar[r, "{F\ol{f}}"']
          &F{\ol{B}}
          \ar[u, "{\alpha_{\ol{B}}}"']
      \end{tikzcd}
      \]}
\end{minipage}
\end{figure*}

\noindent
Moreover, $\langle \id_{F \ol{A}}, \alpha_{\ol{B}} \circ F\ol{f}
\rangle$ is a monomorphism in $\set$ because $\id_{F \ol{A}}$ is, so
its mono-epi factorization gives $\iota_{\alpha^\wedge
  \graph{\overline{f}}} = \langle \id_{F \ol{A}}, \alpha_{\ol{B}}
\circ F\ol{f} \rangle$, and thus $\alpha^\wedge \graph{\overline{f}} =
F\overline{A}$.  Then $\iota_{\alpha^\wedge \graph{\overline{f}}}
\alpha^\wedge \graph{\overline{f}} = \langle \id_{F \ol{A}},
\alpha_{\ol{B}} \circ F\ol{f} \rangle (F \ol{A}) = \graph{
  \alpha_{\ol{B}} \circ F\ol{f} }^*$, so that
\[\graph{ \alpha }^*
\graph{ \overline{f} } = (F\overline{A}, G\overline{B},
\iota_{\alpha^\wedge \graph{\overline{f}}}\, \alpha^\wedge
\graph{\overline{f}}) = (F\overline{A}, G\overline{B}, \graph{
  \alpha_{\ol{B}} \circ F\ol{f} }^*) = \graph{ \alpha_{\ol{B}} \circ
  F\ol{f} }\]  Finally, $\alpha_{\ol{B}} \circ F\ol{f} = G\ol{f} \circ
\alpha_{\ol{A}}$ by naturality of $\alpha$.
\qed

To prove the IEL, we also need to know that the equality relation
transformer preserves equality relations.  The \emph{equality relation
  transformer} on $F : [\set^k,\set]$ is defined to be
\begin{equation}\label{def:eq-transf}
  \Eq_F = \graph{\id_{F}} = (F, F, \graph{\id_{F}}^*)
\end{equation}
Lemma~\ref{lem:eq-reln-equalities} then gives that, for all $\ol{A :
\set}$,
\begin{equation}\label{eq:eq-pres-eq}
\Eq^*_F \ol{\Eq_A}
= \graph{\id_F}^* \graph{\id_{\ol{A}}}
= \graph{F \id_{\ol{A}} \circ (\id_F)_{\ol{A}}}
= \graph{\id_{F\ol{A}} \circ \id_{F\ol{A}}}
= \graph{\id_{F\ol{A}}}
= \Eq_{F\ol{A}}
\end{equation}

Graph relation transformers in general, and equality relation
transformers in particular, naturally extend to relation environments.
Indeed, if $\rho, \rho' : \setenv$ and $f : \rho \to \rho'$, then the
\emph{graph relation environment} $\graph{f}$ is defined pointwise by
$\graph{f} \phi = \graph{f \phi}$ for every $\phi$, which entails that
$\pi_1 \graph{f} = \rho$ and $\pi_2 \graph{f} = \rho'$. In particular,
the \emph{equality relation environment} $\Eq_\rho$ is defined to be
$\graph{\id_{\rho}}$, which entails that $\Eq_\rho \phi = \Eq_{\rho
  \phi}$ for every $\phi$.

With these definitions in hand, we can state and prove both an
Identity Extension Lemma and a Graph Lemma for our calculus.
\begin{thm}[IEL]\label{thm:iel}
If $\rho : \setenv$ and $\Gamma; \Phi \vdash F$ then $\relsem{\Gamma;
  \Phi \vdash F} \Eq_\rho = \Eq_{\setsem{\Gamma; \Phi \vdash F}\rho}$.
\end{thm}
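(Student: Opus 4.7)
The plan is to prove Theorem~\ref{thm:iel} by structural induction on the derivation of $\Gamma; \Phi \vdash F$, threading through a parallel Graph Lemma asserting that relational interpretations respect graph relation environments, so that Theorem~\ref{thm:iel} becomes the special case $f = \id_\rho$. The base cases $\zerot$ and $\onet$ are immediate, and the sum and product cases are routine since taking equality distributes over coproducts and products in $\rel$. For the variable application case $\phi \ol{F}$, I would unfold both sides, apply the IH to each $F_i$, and then combine $\Eq_\rho \phi = \Eq_{\rho \phi}$ (from the definition of the equality relation environment) with Equation~(\ref{eq:eq-pres-eq}) applied pointwise to the functor $\rho \phi$.

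The first substantive case is $\Nat^{\ol{\alpha}} F\, G$. The reverse inclusion is immediate: the parametricity constraint built into $\setsem{\Nat^{\ol{\alpha}} F\, G}$ is precisely what witnesses any diagonal pair $(t,t)$ as a related pair. For the forward inclusion, given $(t, t') \in \relsem{\Nat^{\ol{\alpha}} F\, G}\Eq_\rho$, I would instantiate the parametricity hypothesis at $R_i = \Eq_{A_i}$ with $\ol{B} = \ol{A}$. Since $\Eq_\rho[\ol{\alpha := \Eq_A}] = \Eq_{\rho[\ol{\alpha := A}]}$, the IH applied to $F$ and $G$ collapses the hypothesis to the statement that $(t_{\ol{A}}, t'_{\ol{A}})$ preserves equality, which forces $t_{\ol{A}} = t'_{\ol{A}}$ pointwise and therefore $t = t'$.

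The $\mu$-type case is the main obstacle. After the IH disposes of the arguments $\ol{G}$, the goal reduces to $(\mu T_{H, \Eq_\rho}) \ol{\Eq_{\setsem{G}\rho}} = \Eq_{(\mu T^\set_{H,\rho}) \ol{\setsem{G}\rho}}$. Since $\pi_1 \Eq_\rho = \pi_2 \Eq_\rho = \rho$, Equation~(\ref{eq:mu}) already identifies the first two components of both sides, so the content lies in the third. I would run a sub-induction on the finite iterates of the fixpoint, strengthening the target to the relation-transformer identity $T^n_{H, \Eq_\rho} K_0 = \Eq_{(T^\set_{H,\rho})^n K^\set_0}$. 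The base case $n = 0$ reduces to a direct computation that the graph of the identity on the constantly-$0$ functor is the constantly-$0$ relation transformer. The inductive step requires the local identity $T_{H,\Eq_\rho} \Eq_F = \Eq_{T^\set_{H,\rho} F}$, which follows from the parallel Graph Lemma applied to $H$ at the graph environment $\Eq_{\rho[\phi := F]}$, combined with the substitution identity~(\ref{eq:subs-const}) that moves the substitution of $\phi$ out of the type into the environment. Passing to the $\omega$-colimit then uses Lemma~\ref{lem:fp} and componentwise computation of colimits in $RT_k$: since equality relation transformers are preserved by $\omega$-directed colimits of their underlying functors, the colimit of the $T^n_{H, \Eq_\rho} K_0$ is the equality relation transformer on $\mu T^\set_{H,\rho}$, and evaluating at $\ol{\Eq_{\setsem{G}\rho}}$ via Equation~(\ref{eq:eq-pres-eq}) finishes the case.

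The principal difficulty is that the environments arising inside the fixpoint iteration are not literally equality environments at the functorial variable $\phi$ until the sub-IH is invoked to identify each iterate with a graph/equality relation transformer, forcing the parallel Graph Lemma to be carried through the main induction on type structure; once that machinery is in place, Lemmas~\ref{lem:fp} and~\ref{lem:eq-reln-equalities} dispatch the fixpoint case essentially mechanically.
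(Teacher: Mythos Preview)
Your overall induction matches the paper's, and your treatments of the base, sum/product, variable-application, and $\Nat$ cases are correct and essentially identical to the paper's. The $\mu$-case, however, has a genuine gap.

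You propose to strengthen the sub-induction target to the full relation-transformer identity $T^n_{H,\Eq_\rho} K_0 = \Eq_{(T^\set_{H,\rho})^n K^\set_0}$, with the inductive step supplied by ``the parallel Graph Lemma applied to $H$ at the graph environment $\Eq_{\rho[\phi := F]}$''. But neither the IEL nor the Graph Lemma for $H$ delivers this. Both statements evaluate $\relsem{\Gamma;\phi,\ol\alpha\vdash H}$ at an environment that is an equality (respectively, graph) environment \emph{in every slot}, including the $\ol\alpha$ slots; they therefore yield only $\relsem{H}\Eq_\rho[\phi := \Eq_F][\ol{\alpha := \Eq_A}] = \Eq_{\setsem{H}\rho[\phi := F][\ol{\alpha := A}]}$, i.e., agreement at $\ol R = \ol{\Eq_A}$, not the transformer identity at arbitrary $\ol R$. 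So your sub-IH is not maintained by the step you describe.

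The paper handles this by proving a \emph{weaker} pair of statements by simultaneous induction on $n$: first, $T^n_{H,\Eq_\rho} K_0\,\ol{\Eq_A} = (\Eq_{(T^\set_{H,\rho})^n K_0})^*\,\ol{\Eq_A}$ (agreement only at equality relations, which is all that is needed once the outer IH has reduced the arguments $\ol G$ to equality relations); and second, a companion statement ranging over all \emph{subformulas} $J$ of $H$, asserting that $\relsem{J}$ is unchanged whether the $\phi$ slot holds the iterate $T^n_{H,\Eq_\rho} K_0$ or the equality transformer $\Eq_{(T^\set_{H,\rho})^n K_0}$. This second statement is precisely what lets one swap out the non-equality value at $\phi$ without already knowing the full transformer identity, and its proof in the nested-$\mu$ subcase uses essentially the syntactic restriction in Definition~\ref{def:wftypes} that the body of an inner $\mu$ cannot mention $\phi$ or $\ol\alpha$. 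Your sketch does not invoke this restriction, and without it the fixpoint case does not close. As a secondary point, threading a ``parallel Graph Lemma'' through the main induction is not needed: the paper proves the IEL directly and then obtains the Graph Lemma afterward from the IEL together with Lemma~\ref{lem:rel-transf-morph}.
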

\proof
The proof is by induction on the structure of $F$. Only the $\Nat$,
application, and fixpoint cases are non-routine.
\begin{itemize}
\item $\relsem{\Gamma; \Phi \vdash \zerot} \Eq_{\rho} = 0_\rel =
  \Eq_{0_\set} = \Eq_{\setsem{\Gamma; \Phi \vdash \zerot}\rho}$
\item $\relsem{\Gamma; \Phi \vdash \onet} \Eq_{\rho} = 1_\rel =
  \Eq_{1_\set} = \Eq_{\setsem{\Gamma; \Phi \vdash \onet}\rho}$
\item By definition, $\relsem{\Gamma; \emptyset \vdash
  \Nat^{\overline\alpha} \,F\,G} \Eq_{\rho}$ is the relation on
  $\setsem{\Gamma; \emptyset \vdash \Nat^{\overline\alpha} \,F\,G}
  \rho$ relating $t$ and $t'$ if, for all ${R_1 :
    \rel(A_1,B_1)},\dots,{R_k : \rel(A_k,B_k)}$, $(t_{\overline{A}},
  t'_{\overline{B}})$ is a morphism from $\relsem{\Gamma;
    \overline\alpha \vdash F} \Eq_{\rho}\overline{[\alpha := R]}$ to
  $\relsem{\Gamma ; \overline\alpha \vdash G}
  \Eq_{\rho}\overline{[\alpha := R]}$ in $\rel$.  To prove that this
  relation is $\Eq_{\setsem{\Gamma; \emptyset \vdash
      \Nat^{\overline\alpha} \,F\,G} \rho}$ we must show that
  $(t_{\overline{A}}, t'_{\overline{B}})$ is a morphism from
  $\relsem{\Gamma; \overline\alpha \vdash F}
  \Eq_{\rho}\overline{[\alpha := R]}$ to $\relsem{\Gamma ;
    \overline\alpha \vdash G} \Eq_{\rho}\overline{[\alpha := R]}$ in
  $\rel$ for all ${R_1 : \rel(A_1,B_1)},\dots,{R_k : \rel(A_k,B_k)}$ if
  and only if $t = t'$.  We first show that if $(t_{\overline{A}},
  t'_{\overline{B}})$ is a morphism from $\relsem{\Gamma;
    \overline\alpha \vdash F} \Eq_{\rho}\overline{[\alpha := R]}$ to
  $\relsem{\Gamma ; \overline\alpha \vdash G}
  \Eq_{\rho}\overline{[\alpha := R]}$ for all ${R_1 : \rel(A_1,B_1),}$
  $\dots,{R_k : \rel(A_k,B_k)}$ then $t = t'$.  By hypothesis, for all
  $A_1\,\dots\,A_k : \set$, $(t_{\overline{A}}, t'_{\overline{A}})$ is a
  morphism from $\relsem{\Gamma; \overline\alpha \vdash F}
  \Eq_{\rho}\overline{[\alpha := \Eq_{A}]}$ to $\relsem{\Gamma ;
    \overline\alpha \vdash G} \Eq_{\rho}\overline{[\alpha :=
      \Eq_{A}]}$. By the induction hypothesis, it is therefore a
  morphism from $\Eq_{\setsem{\Gamma; \overline\alpha \vdash F}
    \rho\overline{[\alpha := A]}}$ to $\Eq_{\setsem{\Gamma;
      \overline\alpha \vdash G} \rho\overline{[\alpha := A]}}$ in
  $\rel$. This means that, for all $x : \Eq_{\setsem{\Gamma;
      \overline\alpha \vdash F} \rho\overline{[\alpha := A]}}$,
  $t_{\overline{A}}x = t'_{\overline{A}}x$, so $t = t'$ by
  extensionality. For the converse, we observe that if $t = t'$ then
  the desired conclusion is simply the extra condition in the
  definition of $\setsem{\Gamma;\emptyset \vdash
    \Nat^{\overline\alpha} \,F\,G} \rho$.

\item The application case is proved by the following sequence of
  equalities, where the second equality is by the induction hypothesis
  and the definition of the relation environment $\Eq_\rho$, the third
  is by the definition of application of relation transformers from
  Definition~\ref{def:rel-transf}, and the fourth is by
  Equation~\ref{eq:eq-pres-eq}:
\[
\begin{split}
\relsem{\Gamma; \Phi \vdash \phi\ol{F}}\Eq_{\rho} &=
(\Eq_{\rho}\phi)\ol{\relsem{\Gamma; \Phi \vdash F}
\Eq_{\rho}}\\
&= \Eq_{\rho \phi}\, \ol{\Eq_{\setsem{\Gamma; \Phi \vdash F}
  \rho}}\\
&= (\Eq_{\rho \phi})^* \,\ol{\Eq_{\setsem{\Gamma; \Phi \vdash F}
  \rho}}\\
&= \Eq_{(\rho \phi) \,\ol{\setsem{\Gamma; \Phi \vdash F} \rho}}\\
&= \Eq_{\setsem{\Gamma; \Phi \vdash \phi\ol{F}}\rho}
\end{split}
\]
\item  The fixpoint case is proven by the sequence of equalities
\[
\begin{split}
\relsem{\Gamma; \Phi \vdash (\mu \phi.\lambda
  \ol{\alpha}. H)\ol{F}}\Eq_{\rho}
&=(\mu {T_{H, \Eq_{\rho}}}) \,\ol{\relsem{\Gamma; \Phi \vdash F}\Eq_{\rho}}\\
&= \colim{n \in \nat}{T^n_{H, \Eq_{\rho}} K_0}\, \ol{\relsem{\Gamma; \Phi
  \vdash F}\Eq_{\rho}}\\
&= \colim{n \in \nat}{ T^n_{H, \Eq_{\rho}} K_0 \,\ol{\Eq_{\setsem{\Gamma;
    \Phi \vdash F}\rho}}}\\
&= \colim{n \in \nat}{(\Eq_{(T^\set_{H,\rho})^n K_0})^*
  \ol{\Eq_{\setsem{\Gamma; \Phi \vdash F}\rho}}}\\
&= \colim{n \in \nat}{\Eq_{(T^\set_{H,\rho})^n K_0 \,\ol{\setsem{\Gamma;
        \Phi \vdash F}\rho}}}\\
&= \Eq_{\colim{n \in \nat}{ (T^\set_{H,\rho})^n K_0\,
    \ol{\setsem{\Gamma; \Phi \vdash F}\rho}}}\\
&= \Eq_{\setsem{\Gamma; \Phi \vdash (\mu \phi.\lambda
      \ol{\alpha}. H)\ol{F}}\rho}
\end{split}
\]
Here, the third equality is by induction hypothesis, the fifth is by
Equation~\ref{eq:eq-pres-eq}, and the fourth equality is because, for
every $n \in \nat$, the following two statements can be proved by
simultaneous induction: for any $H$, $\rho$, and $A$,
\begin{equation}\label{eq:iel-fix-point-intermediate1}
T^n_{H, \Eq_{\rho}} K_0\, \ol{\Eq_A} = (\Eq_{(T^\set_{H, \rho})^n K_0})^*
\ol{\Eq_A}
\end{equation}
and for any subformula $J$ of $H$,
\begin{equation}\label{eq:iel-fix-point-intermediate2}
\begin{split}
& \relsem{\Gamma; \phi, \ol{\alpha} \vdash J} \Eq_{\rho} [\phi
    := T^{n}_{H,\Eq_{\rho}} K_0] \overline{[\alpha := \Eq_A]} \\
=\;\; & \relsem{\Gamma; \phi, \ol{\alpha} \vdash J} \Eq_{\rho}
[\phi := \Eq_{(T^\set_{H,\rho})^n K_0}] \overline{[\alpha := \Eq_A]}
\end{split}
\end{equation}
The case $n=0$ is trivial:
Equation~\ref{eq:iel-fix-point-intermediate1} and
Equation~\ref{eq:iel-fix-point-intermediate2} both hold because
$T^0_{H,\Eq_{\rho}} K_0 = K_0$, $(T^\set_{H,\rho})^0 K_0 = K_0$, and
Equation~\ref{eq:eq-pres-eq} holds.

The inductive case is proved as follows.  We prove
Equation~\ref{eq:iel-fix-point-intermediate1} by the following
sequence of equalities:
\[
\begin{split}
T^{n+1}_{H,\Eq_{\rho}} K_0\, \overline{\Eq_A}
&= T^\rel_{H,\Eq_{\rho}} (T^{n}_{H,\Eq_{\rho}} K_0)
\overline{\Eq_A} \\
&= \relsem{\Gamma; \phi, \ol{\alpha} \vdash H} \Eq_{\rho} [\phi
  := T^{n}_{H, \Eq_{\rho}} K_0] \overline{[\alpha :=
    \Eq_A]} \\
&= \relsem{\Gamma; \phi, \ol{\alpha} \vdash H} \Eq_{\rho} [\phi
  := \Eq_{(T^\set_{H,\rho})^{n} K_0}] \overline{[\alpha :=
    \Eq_A]} \\
&= \relsem{\Gamma; \phi, \ol{\alpha} \vdash H} \Eq_{\rho [\phi
    := (T^\set_{H,\rho})^{n} K_0] \overline{[\alpha :=
      A]}} \\
&= \Eq_{\setsem{\Gamma; \phi, \ol{\alpha} \vdash H} \rho [\phi
    := (T^\set_{H,\rho})^{n} K_0] \overline{[\alpha :=
      A]}} \\
&= \Eq_{(T^\set_{H,\rho})^{n+1} K_0 \overline{A}} \\
&= (\Eq_{(T^\set_{H,\rho})^{n+1} K_0})^*\, \overline{\Eq_A}
\end{split}
\]
Here, the third equality is by the induction hypothesis of
Equation~\ref{eq:iel-fix-point-intermediate2} for $J = H$, the fifth
by the induction hypothesis of the IEL on $H$, and the last is by
Equation~\ref{eq:eq-pres-eq}.

\vspace*{0.1in}

We prove Equation~\ref{eq:iel-fix-point-intermediate2} by structural
induction on $J$. The only interesting cases, though, are when $J =
\phi \ol{G}$ and when $J = (\mu \psi.\lambda \ol\beta. G)\ol K$.
\begin{itemize}
\setlength{\itemsep}{0.5em}
\item The case $J = \phi \ol G$ is proved by the sequence of equalities:
\[
\begin{split}
& \relsem{\Gamma; \phi, \ol{\alpha} \vdash \phi
    \ol{G}} \Eq_{\rho} [\phi := T^{n+1}_{H,\Eq_{\rho}} K_0]
  \overline{[\alpha := \Eq_A]}
  \\
&= T^{n+1}_{H,\Eq_{\rho}} K_0\, \overline{\relsem{\Gamma;
      \phi, \ol{\alpha} \vdash G} \Eq_{\rho} [\phi :=
      T^{n+1}_{H,\Eq_{\rho}} K_0] \overline{[\alpha :=
        \Eq_A]}} \\
&= T^{n+1}_{H,\Eq_{\rho}} K_0\, \overline{\relsem{\Gamma;
      \phi, \ol{\alpha} \vdash G} \Eq_{\rho} [\phi :=
      \Eq_{(T^\set_{H,\rho})^{n+1} K_0}] \overline{[\alpha :=
        \Eq_A]}} \\
&= T^{n+1}_{H,\Eq_{\rho}} K_0\, \overline{\relsem{\Gamma;
      \phi, \ol{\alpha} \vdash G} \Eq_{\rho [\phi := (T^\set_{H,\rho})^{n+1}
        K_0] \overline{[\alpha := A]}}} \\
&= T^{n+1}_{H,\Eq_{\rho}} K_0\, \overline{\Eq_{\setsem{\Gamma;
        \phi, \ol{\alpha} \,\vdash G} \rho [\phi :=
        (T^\set_{H,\rho})^{n+1} K_0] \overline{[\alpha :=
          A]}}} \\
&= (\Eq_{(T^\set_{H,\rho})^{n+1} K_0})^* \,\overline{\Eq_{\setsem{\Gamma;
        \phi, \ol{\alpha} \,\vdash G} \rho [\phi :=
        (T^\set_{H,\rho})^{n+1} K_0] \overline{[\alpha :=
          A]}}} \\
&= (\Eq_{(T^\set_{H,\rho})^{n+1} K_0})^* \overline{\relsem{\Gamma;
      \phi, \ol{\alpha} \vdash G} \Eq_{\rho} [\phi :=
      \Eq_{(T^\set_{H,\rho})^{n+1} K_0}] \overline{[\alpha :=
        \Eq_A]}} \\
&= \relsem{\Gamma; \phi, \ol{\alpha} \vdash \phi \ol{G}}
  \Eq_{\rho} [\phi := \Eq_{(T^\set_{H,\rho})^{n+1} K_0}]
  \overline{[\alpha := \Eq_A]}
\end{split}
\]
Here, the second equality is by the induction hypothesis for
Equation~\ref{eq:iel-fix-point-intermediate2} on the $G$s, the fourth
is by the induction hypothesis for the IEL on the $G$s, and the fifth
is by Equation~\ref{eq:iel-fix-point-intermediate1}, which we have
just proved.

\item
  The case $J = (\mu \psi.\, \lambda \ol{\beta}.\, G) \ol{K}$
is proved by the sequence of equalities
\[
\begin{split}
& \relsem{\Gamma; \phi, \ol{\alpha} \vdash (\mu \psi.\, \lambda
    \ol{\beta}.\, G) \ol{K}}
  \Eq_{\rho} [\phi := T^{n+1}_{H,\Eq_{\rho}} K_0]
  [\overline{\alpha := \Eq_A}] \\
&= (\mu {T_{G, \Eq_{\rho}[\phi := T^{n+1}_{H,\Eq_{\rho}} K_0][\overline{\alpha := \Eq_A}]}})
  \,\ol{\relsem{\Gamma; \phi, \ol{\alpha} \vdash K}\Eq_{\rho}
  [\phi := T^{n+1}_{H,\Eq_{\rho}} K_0][\overline{\alpha := \Eq_A}]} \\
&= \colim{m \in \nat} {T^m_{G, \Eq_{\rho}[\phi := T^{n+1}_{H,\Eq_{\rho}} K_0][\overline{\alpha := \Eq_A}]}}\,K_0
  \,(\ol{\relsem{\Gamma; \phi, \ol{\alpha} \vdash K}\Eq_{\rho}
  [\phi := T^{n+1}_{H,\Eq_{\rho}} K_0][\overline{\alpha := \Eq_A}]}) \\
&= \colim{m \in \nat} {T^m_{G, \Eq_{\rho}[\phi := T^{n+1}_{H,\Eq_{\rho}} K_0][\overline{\alpha := \Eq_A}]}}\,K_0
  \,(\ol{\relsem{\Gamma; \phi, \ol{\alpha} \vdash K}\Eq_{\rho}
  [\phi := \Eq_{(T^{\set}_{H,\rho})^{n+1} K_0}][\overline{\alpha := \Eq_A}]}) \\  
&= \colim{m \in \nat} {T^m_{G, \Eq_{\rho}[\phi := \Eq_{(T^{\set}_{H,\rho})^{n+1} K_0}][\overline{\alpha := \Eq_A}]}}\,K_0
  \,(\ol{\relsem{\Gamma; \phi, \ol{\alpha} \vdash K}\Eq_{\rho}
  [\phi := \Eq_{(T^{\set}_{H,\rho})^{n+1} K_0}][\overline{\alpha := \Eq_A}]}) \\
&= (\mu {T_{G, \Eq_{\rho}[\phi := \Eq_{(T^{\set}_{H,\rho})^{n+1} K_0}][\overline{\alpha := \Eq_A}]}})
  \,\ol{\relsem{\Gamma; \phi, \ol{\alpha} \vdash K}\Eq_{\rho}
  [\phi := \Eq_{(T^{\set}_{H,\rho})^{n+1} K_0}][\overline{\alpha := \Eq_A}]} \\
&= \relsem{\Gamma; \phi, \ol{\alpha} \vdash (\mu \psi.\, \lambda \beta.\, G) \ol{K}}
  \Eq_{\rho}[\phi := \Eq_{(T^{\set}_{H,\rho})^{n+1} K_0}][\overline{\alpha := \Eq_{A}}]
\end{split}
\]
Here, the third equality is by the induction hypothesis for
Equation~\ref{eq:iel-fix-point-intermediate2} on the $K$s, and the
fourth equality holds because we can prove that, for all $m \in \nat$,
\begin{equation}\label{eq:helper}
T^m_{G,
\Eq_{\rho}[\phi := T^{n+1}_{H,\Eq_{\rho}} K_0][\overline{\alpha :=
    \Eq_A}]}\,K_0 \ = T^m_{G, \Eq_{\rho}[\phi :=
    \Eq_{(T^{\set}_{H,\rho})^{n+1} K_0}][\overline{\alpha :=
      \Eq_A}]}\,K_0
\end{equation}
Indeed, the base case of Equation~\ref{eq:helper} is trivial because
\[T^0_{G,
  \Eq_{\rho}[\phi := T^{n+1}_{H,\Eq_{\rho}} K_0][\overline{\alpha :=
      \Eq_A}]}\,K_0 \ = K_0 = T^0_{G, \Eq_{\rho}[\phi :=
    \Eq_{(T^{\set}_{H,\rho})^{n+1} K_0}][\overline{\alpha :=
      \Eq_A}]}\,K_0\]
and the inductive case is proved by: \\
\begin{align*}
& T^{m+1}_{G, \Eq_{\rho}[\phi := T^{n+1}_{H,\Eq_{\rho}} K_0][\overline{\alpha := \Eq_A}]}\,K_0 \\
&= T_{G, \Eq_{\rho}[\phi := T^{n+1}_{H,\Eq_{\rho}} K_0][\overline{\alpha
        := \Eq_A}]} (T^{m}_{G, \Eq_{\rho}[\phi := T^{n+1}_{H,\Eq_{\rho}}
      K_0][\overline{\alpha := \Eq_A}]}\,K_0) \\
&= T_{G, \Eq_{\rho}[\phi := T^{n+1}_{H,\Eq_{\rho}} K_0][\overline{\alpha
        := \Eq_A}]} (T^{m}_{G, \Eq_{\rho}[\phi :=
      \Eq_{(T^{\set}_{H,\rho})^{n+1} K_0}][\overline{\alpha :=
        \Eq_A}]}\,K_0) \\
&= \lambda \ol{R}. \relsem{\Gamma; \psi,
    \ol\beta \vdash G} \Eq_{\rho} [\phi := T^{n+1}_{H,\Eq_{\rho}}
    K_0][\overline{\alpha := \Eq_A}][\psi := T^{m}_{G, \Eq_{\rho}[\phi
        := \Eq_{(T^{\set}_{H,\rho})^{n+1} K_0}][\overline{\alpha :=
          \Eq_A}]}\,K_0][\overline{\beta := R}] \\
&= \lambda \ol{R}. \relsem{\Gamma; \psi,
    \ol\beta \vdash G} \Eq_{\rho} [\phi := \Eq_{(T^{\set}_{H,\rho})^{n+1}
      K_0}][\overline{\alpha := \Eq_A}][\psi := T^{m}_{G,
      \Eq_{\rho}[\phi := \Eq_{(T^{\set}_{H,\rho})^{n+1}
          K_0}][\overline{\alpha := \Eq_A}]}\,K_0][\overline{\beta :=
      R}] \\
&= T_{G, \Eq_{\rho}[\phi := \Eq_{(T^{\set}_{H,\rho})^{n+1}
        K_0}][\overline{\alpha := \Eq_A}]} (T^{m}_{G, \Eq_{\rho}[\phi
      := \Eq_{(T^{\set}_{H,\rho})^{n+1} K_0}][\overline{\alpha :=
        \Eq_A}]}\,K_0) \\
&= T^{m+1}_{G, \Eq_{\rho}[\phi := \Eq_{(T^{\set}_{H,\rho})^{n+1}
        K_0}][\overline{\alpha := \Eq_A}]}\,K_0
\end{align*}
Here, the second equality holds by the induction hypothesis for
Equation~\ref{eq:helper} on $m$. The fourth equality holds because
$\phi$ and the variables in $\ol\alpha$ do not appear in $G$. (We note
that this case of the IEL would fail to hold if the functorial context
in which the body $F$ of a $\mu$-type $\mu \phi. \lambda
\ol{\alpha}. F$ were formed in Definition~\ref{def:wftypes} were
extended to contain variables other than $\phi$ and the $\alpha$s.
This justifies our design of the $\mu$-type formation rule there.)
\end{itemize}
This concludes the fixpoint case.
\item $\relsem{\Gamma; \Phi \vdash F + G} \Eq_{\rho} =
  \relsem{\Gamma; \Phi \vdash F} \Eq_{\rho} + \relsem{\Gamma;
    \Phi \vdash G} \Eq_{\rho} = \Eq_{\setsem{\Gamma; \Phi \vdash
      F}\rho} + \Eq_{\setsem{\Gamma; \Phi \vdash G}\rho} =
  \Eq_{\setsem{\Gamma; \Phi \vdash F}\rho + \setsem{\Gamma; \Phi
      \vdash G}\rho} = \Eq_{\setsem{\Gamma; \Phi \vdash F +
      G}\rho}$
\item $\relsem{\Gamma; \Phi \vdash F \times G} \Eq_{\rho} =
  \relsem{\Gamma; \Phi \vdash F}\Eq_{\rho} \times \relsem{\Gamma;
    \Phi \vdash G}\Eq_{\rho} = \Eq_{\setsem{\Gamma; \Phi \vdash
      F}\rho} \times \Eq_{\setsem{\Gamma; \Phi \vdash G}\rho}
  = \Eq_{\setsem{\Gamma; \Phi \vdash F}\rho \times
    \setsem{\Gamma; \Phi \vdash G}\rho} = \Eq_{\setsem{\Gamma;
      \Phi \vdash F \times G}\rho}$\qedhere
\end{itemize}

\noindent
With the IEL in hand we can prove a Graph Lemma appropriate to our
setting:
\begin{lem}[Graph Lemma]\label{lem:graph}
If $\rho, \rho' : \setenv$ and $f : \rho \to \rho'$ then
$\graph{\setsem{\Gamma; \Phi \vdash F} f} = \relsem{\Gamma; \Phi
  \vdash F}\graph{f}$.
\end{lem}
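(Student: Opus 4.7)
The plan is to prove the Graph Lemma by structural induction on $F$, following exactly the same skeleton as the proof of the IEL (Theorem \ref{thm:iel}) but with graph relations and graph relation transformers in place of equality relations. Indeed, the IEL is the special case where $f = \id_\rho$, using that $\graph{\id_\rho} = \Eq_\rho$, so I expect essentially the same argument to go through for arbitrary $f$. The base cases $\zerot$ and $\onet$, together with the sum and product cases, are routine, since both the graph construction and the relational interpretation commute definitionally with the relevant constructors.

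For the application case $\phi \ol{F}$, the definition of the graph relation environment gives $\graph{f}\phi = \graph{f\phi}$, so unfolding Definition \ref{def:rel-sem} and applying the induction hypothesis yields
\[\relsem{\Gamma;\Phi \vdash \phi \ol F}\graph{f} = \graph{f\phi}\,\overline{\graph{\setsem{\Gamma;\Phi \vdash F}f}}.\]
A straightforward generalization of Lemma \ref{lem:eq-reln-equalities} from morphisms in $\set$ to natural transformations between functors --- or equivalently, a direct computation through the naturality square \ref{eq:cd2} --- identifies this composite of graph relation transformers with $\graph{\setsem{\Gamma;\Phi \vdash \phi \ol F}f}$, which closes the case.

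For the $\Nat$-type case, I exploit that $\setsem{\Gamma;\emptyset \vdash \Nat^{\ol\alpha} F\,G} f$ is the identity on $\setsem{\Gamma;\emptyset \vdash \Nat^{\ol\alpha} F\,G}\rho$, since the type is formed in an empty functorial context and $\rho|_\tvars = \rho'|_\tvars$ forces the two set interpretations to coincide. The target graph thus collapses to an equality relation on that set. Unfolding Definition \ref{def:rel-sem} for $\relsem{\Gamma;\emptyset \vdash \Nat^{\ol\alpha} F\,G}\graph{f}$, a pair $(t, t')$ lies in this relation precisely when $(t_{\ol A}, t'_{\ol B})$ is a morphism between the appropriately extended relational interpretations of $F$ and $G$ for all $\ol{R}$. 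Instantiating at $\ol{R} = \ol{\Eq_A}$ and invoking the IEL on $F$ and $G$, one argues as in the $\Nat$-type step of Theorem \ref{thm:iel} that this forces $t = t'$, recovering the equality relation as required.

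The hard part will be the $\mu$-type case, which replays the delicate doubly-inductive fixpoint argument of the IEL with graphs in place of equalities. Via Equation \ref{eq:mu}, both $\mu T_{H,\graph{f}}$ and the graph of $\mu T^\set_{H,f}$ can be expressed as $\omega$-directed colimits of iterates, and because the graph construction on natural transformations commutes with such colimits, the problem reduces to proving, by a secondary induction on $n$, that $T^n_{H,\graph{f}} K_0$ coincides with the graph relation transformer associated with the $n$-fold iterate of $T^\set_H$ evaluated at $f$, together with a parallel substitution identity expressing $\relsem{\Gamma;\phi,\ol\alpha \vdash J}\graph{f}[\phi := T^n_{H,\graph{f}} K_0]$ as a graph for every subformula $J$ of $H$ --- precisely analogous to Equations \ref{eq:iel-fix-point-intermediate1}--\ref{eq:iel-fix-point-intermediate2}. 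The inductive step uses the outer Graph Lemma hypothesis applied to the subformulas of $H$, together with the already-settled application case for the occurrences of $\phi$ in $H$, to close the argument. Passing to the colimit then delivers the required identity at $\mu$-types.
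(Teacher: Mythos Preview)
Your structural induction is plausible but takes a very different route from the paper. The paper does \emph{not} re-run the IEL induction; instead it derives the Graph Lemma as a short corollary of Lemma~\ref{lem:rel-transf-morph} together with the IEL itself. Concretely, one applies the functoriality of $\sem{\Gamma;\Phi\vdash F}$ (as an environment transformer) to the two morphisms of relation environments
\[
(f,\id_{\rho'}) : \graph{f}\to\Eq_{\rho'}
\qquad\text{and}\qquad
(\id_\rho,f) : \Eq_\rho \to \graph{f},
\]
reads off the resulting $\rel$-morphisms, and then invokes the IEL to identify $\relsem{F}\Eq_{\rho'}$ and $\relsem{F}\Eq_{\rho}$ with the corresponding equality relations. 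The first application yields $\relsem{F}\graph{f}\subseteq\graph{\setsem{F}f}$ and the second yields the reverse inclusion. No case analysis on $F$ is needed.

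Your approach would work in principle, but it is considerably heavier, and the $\mu$-case as you sketch it glosses over a point worth flagging: since the body $H$ is formed in context $\Gamma;\phi,\ol\alpha$ and $f$ is the identity on all of $\Gamma\subseteq\tvars$, one has $T_{H,\graph{f}}=T_{H,\Eq_\rho}$ outright; there is no genuine ``$n$-fold iterate of $T^\set_H$ evaluated at $f$'' to track. What you actually need is the stronger pointwise statement that $T^n_{H,\Eq_\rho}K_0$ agrees with $\Eq_{(T^\set_{H,\rho})^nK_0}$ on \emph{graph} arguments, not just equality arguments, and this does require re-threading the analogues of Equations~\ref{eq:iel-fix-point-intermediate1}--\ref{eq:iel-fix-point-intermediate2} with $\ol{\graph{g}}$ in place of $\ol{\Eq_A}$, using the outer Graph-Lemma hypothesis on $H$ and its subformulas in lieu of the IEL. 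That can be made to go through, but it duplicates a lot of effort that the paper's two-line argument sidesteps entirely. The payoff of the paper's route is precisely that functoriality plus IEL already encodes all the inductive content; the payoff of yours is only a more self-contained proof that does not appeal to Lemma~\ref{lem:rel-transf-morph}.
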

\proof
Applying Lemma~\ref{lem:rel-transf-morph} to the morphisms $(f,
\id_{\rho'}) : \graph{f} \to \Eq_{\rho'}$ and $(\id_{\rho}, f) :
\Eq_{\rho} \to \graph{f}$ of relation environments gives
\[\begin{array}{ll}
  & (\setsem{\Gamma; \Phi \vdash F}f, \setsem{\Gamma; \Phi \vdash
  F}\id_{\rho'})\\
= & \relsem{\Gamma; \Phi \vdash F} (f, \id_{\rho'}) :
\relsem{\Gamma; \Phi \vdash F}\graph{f} \to \relsem{\Gamma; \Phi
  \vdash F}\Eq_{\rho'}
\end{array}\] and
\[\begin{array}{ll}
  & (\setsem{\Gamma; \Phi \vdash F}\id_{\rho}, \setsem{\Gamma; \Phi
  \vdash F}f)\\
= & \relsem{\Gamma; \Phi \vdash F} (\id_{\rho}, f) : \relsem{\Gamma;
  \Phi \vdash F}\Eq_{\rho} \to \relsem{\Gamma; \Phi \vdash
  F}\graph{f}
\end{array}\]

Expanding the first equation gives that if $(x,y) \in \relsem{\Gamma;
  \Phi \vdash F}\graph{f}$ then
\[(\setsem{\Gamma; \Phi \vdash F} f\,
x, \setsem{\Gamma; \Phi \vdash F}\id_{\rho'}\, y) \in \relsem{\Gamma;
  \Phi \vdash F}\Eq_{\rho'}\] So $\setsem{\Gamma; \Phi \vdash
  F}\id_{\rho'}\, y = \id_{\setsem{\Gamma; \Phi \vdash F}\rho'}\, y =
y$ and $\relsem{\Gamma; \Phi \vdash F}\Eq_{\rho'} =
\Eq_{\setsem{\Gamma; \Phi \vdash F}\rho'}$, and if $(x,y) \in
\relsem{\Gamma; \Phi \vdash F}\graph{f}$ then $(\setsem{\Gamma; \Phi
  \vdash F} f\, x, y) \in \Eq_{\setsem{\Gamma; \Phi \vdash F}\rho'}$,
i.e., $\setsem{\Gamma; \Phi \vdash F} f\, x = y$, i.e., $(x, y) \in
\graph{\setsem{\Gamma; \Phi \vdash F} f}$.  So, we have
$\relsem{\Gamma; \Phi \vdash F}\graph{f} \subseteq
\graph{\setsem{\Gamma; \Phi \vdash F}f}$.

Expanding the second equation gives that if $x \in \setsem{\Gamma;
  \Phi \vdash F}\rho$ then
\[(\setsem{\Gamma; \Phi \vdash F}\id_{\rho}\, x, \setsem{\Gamma;
  \Phi \vdash F} f\, x) \in \relsem{\Gamma; \Phi \vdash F}\graph{f}\]
Then $\setsem{\Gamma; \Phi \vdash F}\id_{\rho}\, x =
\id_{\setsem{\Gamma; \Phi \vdash F}\rho} x = x$, so for any $x \in
\setsem{\Gamma; \Phi \vdash F}\rho$ we have $(x, \setsem{\Gamma;
  \Phi \vdash F}f\, x)$ $\in \relsem{\Gamma; \Phi \vdash F}\graph{f}$.
Moreover, $x \in \setsem{\Gamma; \Phi \vdash F}\rho$ if and only if
\[(x, \setsem{\Gamma; \Phi \vdash F} f\, x) \in \graph{\setsem{\Gamma;
    \Phi \vdash F}f}\] and, if $x \in \setsem{\Gamma; \Phi \vdash
  F}\rho$ then $(x, \setsem{\Gamma; \Phi \vdash F} f\, x) \in
\relsem{\Gamma; \Phi \vdash F} \graph{f}$. Thus, if \[(x, \setsem{\Gamma;
  \Phi \vdash F} f\, x) \in \graph{\setsem{\Gamma; \Phi \vdash F}f}\]
then $(x, \setsem{\Gamma; \Phi \vdash F} f\, x) \in \relsem{\Gamma;
  \Phi \vdash F} \graph{f}$, i.e., $\graph{\setsem{\Gamma; \Phi \vdash
    F}f} \subseteq \relsem{\Gamma; \Phi \vdash F} \graph{f}$.  \qed

\section{Interpreting Terms}\label{sec:term-interp}

If $\Delta = x_1 : F_1,\dots,x_n : F_n$ is a term context for $\Gamma$
and $\Phi$, then the interpretations $\setsem{\Gamma;\Phi \vdash
  \Delta}$ and $\relsem{\Gamma;\Phi \vdash \Delta}$ are defined by
\[\begin{array}{lll}
\setsem{\Gamma;\Phi \vdash \Delta} & = & \setsem{\Gamma;\Phi \vdash
  F_1} \times \cdots \times \setsem{\Gamma;\Phi \vdash F_n}\\
\relsem{\Gamma;\Phi \vdash \Delta} & = & \relsem{\Gamma;\Phi \vdash
  F_1} \times \cdots \times \relsem{\Gamma;\Phi \vdash F_n}\\
\end{array}\]
Every well-formed term $\Gamma;\Phi~|~\Delta \vdash t : F$ then has,
for every $\rho \in \setenv$, a set interpretation
$\setsem{\Gamma;\Phi~|~\Delta \vdash t : F}\rho$ as the component at
$\rho$ of a natural transformation from $\setsem{\Gamma; \Phi \vdash
  \Delta}$ to $\setsem{\Gamma; \Phi \vdash F}$, and, for every $\rho
\in \relenv$, a relational interpretation
$\relsem{\Gamma;\Phi~|~\Delta \vdash t : F}\rho$ as the component at
$\rho$ of a natural transformation from $\relsem{\Gamma; \Phi \vdash
  \Delta}$ to $\relsem{\Gamma; \Phi \vdash F}$.

\begin{defi}\label{def:set-interp}
If $\rho$ is a set (resp., relation) environment and
$\Gamma;\Phi~|~\Delta \vdash t : F$ then
$\setsem{\Gamma;\Phi~|~\Delta \vdash t : F}\rho$ (resp.,
$\relsem{\Gamma;\Phi~|~\Delta \vdash t : F}\rho$) is defined as in
Figure~\ref{fig:term-sem}, where $\mathsf D$ is either $\set$ or
$\rel$ as appropriate.
\end{defi}

If $t$ is closed, i.e., if $\emptyset; \emptyset~|~\emptyset \vdash t
: F$, then we write $\dsem{\vdash t : F}$ instead of $\dsem{\emptyset;
  \emptyset~|~\emptyset \vdash t : F}$.  The interpretations in
Definition~\ref{def:set-interp} respect weakening, i.e., a term and
its weakenings all have the same set and relational
interpretations. Specifically, for any $\rho \in \setenv$,
\[\setsem{\Gamma;\Phi \,|\, \Delta, x : F \vdash t : G}\rho =
(\setsem{\Gamma;\Phi \,|\, \Delta \vdash t : G}\rho) \circ
\pi_{\Delta}\] where $\pi_{\Delta}$ is the projection
$\setsem{\Gamma;\Phi \vdash \Delta, x : F} \to
\setsem{\Gamma;\Phi \vdash \Delta}$. A similar result holds for
relational interpretations.

\begin{figure*}
\begin{adjustbox}{varwidth=7.2in, max width=5.8in, fbox, center}
\[\begin{array}{lll}
\dsem{\Gamma;\Phi \,|\, \Delta,x :F \vdash x : F} \rho& = &
\pi_{|\Delta|+1}\\
\dsem{\Gamma;\emptyset \,|\, \Delta \vdash L_{\overline \alpha} x.t :
  \Nat^{\overline
    \alpha} \,F \,G}\rho & = &  \curry (\dsem{\Gamma;\overline \alpha
  \,|\, \Delta, x : F \vdash t: G}\rho[\overline{\alpha := \_}])\\
\dsem{\Gamma;\Phi \,|\, \Delta \vdash t_{\overline K} s:
  G [\overline{\alpha := K}]}\rho & = & \eval \circ \langle
  \lambda d.\,(\dsem{\Gamma;\emptyset \,|\, \Delta \vdash t :
  \Nat^{\overline{\alpha}} \,F \,G}\rho\; d)_{\overline{\dsem{\Gamma;\Phi
      \vdash K}\rho}},\\
 & & \hspace*{0.5in} \dsem{\Gamma;\Phi \,|\,
    \Delta \vdash s: F [\overline{\alpha := K}]}\rho \rangle\\
& & \\
\dsem{\Gamma;\Phi \,|\, \Delta \vdash \bot_F t : F} \rho& = &
!^0_{\dsem{\Gamma;\Phi \vdash F}\rho} \circ
  \dsem{\Gamma;\Phi~|~\Delta \vdash t : \zerot}\rho, \mbox{ where } \\
 & & \hspace*{0.1in} !^0_{\dsem{\Gamma;\Phi \vdash F}\rho}
\mbox{ is the unique morphism from } 0\\
 & & \hspace*{0.1in} \mbox{ to } \dsem{\Gamma;\Phi \vdash F}\rho\\
\dsem{\Gamma;\Phi \,|\, \Delta \vdash \top : \onet}\rho & = &
!^{\dsem{\Gamma;\Phi\vdash \Delta}\rho}_1, \mbox{ where }
!^{\dsem{\Gamma;\Phi\vdash \Delta}\rho}_1\\
& & \hspace*{0.1in} \mbox{ is the unique morphism from }
\dsem{\Gamma;\Phi\vdash \Delta}\rho \mbox{ to } 1\\
\dsem{\Gamma;\Phi \,|\, \Delta \vdash (s,t) : F \times G} \rho& = &
\langle \dsem{\Gamma;\Phi \,|\, \Delta \vdash s: F} \rho,\,
\dsem{\Gamma;\Phi \,|\, \Delta \vdash t : G} \rho\rangle\\
\dsem{\Gamma;\Phi \,|\, \Delta \vdash \pi_1 t : F} \rho& = &
\pi_1 \circ \dsem{\Gamma;\Phi \,|\, \Delta \vdash t : F \times G}\rho\\
\dsem{\Gamma;\Phi \,|\, \Delta \vdash \pi_2 t : G}\rho & = &
\pi_2 \circ \dsem{\Gamma;\Phi \,|\, \Delta \vdash t : F \times
  G} \rho\\
\dsem{\Gamma;\Phi~|~\Delta \vdash \cse{t}{x \mapsto l}{y \mapsto r} :
  K}\rho & = & \eval \circ \langle \curry \,[\dsem{\Gamma;\Phi
    \,|\, \Delta, x : F \vdash l : K}\rho,\\
   & & \hspace*{0.79in} \dsem{\Gamma;\Phi \,|\, \Delta, y
    : G \vdash r : K}\rho],\\
   & &  \hspace*{0.5in} \dsem{\Gamma;\Phi \,|\, \Delta \vdash t :
  F + G} \rho\rangle\\
\dsem{\Gamma;\Phi \,|\, \Delta \vdash \inl \,s: F + G} \rho& = &
\inl \circ \dsem{\Gamma;\Phi \,|\, \Delta \vdash s: F}\rho\\
\dsem{\Gamma;\Phi \,|\, \Delta \vdash \inr \,t: F + G}\rho & = &
\inr \circ \dsem{\Gamma;\Phi \,|\, \Delta \vdash t : G}\rho\\
\llbracket \Gamma;\emptyset \,|\, \emptyset \vdash \map^{\ol{F},\ol{G}}_H
  : \Nat^\emptyset (\Nat^{\ol\beta,\ol\gamma} F\,G)
& = & \lambda d\, \ol\eta\,\ol{C}.\,
\dsem{\Gamma; \ol{\phi},\ol{\gamma}\vdash H}\id_{\rho[\ol{\gamma:=
      C}]}[\ol{\phi := \lambda \ol B. \eta_{\ol B\, \ol C}}]\\
\hspace*{0.5in}
  (\Nat^{\ol{\gamma}}\,H[\ol{\phi :=_{\ol{\beta}} F}]\,H[\ol{\phi
      :=_{\ol{\beta}} G}]) \rrbracket^{\mathsf D} \rho & &\\
\llbracket \Gamma;\emptyset \,|\, \emptyset \vdash \tin_H :
Nat^{\ol{\beta}} \, H[\phi := (\mu \phi.\lambda {\overline
    \alpha}.H){\overline \beta}][\ol{\alpha := \beta}] & = &
\lambda
d.\,\mathit{in}_{{T}^X_{H,\rho}} \;\;\mbox{ where } X \mbox{ is } \set \mbox{ when } \\
\hspace*{0.79in}(\mu \phi.\lambda {\overline \alpha}.H){\overline
  \beta} \rrbracket^{\mathsf{D}} \rho & & \hspace*{0.2in}
\mathsf{D} = \set \mbox{ and not present when }
\mathsf{D} = \rel\\
\llbracket \Gamma;\emptyset \,|\, \emptyset \vdash
  \fold^F_H : \Nat^\emptyset\;(\Nat^{\ol{\beta}}\,H[\phi
    :=_{\ol{\beta}} F][\ol{\alpha := \beta}]\,F) & = &
\lambda d. \,\mathit{fold}_{T^X_{H,\rho}} \\
\hspace*{0.79in}(\Nat^{{\ol{\beta}} }\,(\mu
  \phi.\lambda \overline \alpha.H)\overline \beta\;F)
\rrbracket^{\mathsf{D}} \rho & & \hspace*{0.2in} \mbox{ where } X \mbox{ is as above
}\vspace*{-0.1in}
\end{array}\]
\caption{Term semantics}\label{fig:term-sem}
\end{adjustbox}\vspace*{-0.05in}
\end{figure*}

The return type for the semantic fold is
  $\dsem{\Gamma;\ol\beta \vdash F}\rho[\ol{\beta := B}]$.
This interpretation gives
\[\dsem{\Gamma;\emptyset \,|\, \Delta
  \vdash L_\empty x. t : \Nat^\emptyset F\, G}\rho = \curry
  (\dsem{\Gamma;\emptyset \,|\, \Delta, x : F \vdash t :
  G}\rho)\]
and
\[\dsem{\Gamma;\emptyset \,|\, \Delta \vdash st:
    G} \rho = \eval \circ \langle \dsem{\Gamma;\emptyset \,|\, \Delta
  \vdash s: \Nat^\emptyset F\, G}\rho, \dsem{\Gamma;\emptyset \,|\, \Delta \vdash
  t: F}\rho \rangle\]

\vspace*{0.05in}

\noindent
so it specializes to the standard interpretations of those System F
terms that are representable in our calculus.  Term interpretation
also respects substitution for both functorial and non-functorial type
variables, as well as term substitution. That is, if
$\Gamma,\alpha;\Phi \,|\, \Delta \vdash t : F$ and $\Gamma;\Phi,\alpha
\,|\, \Delta \vdash t' : F$ and $\Gamma;\Phi \vdash G$ then
\[\dsem{\Gamma;\Phi \,|\, \Delta[\alpha := G] \vdash t[\alpha :=
    G] : F[\alpha := G]}\rho = \dsem{\Gamma,\alpha;\Phi \,|\, \Delta
  \vdash t : F }\rho [ \alpha := \dsem{\Gamma;\Phi\vdash G}\rho ]\]
and
\[\dsem{\Gamma;\Phi \,|\, \Delta[\alpha := G] \vdash t'[\alpha :=
    G] : F[\alpha := G]}\rho = \dsem{\Gamma;\Phi,\alpha \,|\, \Delta
  \vdash t' : F }\rho [ \alpha := \dsem{\Gamma;\Phi\vdash G}\rho ]\]

\vspace*{0.05in}

\noindent
and if $\Gamma;\Phi \,|\, \Delta, x: G \vdash t : F$ and $\Gamma;\Phi
\,|\, \Delta \vdash s : G$ then
\[\lambda A.\, \dsem{\Gamma;\Phi \,|\, \Delta \vdash t[x := s] :
  F }\rho \, A = \lambda A.\,\dsem{\Gamma;\Phi \,|\, \Delta, x: G
  \vdash t : F}\rho \,(A, \dsem{\Gamma;\Phi \,|\, \Delta\vdash s:
  G}\rho\, A)\]

\vspace*{0.05in}

\noindent
Direct calculation further reveals that term interpretations also
satisfy
\[\dsem{\Gamma; \Phi~|~\Delta \vdash
  (L_{\ol{\alpha}}x.t)_{\ol{K}}s} = \dsem{\Gamma; \Phi~|~\Delta \vdash
  t [\ol{\alpha := K}][x := s]}\]
and
\[\dsem{\Gamma; \emptyset~|~\Delta \vdash L_{\ol{\alpha}} x.t_{\ol{\alpha}} x :
   \Nat^{\ol{\alpha}}\, F\, G} = \dsem{\Gamma; \emptyset~|~\Delta
  \vdash t : \Nat^{\ol{\alpha}}\, F\, G}\]


\section{Naturality and the Abstraction Theorem}\label{sec:abs-and-nat}

\subsection{Naturality and Its Consequences}\label{sec:Nat-type-terms}

We first show that terms of $\Nat$-type behave as natural
transformations with respect to their source and target functorial
types, and derive some conseqences of this observation. If $\Gamma;
\ol{\alpha} \vdash F$ then define the \emph{identity} $\id_F$ on $F$ by
$\id_F = \Gamma;\emptyset~|~\emptyset \vdash L_{\ol{\alpha}}x.x :
\Nat^{\ol{\alpha}} F\,F$. Also, recall from Section~\ref{sec:terms} that if
$\Gamma; \emptyset \,|\, \Delta \vdash t: \Nat^{\overline{\alpha}}
F\,G$ and $\Gamma; \emptyset \,|\, \Delta \vdash s:
\Nat^{\overline{\alpha}} G\,H$ are terms then the composition $s \circ
t$ of $t$ and $s$ is defined by $s \circ t = \Gamma; \emptyset\,|\,
\Delta \vdash L_{\overline{\alpha}}
x. s_{\overline{\alpha}}(t_{\overline{\alpha}}x):
\Nat^{\overline{\alpha}} F\,H$. Then \[\setsem{\Gamma; \emptyset \,|\,
  \emptyset \vdash \id_{F} : \Nat^{\ol{\alpha}} F\,F} \rho\, \ast =
\id_{\lambda \ol{A}. \setsem{\Gamma; \ol{\alpha} \vdash F} \rho
  [\ol{\alpha := A}]}\] for any set environment $\rho$, and
\[\setsem{\Gamma; \emptyset \,|\, \Delta \vdash s \circ t:
  \Nat^{\overline{\alpha}} F\,H} = \setsem{\Gamma; \emptyset \,|\,
  \Delta \vdash s: \Nat^{\overline{\alpha}} G\,H} \circ
\setsem{\Gamma; \emptyset \,|\, \Delta \vdash t:
  \Nat^{\overline{\alpha}} F\,G}\]
Naturality of term interpretations is then easily verified:
\begin{thm}\label{eq:ft-from-nat}
  If\, $\Gamma; \emptyset \,|\, \Delta \vdash s : \Nat^{\overline{\alpha},
  \overline{\gamma}} F\,G$ and $\overline{\Gamma; \emptyset \,|\,
  \Delta \vdash t : \Nat^{\overline{\gamma}} K\, H}$, then
  \[\begin{array}{l}
  \hspace*{0.135in}
\setsem{\Gamma; \emptyset\,|\, \Delta
  \vdash
  ((\map_G^{\overline{K}, \overline{H}})_\emptyset \,\overline{t}) \circ
(L_{\overline{\gamma}} z. s_{\overline{K}, \overline{\gamma}}
  z)
  : \Nat^{\overline{\gamma}} F[\overline{\alpha := K}]\,
  G[\overline{\alpha := H}]}\\
= \setsem{ \Gamma; \emptyset \,|\, \Delta \vdash
(L_{\overline{\gamma}} z.
  s_{\overline{H}, \overline{\gamma}} z)\circ
  ((\map_F^{\overline{K}, \overline{H}})_\emptyset \,
  \overline{t})  : \Nat^{\overline{\gamma}} F[\overline{\alpha :=
      K}]\, G[\overline{\alpha := H}]}
\end{array}\]
\end{thm}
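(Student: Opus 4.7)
The plan is to reduce the identity to a single naturality square for $s$ at each choice of $\overline{C}$, and to derive that square from the parametricity clause in the interpretation of $\Nat$-types specialised to graph relations, with the Graph Lemma providing the bridge between relations and the semantic functorial actions used by $\map$.

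First I would unfold both sides according to Figure~\ref{fig:term-sem}. By extensionality it suffices to check equality of the two natural transformations pointwise in $\rho \in \setenv$, in $d \in \setsem{\Gamma;\emptyset \vdash \Delta}\rho$, in $\overline{C}$, and in an argument $x$. Using the substitution identities (\ref{eq:subs-var}) and (\ref{eq:subs-const}) together with the $\map$-clause of Figure~\ref{fig:term-sem}, the left-hand side evaluates to $\setsem{G}f_{\overline{C}}\,(s_{\overline{K^C},\overline{C}}\,x)$ and the right-hand side to $s_{\overline{H^C},\overline{C}}\,(\setsem{F}f_{\overline{C}}\,x)$. Here $K_i^C = \setsem{K_i}\rho[\overline{\gamma := C}]$ and likewise for $H_i^C$, while $f_{\overline{C}}$ is the morphism of set environments $\rho[\overline{\gamma := C}][\overline{\alpha := K^C}] \to \rho[\overline{\gamma := C}][\overline{\alpha := H^C}]$ that is the identity away from $\overline{\alpha}$ and whose $\alpha_i$-component is the function $(d'_i)_{\overline{C}}$, where $d'_i = \setsem{t_i}\rho\,d$. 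The goal therefore reduces to the naturality square
\[ \setsem{G}f_{\overline{C}} \circ s_{\overline{K^C},\overline{C}} = s_{\overline{H^C},\overline{C}} \circ \setsem{F}f_{\overline{C}}. \]

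Next I would invoke the parametricity clause in the third line of Definition~\ref{def:set-sem} applied to $\setsem{s}\rho\,d$, instantiating $R_i := \graph{(d'_i)_{\overline{C}}}$ for each $\alpha_i$ and $R_j := \Eq_{C_j}$ for each $\gamma_j$. Since $\Eq = \graph{\id}$ and graph relation environments are defined componentwise, the resulting relation environment coincides exactly with $\graph{f_{\overline{C}}}$. The parametricity clause then asserts that $(s_{\overline{K^C},\overline{C}}, s_{\overline{H^C},\overline{C}})$ is a morphism from $\relsem{F}\graph{f_{\overline{C}}}$ to $\relsem{G}\graph{f_{\overline{C}}}$ in $\rel$. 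The key step is then the Graph Lemma (Lemma~\ref{lem:graph}), which gives $\relsem{F}\graph{f_{\overline{C}}} = \graph{\setsem{F}f_{\overline{C}}}$ and similarly for $G$. Unfolding the morphism condition on graph relations yields that whenever $\setsem{F}f_{\overline{C}}(x) = y$, also $\setsem{G}f_{\overline{C}}(s_{\overline{K^C},\overline{C}}\,x) = s_{\overline{H^C},\overline{C}}\,y$; specialising $y := \setsem{F}f_{\overline{C}}(x)$ produces exactly the required naturality square.

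The main obstacle will be the bookkeeping needed to align two views of $f_{\overline{C}}$: as the morphism of set environments whose functorial images under $\setsem{F}$ and $\setsem{G}$ are the $\overline{C}$-components of $(\map_F^{\overline{K},\overline{H}})_\emptyset\,\overline{t}$ and $(\map_G^{\overline{K},\overline{H}})_\emptyset\,\overline{t}$, and as the projection to $\setenv$ of the graph relation environment used in the parametricity instantiation. Both identifications reduce to direct applications of the substitution identities and of the componentwise definitions of graph and equality relation environments, so once the setup is in place the desired equation falls out of parametricity and the Graph Lemma with no further computation.
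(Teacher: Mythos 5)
Your proposal is correct, but it takes a genuinely different route from the paper's. The paper files this theorem under ``consequences of naturality'' precisely because its intended proof never touches the relational layer: by the third clause of Definition~\ref{def:set-sem}, the interpretation $\setsem{s}\rho\,d$ is \emph{already} a natural transformation $\lambda \ol{A}\,\ol{C}.\setsem{F}\rho[\ol{\alpha:=A}][\ol{\gamma:=C}] \Rightarrow \lambda \ol{A}\,\ol{C}.\setsem{G}\rho[\ol{\alpha:=A}][\ol{\gamma:=C}]$, so after the same unfolding you perform (which correctly identifies the two $\map$ components with $\setsem{G}f_{\ol{C}}$ and $\setsem{F}f_{\ol{C}}$), the desired equation is literally the naturality square of $\setsem{s}\rho\,d$ at the environment morphism $f_{\ol{C}}$, and one is done. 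You instead obtain that square by instantiating the relation-preservation clause of the $\Nat$-type interpretation at the graph relations $\graph{(d'_i)_{\ol{C}}}$ and $\Eq_{C_j}$ and then applying the Graph Lemma (Lemma~\ref{lem:graph}) to convert $\relsem{F}\graph{f_{\ol{C}}}$ into $\graph{\setsem{F}f_{\ol{C}}}$. This is the classical ``parametricity at graph relations implies naturality'' argument, and every step checks out (the identification of $\Eq_\rho[\ol{\alpha:=\graph{(d'_i)_{\ol{C}}}}][\ol{\gamma:=\Eq_C}]$ with $\graph{f_{\ol{C}}}$ is exactly right, since graph environments are defined componentwise and $\Eq=\graph{\id}$). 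What it costs you is dependence on the Graph Lemma, and hence on the Identity Extension Lemma and Lemma~\ref{lem:rel-transf-morph}, where the paper's argument needs only the definition of the set interpretation of $\Nat$-types; what it buys is nothing here, since the naturality condition you ultimately extract is packaged into the very same set $\setsem{\Nat^{\ol{\alpha},\ol{\gamma}}F\,G}\rho$ whose extra relational condition you invoke. The one conceptual point worth internalizing is that the paper deliberately draws a line between free theorems that follow from naturality alone (this one) and those requiring full parametricity (Section~\ref{sec:ftnt}); your proof, while valid, places this theorem on the wrong side of that line.
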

\noindent

Theorem~\ref{eq:ft-from-nat} gives rise to an entire family of free
theorems that are consequences of naturality, and thus do not require
the full power of parametricity. In particular,
Definition~\ref{def:set-interp} alone ensures that standard properties
of the initial algebraic constructs $\map$, $\tin$, and $\fold$
hold. We have, for example, that the interpretation of every $\map_H$
is a functor, i.e., if $\Gamma; \ol\alpha, \ol\gamma \vdash H$,
$\ol{\Gamma; \emptyset \,|\, \Delta \vdash g : \Nat^{\ol\gamma} F \,
  G}$, and $\ol{\Gamma; \emptyset \,|\, \Delta \vdash f :
  \Nat^{\ol\gamma} G \, K}$, then
  \begin{align*}
&\setsem{
\Gamma; \emptyset \,|\, \Delta \vdash
  (\map^{\ol{F}, \ol{K}}_H \,)_{\emptyset} \, \ol{(f \circ g)}
  : \Nat^{\ol\gamma} H[\ol{\alpha := F}] \, H[\ol{\alpha := K}]} \\
= \hspace{0.03in}
  &\setsem{
  \Gamma; \emptyset \,|\, \Delta \vdash
  (\map^{\ol{G}, \ol{K}}_H \,)_{\emptyset} \, \ol{f} \circ
  (\map^{\ol{F}, \ol{G}}_H \,)_{\emptyset} \, \ol{g}
  : \Nat^{\ol\gamma} H[\ol{\alpha := F}] \, H[\ol{\alpha := K}]}
\end{align*}
In fact, we have that if $\Gamma; \ol{\alpha} \vdash H$, $\ol{\Gamma;
  \emptyset \vdash F}$, and $\ol{\Gamma; \emptyset \vdash G}$, then,
for all $\ol{f \in \setsem{\Gamma;\emptyset \vdash \Nat^\emptyset
    F\,G}\rho}$,
\[\begin{array}{rl}
&\setsem{ \Gamma;\emptyset\,|\,\ol{x : \Nat^\emptyset F \, G}
  \vdash (\map^{\ol{F},\ol{G}}_H)_\emptyset\,\ol x :
  \Nat^{\emptyset}\,H[\ol{\alpha := F}]\,H[\ol{\alpha := G}]}
\rho\,\ol f\\ =& \setsem{\Gamma; \ol{\alpha}\vdash H}
\id_{\rho}[\ol{\alpha := f}] \\ =&
\mathit{map}_{\lambda \ol{A}.\,\setsem{\Gamma; \ol{\alpha}\vdash H} \rho [\ol{\alpha := A}]} \ol{f} \\
\end{array}\]
Here, we obtain the first equality from the appropriate instance of
Definition~\ref{def:set-interp}, and the second one by noting that
$\lambda \ol{A}.\,\setsem{\Gamma; \ol{\alpha}\vdash H} \rho
[\ol{\alpha := A}]$ is a functor in $\ol{A}$ and using
$\mathit{map}_G\,\ol f$ to denote the action of the semantic functor
$G$ on morphisms $\ol f$. We also have, e.g., that:

\vspace*{0.1in}

\noindent
$\bullet$\; $\map$ is a higher-order functor, i.e., if $\Gamma;
\ol{\psi}, \ol{\gamma} \vdash H$,\; $\ol{\Gamma; \ol{\alpha},
  \ol{\gamma}, \ol{\phi} \vdash K}$,\; $\ol{\Gamma; \ol{\beta},
  \ol{\gamma} \vdash F}$,\; $\ol{\Gamma; \ol{\beta}, \ol{\gamma}
  \vdash G}$, and $\xi = \Nat^{\emptyset} (\ol{\Nat^{\ol{\alpha},
    \ol{\beta}, \ol{\gamma}} F\, G})\, (\Nat^{\ol{\gamma}} H[\ol{\psi
    := K}][\ol{\phi := F}]\, H[\ol{\psi := K}][\ol{\phi := G}])$, then

\vspace*{-0.08in}

\[\setsem{\Gamma; \emptyset \,|\, \emptyset \vdash
\map_{H[\ol{\psi := K}]}^{\ol{F}, \ol{G}} : \xi} = \setsem{\Gamma;
  \emptyset \,|\, \emptyset \vdash L_\emptyset \ol
  z. (\map_H^{\ol{K[\ol{\phi := F}]}, \ol{K[\ol{\phi :=
          G}]}})_\emptyset \,(\ol{(\map_K^{\ol{F}, \ol{G}})_\emptyset
    \, \ol z}) : \xi}\]

\vspace*{0.1in}

\noindent
$\bullet$\; $\fold$ behaves as expected, i.e., if $\xi =
\Nat^{\overline{\beta}}\, H[\phi :=
  (\mu \phi. \lambda
  \overline{\alpha}. H)\overline{\beta}][\overline{\alpha := \beta}]\;
F$, then

\vspace*{-0.05in}

\[\hspace*{-1in}\setsem{\Gamma; \emptyset \,|\, x:
  \Nat^{\overline{\beta}} H[\phi :=
    F][\overline{\alpha := \beta}]\, F \vdash ((\fold_{H,
    F})_{\emptyset} x) \circ \tin_{H} : \xi}\]

\vspace*{-0.15in}

\[= \setsem{ \Gamma; \emptyset \,|\, x: \Nat^{\overline{\beta}} H[\phi
    := F][\overline{\alpha := \beta}]\, F
  \vdash x \circ \big( (\map_{H [\ol{\alpha := \beta}]}^{(\mu
    \phi. \lambda \overline{\alpha} H) \overline{\beta},
    F})_\emptyset ((\fold_{H, F})_{\emptyset} x) \big) : \xi}\]

\vspace*{0.1in}

\noindent
$\bullet$\; $\tin$ has a right inverse, i.e., if $\xi =
\Nat^{\overline{\beta}}\, (\mu \phi. \lambda
\overline{\alpha}. H)\overline{\beta}\; (\mu \phi. \lambda
\overline{\alpha}. H)\overline{\beta}$, then

\vspace*{-0.05in}

\[\hspace*{0.43in}\setsem{\Gamma; \emptyset \,|\, \emptyset
  \vdash \tin_H \circ (\fold_{H, H[\phi := (\mu \phi. \lambda
      \overline{\alpha}. H)\overline{\beta}]})_{\emptyset}
  ((\map_{H}^{H[\phi := (\mu \phi. \lambda
      \overline{\alpha}. H)\overline{\beta}][\overline{\alpha :=
        \beta}], (\mu \phi. \lambda
    \overline{\alpha}. H)\overline{\beta}})_\emptyset \tin_H) :
  \xi}\]

\vspace*{-0.15in}

\[\hspace*{-3.46in} = \setsem{\Gamma; \emptyset \,|\, \emptyset \vdash \mathit{id}_{(\mu  \phi. \lambda \overline{\alpha}. H)\overline{\beta}} : \xi}\]

\vspace*{0.15in}

\noindent
$\bullet$\; $\tin$ has a left inverse, i.e., if $\xi =
\Nat^{\overline{\beta}} H[\phi := (\mu
  \phi. \lambda \overline{\alpha}. H)\overline{\beta}]\, H[\phi :=
  (\mu \phi. \lambda \overline{\alpha}. H)\overline{\beta}]$, then

\vspace*{-0.05in}

\[\hspace*{0.41in}\setsem{\Gamma; \emptyset \,|\, \emptyset \vdash
(\fold_{H, H[\phi := (\mu \phi. \lambda
      \overline{\alpha}. H)\overline{\beta}]})_{\emptyset}
  ((\map_{H}^{H[\phi := (\mu \phi. \lambda
      \overline{\alpha}. H)\overline{\beta}][\overline{\alpha :=
        \beta}], (\mu \phi. \lambda
    \overline{\alpha}. H)\overline{\beta}})_\emptyset \tin_H) \circ
  \tin_H : \xi}\]

\vspace*{-0.15in}

\[\hspace*{-3.15in} =\setsem{ \Gamma; \emptyset \,|\, \emptyset \vdash
  \mathit{id}_{H[\phi := (\mu \phi. \lambda
      \overline{\alpha}. H)\overline{\beta}]} : \xi}\]

\vspace*{0.1in}

\noindent
Analogous results hold for relational interpretations of terms and
relation environments.

\subsection{The Abstraction Theorem}\label{sec:thms}

To get consequences of parametricity that are not merely consequences
of naturality, we prove an Abstraction Theorem
(Theorem~\ref{thm:abstraction}) for our calculus. As is usual for such
theorems, we first prove a more general result
(Theorem~\ref{thm:at-gen}) for open terms, and recover our Abstraction
Theorem as a special case of it for closed terms of closed type.

\begin{thm}\label{thm:at-gen}
Every well-formed term $\Gamma;\Phi~|~\Delta \vdash t : F$ induces
a morphism from $\sem{\Gamma; \Phi \vdash \Delta}$ to
$\sem{\Gamma; \Phi \vdash F}$, i.e., a triple of natural
transformations
\[(\setsem{\Gamma;\Phi~|~\Delta \vdash t : F},\,
\setsem{\Gamma;\Phi~|~\Delta \vdash t : F},\,
\relsem{\Gamma;\Phi~|~\Delta \vdash t : F})\]
where, for $\mathsf{D} \in \{\set,\rel\}$ and for $\rho \in \setenv$
or $\rho \in \relenv$ as appropriate,
\[\begin{array}{lll}
\dsem{\Gamma;\Phi~|~\Delta \vdash t : F} & : & \dsem{\Gamma;
  \Phi \vdash \Delta} \to \dsem{\Gamma; \Phi \vdash F}
\end{array}\]
has as its component at $\rho : \setenv$ a morphism
\[\begin{array}{lll}
\setsem{\Gamma;\Phi~|~\Delta \vdash t : F}\rho & : & \setsem{\Gamma;
  \Phi \vdash \Delta}\rho \to \setsem{\Gamma; \Phi \vdash F}\rho
\end{array}\]
Moreover, for all $\rho : \relenv$,
\begin{equation}\label{eq:projs}
\relsem{\Gamma;\Phi~|~\Delta \vdash t : F}\rho =
(\setsem{\Gamma;\Phi~|~\Delta \vdash t : F}(\pi_1 \rho),\,
\setsem{\Gamma;\Phi~|~\Delta \vdash t : F}(\pi_2 \rho))
\end{equation}
\end{thm}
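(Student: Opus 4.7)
The proof proceeds by structural induction on the derivation of $\Gamma;\Phi \,|\, \Delta \vdash t : F$ from Figure~\ref{fig:terms}. For each case, I need to check three things: (i) that $\setsem{\Gamma;\Phi \,|\, \Delta \vdash t : F}$ is natural in $\rho \in \setenv$, i.e., is a morphism in $[\setenv,\set]$ from $\setsem{\Gamma;\Phi \vdash \Delta}$ to $\setsem{\Gamma;\Phi \vdash F}$; (ii) that $\relsem{\Gamma;\Phi \,|\, \Delta \vdash t : F}$ is natural in $\rho \in \relenv$ and that each component $\relsem{\cdots}\rho$ is a morphism in $\rel$, so that related input tuples are sent to related outputs; and (iii) that Equation~\ref{eq:projs} holds, so that the relational interpretation is compatible by projection with the set interpretation. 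Naturality for the set interpretation and the projection identity (iii) will follow routinely from Definition~\ref{def:set-interp} (Figure~\ref{fig:term-sem}), since each clause is built from categorical constructs ($\eval$, $\curry$, pairing, projections, $\inl$, $\inr$, the initial-algebra operators $\mathit{in}$ and $\mathit{fold}$) that are preserved under the $\pi_1,\pi_2$ projections of relation transformers; the bulk of the work lies in (ii).

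For the variable, $\top$, $\bot_F$, pairing, projection, case, and sum-injection cases, the verification of (ii) reduces immediately to the induction hypotheses, together with the observation that the associated constructors in $\set$ have pointwise relational counterparts in $\rel$ that coincide with the defining clauses of $\relsem{\cdot}$ on sum and product types. The $L$-abstraction case $L_{\overline\alpha} x.t : \Nat^{\overline\alpha} F\,G$ is more delicate: I must check that the curried interpretation actually lies in the cut-down set $\setsem{\Gamma;\emptyset \vdash \Nat^{\overline\alpha} F\,G}\rho$, i.e., that its components satisfy the parametricity side condition. This is exactly the content of the induction hypothesis applied to $\Gamma;\overline{\alpha} \,|\, \Delta, x:F \vdash t : G$ at a suitable relation environment $\rho[\overline{\alpha := R}]$, combined with the IEL (Theorem~\ref{thm:iel}) to get that equality relations on the $\Delta$-components are preserved. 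Dually, the application case $t_{\overline{K}} s$ uses the induction hypothesis on $t$ at equality relations $\Eq_\rho$ extended by $\relsem{\Gamma;\Phi \vdash K}\Eq_\rho$, which by the IEL coincides with $\Eq_{\setsem{\Gamma;\Phi \vdash K}\rho}$, giving the needed relatedness; substitution identities (\ref{eq:subs-var}) and (\ref{eq:subs-var-morph}) then align the result with the type $G[\overline{\alpha := K}]$.

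The principal obstacle is the $\mu$-connected cases $\map_H^{\overline F,\overline G}$, $\tin_H$, and $\fold^F_H$, because their semantics is given by the semantic operators $\mathit{map}$, $\mathit{in}$, and $\mathit{fold}$ for the higher-order (relation) transformer $T_{H,\rho}$, and I must verify that these operators transport relations correctly. For $\tin_H$, the morphism $(\mathit{in}_1,\mathit{in}_2) : T_{H,\rho}(\mu T_{H,\rho}) \to \mu T_{H,\rho}$ in $RT_k$ from Lemma~\ref{lem:fp} is precisely the required morphism of relation transformers, so (ii) is immediate modulo the substitution identities (\ref{eq:subs-const}) and (\ref{eq:subs-const-morph}). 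For $\fold^F_H$, I will argue by induction on the approximant $n$ in the colimit presentation $\mu T_{H,\rho} = \mathrm{colim}_n\, T_{H,\rho}^n K_0$, using the universal property of the initial algebra and the fact that, by the induction hypothesis on $H$, $T_{H,\rho}$ is an $\omega$-cocontinuous higher-order relation transformer (as recorded in Lemma~\ref{lem:rel-transf-morph}); this guarantees that a relation-preserving algebra $F$ induces a relation-preserving fold. For $\map_H^{\overline F,\overline G}$, I will unfold its semantics as a functorial action $\setsem{\Gamma;\overline\phi,\overline\gamma \vdash H}$ (respectively $\relsem{\cdot}$) applied to the morphisms $\overline\eta$, and appeal directly to Lemma~\ref{lem:rel-transf-morph}, which already packages the fact that type interpretations respect morphisms of environments as morphisms in $RT_k$. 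Naturality in the environment $\rho$ for $\map$, $\tin$, and $\fold$ then follows from functoriality of $T_H$ as a higher-order $\omega$-cocontinuous environment transformer (Definition~\ref{def:howcet}) and the uniqueness properties of initial-algebra morphisms; (iii) is automatic from the definitions in Figure~\ref{fig:term-sem}, which are stated uniformly over $\mathsf{D} \in \{\set,\rel\}$ and already project componentwise.
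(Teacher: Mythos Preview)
Your proposal is correct and follows the same structural induction as the paper, identifying the same substantive cases ($L$-abstraction, application, $\map$, $\tin$, $\fold$). Two small discrepancies are worth flagging. First, your description of the application case is slightly off: the IEL plays no role there. For $t_{\ol K}s$ one works at an \emph{arbitrary} relation environment $\rho$, and the relevant input is simply that $\setsem{t}\rho\,d$ lies in $\setsem{\Nat^{\ol\alpha}F\,G}\rho$ and hence is a natural transformation satisfying the relational side-condition at every $\ol R$, in particular at $\ol{\relsem{K}\rho}$. The paper's nontrivial check for application is instead naturality in $\rho$, which it obtains from the naturality of that element $\eta$ in $\ol\alpha$; you fold this into ``routine,'' which is defensible but hides the one genuine verification. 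Second, for $\fold$ the paper does not carry out an induction on approximants of the colimit $\mu T_{H,\rho}=\colim{n}{T_{H,\rho}^nK_0}$; it argues directly from initiality in $RT_k$: the extra condition on $\eta$ makes $(\eta,\eta)$ a $T_{H,\Eq_\rho}$-algebra, so the unique mediating morphism from $\mu T_{H,\Eq_\rho}$ is $(\mathit{fold}_{T^\set_{H,\rho}}\eta,\mathit{fold}_{T^\set_{H,\rho}}\eta)$, which is automatically a morphism in $\rel$. Your approximant induction would also work, but is more laborious and not what the paper does.
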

\proof The proof is by induction on $t$. It requires showing that set
and relational interpretations of term judgments are natural
transformations, and that all set interpretations of terms of $\Nat$
types satisfy the appropriate equality preservation conditions from
Definition~\ref{def:set-sem}.  For the interesting cases of
abstraction, application, $\map$, $\tin$, and $\mathsf{fold}$ terms,
propagating the naturality conditions is quite involved; the latter
two especially require some rather delicate diagram chasing. That it
is possible provides strong evidence that our development is sensible,
natural, and at an appropriate level of abstraction.

The only interesting cases are the cases for abstraction, application,
$\map$, $\tin$, and $\fold$. We omit the others.

\begin{itemize}
\item $\underline{\Gamma; \emptyset \,|\, \Delta \vdash
  L_{\overline{\alpha}} x.t : \Nat^{\overline{\alpha}} \,F \,G}$ \; To
  see that $\setsem{\Gamma; \emptyset \,|\, \Delta \vdash
    L_{\overline{\alpha}} x.t : \Nat^{\overline{\alpha}} \,F \,G}$ is
  a natural transformation from $\setsem{\Gamma; \emptyset \vdash
    \Delta}$ to $\setsem{\Gamma; \emptyset \vdash
    \Nat^{\overline{\alpha}} \,F \,G}$ we need to show that, for every
  $\rho : \setenv$, $\setsem{\Gamma; \emptyset \,|\, \Delta \vdash
    L_{\overline{\alpha}} x.t : \Nat^{\overline{\alpha}} \,F \,G}\rho$
  is a morphism in $\set$ from $\setsem{\Gamma; \emptyset \vdash
    \Delta}\rho$ to $\setsem{\Gamma; \emptyset \vdash
    \Nat^{\overline{\alpha}} \,F \,G}\rho$, and that such a family of
  morphisms is natural.  First, we need to show that, for all $\ol{A :
    \set}$ and all $d : \setsem{\Gamma; \emptyset \vdash \Delta}\rho =
  \setsem{\Gamma; \ol{\alpha} \vdash \Delta}\rho[\overline{\alpha :=
      A}]$, we have that
\[(\setsem{\Gamma; \emptyset \,|\, \Delta \vdash L_{\overline{\alpha}}
      x.t : \Nat^{\overline{\alpha}} \,F \,G}\rho\,d)_{\ol{A}} :
\setsem{\Gamma; \ol{\alpha} \vdash F}\rho[\overline{\alpha := A}]
    \to \setsem{\Gamma; \ol{\alpha} \vdash G}\rho[\overline{\alpha :=
        A}]\]
but this follows easily from the induction hypothesis.
That these maps comprise a natural transformation $\eta :
\setsem{\Gamma; \ol{\alpha} \vdash F}\rho[\overline{\alpha := \_}] \to
\setsem{\Gamma; \ol{\alpha} \vdash G}\rho[\overline{\alpha := \_}]$ is
clear since \[\eta_{\ol{A}} \, = \, \curry\,
(\setsem{\Gamma; \overline{\alpha} \,|\, \Delta, x : F \vdash t:
  G}\rho[\overline{\alpha := A}])\,d\] is the component at $\ol{A}$ of
the partial specialization to $d$ of the natural transformation
\[\setsem{\Gamma; \overline{\alpha} \,|\, \Delta, x : F \vdash t:
  G}\rho[\overline{\alpha := \_}]\]  To see that the components of
$\eta$ also satisfy the additional condition needed for $\eta$ to
be in $\setsem{\Gamma; \emptyset \vdash \Nat^{\overline{\alpha}} \,F
  \,G}\rho$, let $\overline{R : \rel(A, B)}$ and suppose
\[\begin{array}{lll}
(u, v) &  \in & \relsem{\Gamma;\overline{\alpha} \vdash F}
\Eq_{\rho}[\overline{\alpha := R}]\\
&  = & (\setsem{\Gamma;\overline{\alpha} \vdash F}
\rho[\overline{\alpha := A}], \\
  & &\,\,\setsem{\Gamma;\overline{\alpha} \vdash
  F} \rho[\overline{\alpha := B}],
(\relsem{\Gamma;\overline{\alpha} \vdash F}
\Eq_{\rho}[\overline{\alpha := R}])^*)
\end{array}\]
Then the induction hypothesis and
$(d,d) \in \relsem{\Gamma; \emptyset \vdash \Delta} \Eq_\rho =
\relsem{\Gamma; \emptyset \vdash \Delta} \Eq_\rho[\ol{\alpha := R}]$
ensure that
\[\begin{array}{ll}
& (\eta_{\ol{A}}u,\eta_{\ol{B}}v)\\
= & (\curry\, (\setsem{\Gamma; \overline{\alpha} \,|\, \Delta, x : F
  \vdash t: G}\rho[\overline{\alpha := A}])\,d\,u, \\
  &\hspace{0.5in} \curry\,
(\setsem{\Gamma; \overline{\alpha} \,|\, \Delta, x : F \vdash t:
  G}\rho[\overline{\alpha := B}])\,d\,v)\\
= & \curry\, (\relsem{\Gamma; \overline{\alpha} \,|\, \Delta, x : F
  \vdash t: G}\Eq_\rho[\overline{\alpha := R}])\,(d,d)\,(u,v)\\
: & \relsem{\Gamma; \overline{\alpha} \vdash G}
\Eq_{\rho}[\overline{\alpha := R}]
\end{array}\]
Moreover,
$\setsem{\Gamma; \emptyset \,|\, \Delta \vdash L_{\overline{\alpha}} x.t
: \Nat^{\overline{\alpha}} \,F \,G} \rho$
is trivially natural in $\rho$,
as the functorial action of
$\setsem{\Gamma; \emptyset \vdash \Delta}$
and $\setsem{\Gamma; \emptyset \vdash \Nat^{\overline{\alpha}} \,F \,G}$
on morphisms is the identity.

\item
$\underline{\Gamma;\Phi \,|\, \Delta \vdash t_{\overline K} s: G
  [\overline{\alpha := K}]}$\;
  To see that $\setsem{\Gamma;\Phi \,|\,
  \Delta \vdash t_{\overline K} s: G [\overline{\alpha := K}]}$ is a
  natural transformation from $\setsem{\Gamma; \Phi \vdash \Delta}$ to
  $\setsem{\Gamma;\Phi \vdash G [\overline{\alpha := K}]}$ we
  show that, for every $\rho : \setenv$, \[\setsem{\Gamma;\Phi \,|\,
    \Delta \vdash t_{\overline K} s: G [\overline{\alpha := K}]}\rho\]
  is a morphism from $\setsem{\Gamma; \Phi \vdash \Delta}\rho$ to
  $\setsem{\Gamma;\Phi \vdash G [\overline{\alpha := K}]}\rho$, and
  that this family of morphisms is natural in $\rho$. Let $d :
  \setsem{\Gamma; \Phi \vdash \Delta}\rho$. Then
  \[\begin{array}{ll}
  & \setsem{\Gamma;\Phi \,|\, \Delta \vdash t_{\overline K} s: G
  [\overline{\alpha := K}]}\,\rho\,d\\
= & (\eval \circ \langle (\setsem{\Gamma; \emptyset \,|\, \Delta \vdash
  t : \Nat^{\overline{\alpha}} \,F \,G}\rho\;
\_)_{\overline{\setsem{\Gamma;\Phi \vdash K}\rho}},\,
\setsem{\Gamma;\Phi \,|\, \Delta \vdash s: F [\overline{\alpha :=
      K}]}\rho \rangle)\,d\\
= & \eval ((\setsem{\Gamma; \emptyset \,|\, \Delta \vdash t :
  \Nat^{\overline{\alpha}} \,F \,G}\rho\;
\_)_{\overline{\setsem{\Gamma;\Phi \vdash K}\rho}} \,d,\,
\setsem{\Gamma;\Phi \,|\, \Delta \vdash s: F [\overline{\alpha :=
      K}]}\rho\, d)\\
= & \eval ((\setsem{\Gamma; \emptyset \,|\, \Delta \vdash t :
  \Nat^{\overline{\alpha}} \,F \,G}\rho\;
d)_{\overline{\setsem{\Gamma;\Phi \vdash K}\rho}},\,
\setsem{\Gamma;\Phi \,|\, \Delta \vdash s: F [\overline{\alpha :=
      K}]}\rho\, d)\\
\end{array}\]
The induction hypothesis ensures that \[(\setsem{\Gamma; \emptyset \,|\,
  \Delta \vdash t : \Nat^{\overline{\alpha}} \,F \,G}\rho\,
d)_{\overline{\setsem{\Gamma;\Phi \vdash K}\rho}}\] has type
$\setsem{\Gamma; \ol{\alpha} \vdash F}\rho[\ol{\alpha :=
    \setsem{\Gamma;\Phi \vdash K}\rho}] \to \setsem{\Gamma;
  \ol{\alpha} \vdash G}\rho[\ol{\alpha := \setsem{\Gamma;\Phi \vdash
      K}\rho}]$.  Since, in addition,
\[\begin{array}{ll}
  &\setsem{\Gamma;\Phi \,|\,
  \Delta \vdash s: F [\overline{\alpha := K}]}\rho\, d :
\setsem{\Gamma; \Phi \vdash F[\ol{\alpha := K}]}\rho \\
  =&\setsem{\Gamma; \Phi, \ol{\alpha} \vdash F}\rho[\ol{\alpha :=
    \setsem{\Gamma;\Phi \vdash K}\rho}] \\
    =& \setsem{\Gamma;
  \ol{\alpha} \vdash F}\rho[\ol{\alpha := \setsem{\Gamma;\Phi \vdash
      K}\rho}]
\end{array}\]
      we have that
\[\begin{array}{ll}
  &\setsem{\Gamma;\Phi \,|\, \Delta
  \vdash t_{\overline K} s: G [\overline{\alpha :=
      K}]}\,\rho\,d : \setsem{\Gamma; \Phi ,\ol{\alpha} \vdash
  G}\rho[\ol{\alpha := \setsem{\Gamma;\Phi \vdash K}\rho}]  \\
  =&\setsem{\Gamma; \Phi \vdash G[\ol{\alpha := K}]}\rho
\end{array}\]
as desired.

\vspace*{0.1in}

To see that the family of maps comprising $\setsem{\Gamma;\Phi \,|\,
  \Delta \vdash t_{\overline K} s: G [\overline{\alpha := K}]}$
is natural in $\rho$
we need to show that, if $f : \rho \to \rho'$ in $\setenv$, then the
following diagram commutes, where $g = \setsem{\Gamma; \emptyset \,|\,
  \Delta \vdash t : \Nat^{\overline{\alpha}} \,F \,G}$ and $h =
\setsem{\Gamma;\Phi \,|\, \Delta \vdash s: F [\overline{\alpha :=
      K}]}$:
{\footnotesize
\[\hspace*{-0.6in}
\begin{tikzcd}
\setsem{\Gamma;\Phi \vdash \Delta}\rho \ar[r, "{\setsem{\Gamma;\Phi
  \vdash \Delta}f}"] \ar[d, "{\langle g \rho, h \rho\rangle}"']
& \setsem{\Gamma;\Phi \vdash
  \Delta}\rho' \ar[d, "{\langle g \rho', h \rho' \rangle}"]\\
\setsem{\Gamma;\emptyset \vdash \Nat^{\overline{\alpha}} \,F \,G}\rho
\times \setsem{\Gamma;\Phi \vdash F [\overline{\alpha := K}]}\rho
\ar[d, "{\eval \circ ((-)_{\overline{\sem{\Gamma;\Phi \vdash K}\rho}} \times
    \id)}"']
\ar[r, bend left = 5, "{\setsem{\Gamma;\emptyset\vdash
      \Nat^{\overline{\alpha}} \,F \,G}f\, \times\, \setsem{\Gamma;\Phi
      \vdash F [\overline{\alpha := K}]}f}"] &
\setsem{\Gamma;\emptyset \vdash \Nat^{\overline{\alpha}} \,F \,G}\rho'
\times \setsem{\Gamma;\Phi \vdash F [\overline{\alpha := K}]}\rho'
\ar[d, "{\eval \circ ((-)_{\overline{\sem{\Gamma;\Phi \vdash
          K}\rho'}} \times \id)}"] \\
\setsem{\Gamma;\Phi \vdash G [\overline{\alpha := K}]}\rho
\ar[r, "{\setsem{\Gamma;\Phi \vdash G [\overline{\alpha := K}]}f}"']
&
\setsem{\Gamma;\Phi \vdash G [\overline{\alpha := K}]}\rho'
\end{tikzcd}\]}

\noindent
The top diagram commutes because $g$ and $h$ are natural in $\rho$ by
the induction hypothesis.
To see that the bottom diagram commutes,
we need to show that
\[\begin{array}{ll}
& \setsem{\Gamma;\Phi \vdash G [\overline{\alpha := K}]}f
(\eta_{\overline{\sem{\Gamma;\Phi \vdash K}\rho}} x)\\
= &
(\setsem{\Gamma; \emptyset \vdash \Nat^{\overline{\alpha}} \,F \,G} f\, \eta
)_{\overline{\sem{\Gamma;\Phi \vdash K}\rho'}}
(\setsem{\Gamma;\Phi \vdash F [\overline{\alpha := K}]}f x)
\end{array}\]
holds for all $\eta \in \setsem{\Gamma; \emptyset \vdash
  \Nat^{\overline{\alpha}} \,F \,G}\rho$ and $x \in
\setsem{\Gamma;\Phi \vdash F [\overline{\alpha := K}]}\rho$,
i.e.,
that
\[\begin{array}{ll}
 & \setsem{\Gamma; \ol{\alpha} \vdash G} f[\overline{\alpha :=
    \setsem{\Gamma;\Phi \vdash K} f }] \circ
\eta_{\overline{\setsem{\Gamma;\Phi \vdash K}\rho}} \\
=\;&
\eta_{\overline{\setsem{\Gamma;\Phi \vdash K}\rho'}}
\circ
\setsem{\Gamma; \ol{\alpha} \vdash F} f [\ol{\alpha := \setsem{\Gamma;\Phi \vdash K}f}]
\end{array}\]
for all $\eta \in \setsem{\Gamma;\emptyset \vdash
  \Nat^{\overline{\alpha}} \,F \,G}\rho$.
But this follows from the naturality of $\eta$, which indeed ensures
the commutativity of
{\footnotesize
\[\hspace*{-0.3in}\begin{tikzcd}[column sep = large]
\setsem{\Gamma; \ol{\alpha} \vdash F}\rho[\ol{\alpha :=
    \setsem{\Gamma;\Phi \vdash K}\rho}] \ar[r,
  "{\;\;\;\eta_{\ol{\setsem{\Gamma;\Phi \vdash K}\rho}}\;\;\; }"]
\ar[d, "{\setsem{\Gamma; \ol{\alpha} \vdash F} f [\ol{\alpha :=
        \setsem{\Gamma;\Phi \vdash K}f}]}"']
& \setsem{\Gamma;
  \ol{\alpha} \vdash G}\rho[\ol{\alpha := \setsem{\Gamma;\Phi \vdash
      K}\rho}]
\ar[d, "{\setsem{\Gamma; \ol{\alpha} \vdash G} f [\ol{\alpha :=
        \setsem{\Gamma;\Phi \vdash K}f}]}"]\\
\setsem{\Gamma; \ol{\alpha} \vdash F}\rho'[\ol{\alpha :=
    \setsem{\Gamma;\Phi \vdash K}\rho'}] \ar[r,
  "{\eta_{\ol{\setsem{\Gamma;\Phi \vdash K}\rho'}} }"]
& \setsem{\Gamma; \ol{\alpha} \vdash G}\rho'[\ol{\alpha :=
    \setsem{\Gamma;\Phi \vdash K}\rho'}]
\end{tikzcd}\]}

\item
  $\underline{\Gamma;\emptyset~|~\emptyset \vdash
  \map^{\ol{F},\ol{G}}_H :
  \Nat^\emptyset\;(\ol{\Nat^{\ol{\beta},\ol{\gamma}}\,F\,G})\;
  (\Nat^{\ol{\gamma}}\,H[\ol{\phi :=_{\ol{\beta}} F}]\,H[\ol{\phi
      :=_{\ol{\beta}} G}])}$\;
To see that
\[
\setsem{\Gamma; \emptyset~|~\emptyset \vdash \map^{\ol{F},\ol{G}}_H
    : \Nat^\emptyset\;(\ol{\Nat^{\ol{\beta},\ol{\gamma}}\,F\,G})\;
    (\Nat^{\ol{\gamma}}\,H[\ol{\phi :=_{\ol{\beta}} F}]\,H[\ol{\phi
        :=_{\ol{\beta}} G}])}\,\rho\,d\, \ol{\eta}
\]
is in $\setsem{\Gamma; \emptyset \vdash
    \Nat^{\ol{\gamma}}\,H[\ol{\phi :=_{\ol{\beta}} F}]\,H[\ol{\phi
        :=_{\ol{\beta}} G}]} \rho$
for all $\rho : \setenv$, $d : \setsem{\Gamma;\emptyset \vdash \emptyset} \rho$,
and \[\ol{\eta : \setsem{\Gamma; \emptyset
  \vdash\Nat^{\ol{\beta},\ol{\gamma}}\,F\,G} \rho}\]
  we first note that
$\setsem{\Gamma ;\ol{\phi}, \ol{\gamma} \vdash H}$ is a functor from
  $\setenv$ to $\set$ and, for any $\ol C$, \[\id_{\rho[\ol{\gamma :=
        C}]}[\ol{\phi := \lambda \ol{B}. \eta_{\ol{B}\,\ol{C}}}]\] is a
  morphism in $\setenv$ from \[\rho[\ol{\gamma := C}][\ol{\phi :=
      \lambda \ol{B}.\setsem{\Gamma; \ol{\gamma},\ol{\beta} \vdash
        F}\rho[\ol{\gamma := C}][\ol{\beta := B}]}]\] to
\[\rho[\ol{\gamma := C}][\ol{\phi := \lambda \ol{B}.\setsem{\Gamma;
\ol{\gamma},\ol{\beta} \vdash G}\rho[\ol{\gamma := C}][\ol{\beta := B}]}]\]
so that
\begin{align*}
&(\setsem{\Gamma; \emptyset~|~\emptyset \vdash
  \map^{\ol{F},\ol{G}}_H :
\Nat^\emptyset\;(\ol{\Nat^{\ol{\beta},\ol{\gamma}}\,F\,G})\;
(\Nat^{\ol{\gamma}}\,H[\ol{\phi :=_{\ol{\beta}} F}]\,H[\ol{\phi
    :=_{\ol{\beta}} G}])}\,\rho\,d\, \ol{\eta})_{\ol{C}} \\
  &=  \setsem{\Gamma; \ol{\phi},\ol{\gamma} \vdash H}\id_{\rho[\ol{\gamma
      := C}]}[\ol{\phi := \lambda \ol{B}. \eta_{\ol{B}\,\ol{C}}}]
\end{align*}
is indeed a morphism of type
\[\setsem{\Gamma ;\ol{\gamma} \vdash H[\ol{\phi := F}]}\rho[\ol{\gamma
      := C}]
\to
\setsem{\Gamma ;\ol{\gamma} \vdash H[\ol{\phi := G}]}\rho[\ol{\gamma
      := C}]\]
This family of morphisms is natural in $\ol C$: if $\ol{f : C \to C'}$
then, writing $\xi$ for
\[
\setsem{\Gamma; \emptyset~|~\emptyset \vdash \map^{\ol{F},\ol{G}}_H
    : \Nat^\emptyset\;(\ol{\Nat^{\ol{\beta},\ol{\gamma}}\,F\,G})\;
    (\Nat^{\ol{\gamma}}\,H[\ol{\phi :=_{\ol{\beta}} F}]\,H[\ol{\phi
        :=_{\ol{\beta}} G}])}\,\rho\,d\, \ol{\eta}
\]
the naturality of $\eta$,
together with the fact that composition of environments is
computed componentwise, ensure that the following naturality diagram
for $\xi$ commutes:
{\footnotesize
\[\begin{tikzcd}
\setsem{\Gamma ;\ol{\gamma} \vdash H[\ol{\phi := F}]}\rho[\ol{\gamma
      := C}] \ar[r, "{\xi_{\ol{C}}}"]
\ar[d, "{\setsem{\Gamma ;\ol{\gamma} \vdash H[\ol{\phi :=
          F}]}\id_{\rho}[\ol{\gamma := f}]}"']
& \setsem{\Gamma ; \ol{\gamma} \vdash H[\ol{\phi := G}]}\rho[\ol{\gamma
      := C}]
\ar[d, "{\setsem{\Gamma ; \ol{\gamma} \vdash H[\ol{\phi :=
          G}]}\id_{\rho}[\ol{\gamma := f}]}"]\\
\setsem{\Gamma ;\ol{\gamma} \vdash H[\ol{\phi := F}]}\rho[\ol{\gamma
      := C'}] \ar[r, "{\xi_{\ol{C'}}}"]
& \setsem{\Gamma ; \ol{\gamma} \vdash H[\ol{\phi := G}]}\rho[\ol{\gamma
      := C'}]
\end{tikzcd}\]}
That, for all $\rho : \setenv$ and $d : \setsem{\Gamma; \emptyset \vdash
  \emptyset}\rho$, $\xi$ satisfies the additional condition needed for
it to be in $\setsem{\Gamma; \emptyset \vdash
  \Nat^{\ol{\gamma}}\,H[\ol{\phi :=_{\ol{\beta}} F}]\,H[\ol{\phi
      :=_{\ol{\beta}} G}]}\rho$ follows from the fact
that $\eta$ satisfies the extra
condition needed for it to be in its corresponding
$\setsem{\Gamma; \emptyset \vdash \Nat^{\ol{\beta},\ol{\gamma}}\,F\,G} \rho$.
Finally, since $\Phi = \emptyset$, the naturality of
\[
\setsem{\Gamma; \emptyset~|~\emptyset \vdash \map^{\ol{F},\ol{G}}_H
    : \Nat^\emptyset\;(\ol{\Nat^{\ol{\beta},\ol{\gamma}}\,F\,G})\;
    (\Nat^{\ol{\gamma}}\,H[\ol{\phi :=_{\ol{\beta}} F}]\,H[\ol{\phi
        :=_{\ol{\beta}} G}])}\rho
\]
in $\rho$ is trivial.

\item
$\underline{\Gamma;\emptyset \,|\, \emptyset \vdash \tin_H :
  \Nat^{\ol{\beta}} \, H[\phi := (\mu \phi.\lambda
    {\overline \alpha}.H){\overline \beta}][\ol{\alpha := \beta}]
  \;(\mu \phi.\lambda {\overline \alpha}.H){\overline \beta}}$\; To
  see that if $d : \setsem{\Gamma;\emptyset \vdash \emptyset} \rho$
  then \[\setsem{\Gamma;\emptyset \,|\, \emptyset \vdash \tin_H :
    \Nat^{\ol{\beta}} \, H[\phi := (\mu \phi.\lambda
      {\overline \alpha}.H){\overline \beta}][\ol{\alpha := \beta}]
    \;(\mu \phi.\lambda {\overline \alpha}.H){\overline \beta}}\,
  \rho\,d\]
  is in $\setsem{\Gamma;\emptyset \vdash
    \Nat^{\ol{\beta}} \, H[\phi := (\mu \phi.\lambda
      {\overline \alpha}.H){\overline \beta}][\ol{\alpha := \beta}]
    \;(\mu \phi.\lambda {\overline \alpha}.H){\overline \beta}}\,
  \rho$, we first note that, for all $\ol{B}$,
  \begin{align*}
    &(\setsem{\Gamma;\emptyset \,|\, \emptyset \vdash \tin_H :
    \Nat^{\ol{\beta}} \, H[\phi := (\mu \phi.\lambda
      {\overline \alpha}.H){\overline \beta}][\ol{\alpha := \beta}]
    \;(\mu \phi.\lambda {\overline \alpha}.H){\overline \beta}}\,
  \rho\,d)_{\ol{B}}\, \\
    =\,
    &(\mathit{in}_{T^\set_{H,\rho}})_{\ol{B}}
  \end{align*}
  maps
  \begin{align*}
    &\setsem{\Gamma;\ol{\beta} \vdash H[\phi := (\mu
      \phi.\lambda {\overline \alpha}.H){\overline \beta}][\ol{\alpha
        := \beta}]}\rho[\ol{\beta := B}]  \\ = \,
    &T^\set_{H,\rho}\, (\mu T^\set_{H,\rho}) \, \ol{B}
  \end{align*}
        to $\setsem{\Gamma;\ol{\beta}
    \vdash (\mu \phi.\lambda {\overline \alpha}.H){\overline \beta}}
  \rho[\ol{\beta := B}] = (\mu
  T^\set_{H,\rho}) \, \ol{B}$.
  Secondly, we observe
  that
  \begin{align*}
    &\setsem{\Gamma;\emptyset \,|\, \emptyset \vdash \tin_H :
    \Nat^{\ol{\beta}} \, H[\phi := (\mu \phi.\lambda
      {\overline \alpha}.H){\overline \beta}][\ol{\alpha := \beta}]
    \;(\mu \phi.\lambda {\overline \alpha}.H){\overline
      \beta}}\,\rho\,d \\
      = \, &\mathit{in}_{T^\set_{H,\rho}}
  \end{align*}
  is natural in $\ol{B}$, since naturality of
  $\mathit{in}_{T^\set_{H,\rho}}$
  ensures that the following diagram commutes for all $\ol{f : B \to
    B'}$:

  {\tiny
  \[\hspace*{-0.15in}\begin{tikzcd}[column sep=2.5in, row sep=0.75in]
T^\set_{H,\rho}\, (\mu
T^\set_{H,\rho})\, \ol{B} \ar[d, "{T^\set_{H,\rho}\, (\mu T^\set_{H,\rho})\, \ol{f}}"]
\ar[r,"{(\mathit{in}_{T^\set_{H,\rho})_{\ol{B}}}}" ] &
  (\mu T^\set_{H,\rho})\, \ol{B}
\ar[d,"{\mu {T^\set_{H,\rho}}\, {\ol{f}}}" ]
\\
T^\set_{H,\rho}\, (\mu T^\set_{H,\rho})\,
\ol{B'} \ar[r, "{(\mathit{in}_{T^\set_{H,\rho}})_{\ol{B'}}}"] & (\mu
T^\set_{H,\rho})\,
\ol{B'}
\end{tikzcd}\]
}
  That, for all $\rho : \setenv$ and $d :
\setsem{\Gamma;\emptyset \vdash \emptyset}\rho$,
\[\setsem{\Gamma;\emptyset \,|\, \emptyset \vdash
  \tin_H : \Nat^{\ol{\beta}} \, H[\phi := (\mu \phi.\lambda
    {\overline \alpha}.H){\overline \beta}][\ol{\alpha := \beta}]
  \;(\mu \phi.\lambda {\overline \alpha}.H){\overline
    \beta}}\,\rho\,d\] satisfies the additional property needed for
it to be in
\[\setsem{\Gamma;\emptyset \vdash
  \Nat^{\ol{\beta}} \, H[\phi := (\mu \phi.\lambda
    {\overline \alpha}.H){\overline \beta}][\ol{\alpha := \beta}]
  \;(\mu \phi.\lambda {\overline \alpha}.H){\overline \beta}}\,\rho\]
follows from the fact that, for every $\ol{R : \rel(B,B')}$,
\[\hspace*{-0.2in}
  \begin{array}{ll}
 & (\,(\setsem{\Gamma;\emptyset \,|\, \emptyset \vdash \tin_H :
      \Nat^{\ol{\beta}} \, H[\phi := (\mu \phi.\lambda
        {\overline \alpha}.H){\overline \beta}][\ol{\alpha := \beta}]
      \;(\mu \phi.\lambda {\overline \alpha}.H){\overline
        \beta}}\,\rho\,d)_{\ol{B}},\,\\ & \hspace*{0.5in}(\setsem{\Gamma;\emptyset
      \,|\, \emptyset \vdash \tin_H : \Nat^{\ol{\beta}} \,
      H[\phi := (\mu \phi.\lambda {\overline \alpha}.H){\overline
          \beta}][\ol{\alpha := \beta}] \;(\mu \phi.\lambda {\overline
        \alpha}.H){\overline
        \beta}}\,\rho\,d)_{\ol{B'}}\,)\\ &= (\,
    (\mathit{in}_{T^\set_{H,\rho}})_{\ol{B}},
    (\mathit{in}_{T^\set_{H,\rho}})_{\ol{B'}}\,)
\end{array}\]
has type
\[\begin{array}{ll}
& (\, T^\set_{H,\rho}\, (\mu T^\set_{H,\rho}) \, \ol{B} \to (\mu
T^\set_{H,\rho}) \, \ol{B}, \, T^\set_{H,\rho}\, (\mu T^\set_{H,\rho})
\, \ol{B'} \to (\mu T^\set_{H,\rho}) \,\ol{B'} \, )\\[1ex]
= &
 \relsem{\Gamma;\ol{\beta} \vdash H[\phi := (\mu
    \phi.\lambda {\overline \alpha}.H){\overline \beta}][\ol{\alpha :=
      \beta}]}\Eq_\rho[\ol{\beta := R}]\to\\
 & \hspace*{0.5in} \relsem{\Gamma;\ol{\beta} \vdash (\mu
  \phi.\lambda \ol{\alpha}.H)\ol{\beta}} \Eq_\rho[\ol{\beta:=
    R}]
\end{array}\]
Finally, since $\Phi = \emptyset$, naturality of
\[
\setsem{\Gamma;\emptyset \,|\, \emptyset \vdash \tin_H :
  \Nat^{\ol{\beta}} \, H[\phi := (\mu \phi.\lambda
    {\overline \alpha}.H){\overline \beta}][\ol{\alpha := \beta}]
  \;(\mu \phi.\lambda {\overline \alpha}.H){\overline \beta}}
\]
in $\rho$ is trivial.

\item
$\underline{\Gamma; \emptyset~|~\emptyset \vdash \fold^F_H :
  \Nat^\emptyset\;(\Nat^{\ol{\beta}}\,H[\phi
    :=_{\ol{\beta}} F][\ol{\alpha := \beta}]\,F)\;
  (\Nat^{{\ol{\beta}} }\,(\mu \phi.\lambda \overline
  \alpha.H)\overline \beta \;F)}$ \; Since $\Phi$ is empty, to see
  that
  \[\setsem{ \Gamma; \emptyset~|~\emptyset \vdash \fold^F_H :
    \Nat^\emptyset\;(\Nat^{\ol{\beta}}\,H[\phi :=_{\ol{\beta}}
      F][\ol{\alpha := \beta}]\,F)\; (\Nat^{{\ol{\beta}} }\,(\mu
    \phi.\lambda \overline \alpha.H)\overline \beta \;F)}\] is a
  natural transformation from $\setsem{\Gamma;\emptyset \vdash
    \emptyset}$ to \[\setsem{\Gamma; \emptyset \vdash
    \Nat^\emptyset\;(\Nat^{\ol{\beta}}\,H[\phi :=_{\ol{\beta}}
      F][\ol{\alpha := \beta}]\,F)\; (\Nat^{{\ol{\beta}} }\,(\mu
    \phi.\lambda \overline \alpha.H)\overline \beta\,F)}\] we need
  only show that, for all $\rho : \setenv$, all $\eta :
  \setsem{\Gamma; \emptyset \vdash \Nat^{\ol{\beta}}\,H[\phi
      :=_{\ol{\beta}} F][\ol{\alpha := \beta}]\,F} \rho$, and the
  unique $d : \setsem{\Gamma;\emptyset \vdash \emptyset} \rho$,
\[ \setsem{\Gamma; \emptyset~|~\emptyset \vdash \fold^F_H :
  \Nat^\emptyset\;(\Nat^{\ol{\beta}}\,H[\phi
    :=_{\ol{\beta}} F][\ol{\alpha := \beta}]\,F)\;
  (\Nat^{{\ol{\beta}} }\,(\mu \phi.\lambda \overline
  \alpha.H)\overline \beta\,F)}\,\rho\,d\,\eta\] has type
$\setsem{\Gamma; \emptyset \vdash \Nat^{{\ol{\beta}}
  }\,(\mu \phi.\lambda \overline \alpha.H)\overline \beta\,F}\,\rho$,
i.e., for any $\ol{B}$,
\[(\setsem{\Gamma; \emptyset~|~\emptyset \vdash \fold^F_H :
  \Nat^\emptyset\;(\Nat^{\ol{\beta}}\,H[\phi :=_{\ol{\beta}}
    F][\ol{\alpha := \beta}]\,F)\; (\Nat^{{\ol{\beta}} }\,(\mu
  \phi.\lambda \overline \alpha.H)\overline
  \beta\,F)}\,\rho\,d\,\eta)_{\ol{B}}\] is a morphism from
$\setsem{\Gamma; \ol{\beta}\vdash (\mu \phi.\lambda \overline
  \alpha.H)\overline \beta}\rho[\ol{\beta := B}] \,=\,(\mu
T^\set_{H,\rho})_{\ol{B}}$ to $\setsem{\Gamma; \ol{\beta} \vdash
  F}\rho[\ol{\beta := B}]$.  To see this, note that $\eta$ is a
natural transformation from
\[\begin{array}{ll}
 & \lambda \ol{B}.\,\setsem{\Gamma; \ol{\beta}
  \vdash H[\phi := F][\ol{\alpha := \beta}]}\rho[\ol{\beta :=
    B}]\\
= & \lambda \ol{B}.\,T^\set_{H,\rho}\,(\lambda \ol{A}. \,
\setsem{\Gamma;\ol{\beta}
  \vdash F}\rho[\ol{\beta := A}]) \, \ol{B}
\end{array}\]
to
\[\begin{array}{ll}
 & \lambda \ol{B}.\,(\lambda
\ol{A}.\,\setsem{\Gamma;\ol{\beta}\vdash F}\rho[\ol{\beta
    := A}]) \ol{B}\\
= & \lambda
\ol{B}.\,\setsem{\Gamma;\ol{\beta} \vdash
  F}\rho[\ol{\beta := B}]
\end{array}\]
and thus for each $\ol{B}$,
\[(\setsem{\Gamma; \emptyset~|~\emptyset \vdash \fold^F_H :
  \Nat^\emptyset\;(\Nat^{\ol{\beta}}\,H[\phi :=_{\ol{\beta}}
    F][\ol{\alpha := \beta}]\,F)\; (\Nat^{{\ol{\beta}} }\,(\mu
  \phi.\lambda \overline \alpha.H)\overline
  \beta\,F)}\,\rho\,d\,\eta)_{\ol{B}}\] is a morphism from
$\setsem{\Gamma; \ol{\beta} \vdash (\mu \phi.\lambda \overline
  \alpha.H)\overline \beta}\rho[\ol{\beta := B}]\,=\, (\mu
T^\set_{H,\rho})_{\ol{B}}$ to $\setsem{\Gamma;\ol{\beta}\vdash
  F}\rho[\ol{\beta := B}]$.  To see that this family of morphisms is
natural in $\ol{B}$, we observe that the following diagram commutes
for all $\ol{f : B \to B'}$':
{\tiny
\[\hspace*{-0.3in}\begin{tikzcd}[column sep=2.5in, row sep=0.75in]
(\mu T^\set_{H,\rho})\, \ol{B}
\ar[r,"{(\mathit{fold}_{T^\set_{H,\rho}}\,\eta)_{\ol{B}}\,}" ]
\ar[d, "{(\mu T^\set_{H,\rho}) \, \ol{f}}"'] &
 \setsem{\Gamma; \ol{\beta} \vdash F}\rho[\ol{\beta := B}]\ar[d,
   "{\setsem{\Gamma;
      \ol{\beta} \vdash F}\id_\rho[\ol{\beta := f}]}"]\\
 (\mu T^\set_{H,\rho})\, \ol{B'}
\ar[r,"{(\mathit{fold}_{T^\set_{H,\rho}}\eta)_{\ol{B'}}\,}"
] &
 \setsem{\Gamma; \ol{\beta} \vdash F}\rho[\ol{\beta := B'}]
\end{tikzcd}\]}
by naturality of $\mathit{fold}_{T^\set_{H,\rho}}\eta$.  To see that,
for all $\rho : \setenv$, \[\eta : \setsem{\Gamma; \emptyset \vdash
  \Nat^{\ol{\beta}}\,H[\phi :=_{\ol{\beta}} F][\ol{\alpha :=
      \beta}]\,F} \rho\] and $d \in \setsem{\Gamma; \emptyset \vdash
  \emptyset}\rho$,
\[\setsem{\Gamma; \emptyset~|~\emptyset
  \vdash \fold^F_H : \Nat^\emptyset\;(\Nat^{\ol{\beta}}\,H[\phi :=_{\ol{\beta}} F][\ol{\alpha :=
      \beta}]\,F)\; (\Nat^{{\ol{\beta}} }\,(\mu
  \phi.\lambda \overline \alpha.H)\overline
  \beta\,F)}\,\rho\,d\,\eta\] satisfies the additional condition
needed to be in $\setsem{\Gamma;\emptyset \vdash
  \Nat^{{\ol{\beta}} }\,(\mu \phi.\lambda \overline
  \alpha.H)\overline \beta\;F}\,\rho$, let $\ol{R : \rel(B,B')}$.
Since $\eta$ satisfies the
additional condition needed to be in \[\setsem{\Gamma; \emptyset
  \vdash \Nat^{\ol{\beta}}\,(H[\phi := F][\ol{\alpha :=
      \beta}])\,F} \rho\]
we know that
\[\begin{array}{ll}
 & (\,(\mathit{fold}_{T^\set_{H,\rho}}\eta)_{\ol{B}},\,
(\mathit{fold}_{T^\set_{H,\rho}}\eta)_{\ol{B'}}\,)
\end{array}\]
has type
\[\begin{array}{ll}
  & (\mu T_{H,\Eq_\rho}) \,\ol{R} \to
\relsem{\Gamma;\ol{\beta} \vdash F}\Eq_\rho[\ol{\beta:= R}]\\
= & (\mu T_{H,\Eq_\rho})\,\ol{\relsem{\Gamma;\ol{\beta}
  \vdash \beta}\Eq_\rho[\ol{\beta := R}]} \to
\relsem{\Gamma;\ol{\beta} \vdash F}\Eq_\rho[\ol{\beta:= R}]\\
= & \relsem{\Gamma; \ol{\beta} \vdash (\mu \phi. \lambda
  \ol{\alpha}. H)\ol{\beta}}\Eq_\rho[\ol{\beta := R}] \to
\relsem{\Gamma;\ol{\beta} \vdash F}\Eq_\rho[\ol{\beta:= R}]
\end{array}\]
Finally, since $\Phi = \emptyset$, naturality of
\[\setsem{\Gamma; \emptyset~|~\emptyset \vdash \fold^F_H :
  \Nat^\emptyset\;(\Nat^{\ol{\beta}}\,H[\phi
    :=_{\ol{\beta}} F][\ol{\alpha := \beta}]\,F)\;
  (\Nat^{{\ol{\beta}} }\,(\mu \phi.\lambda
  \overline \alpha.H)\overline \beta \;F)}\] in $\rho$ is trivial.\qed
\end{itemize}

\noindent
The following theorem is an immediate consequence of
Theorem~\ref{thm:at-gen}:
\begin{thm}\label{thm:at-gen-rel}
If \,$\Gamma; \Phi \,|\, \Delta \vdash t : F$ and \,$\rho \in \relenv$,
  and if \,$(a, b) \in \relsem{\Gamma; \Phi \vdash \Delta} \rho$,
  then \\
  $(\setsem{\Gamma; \Phi \,|\, \Delta \vdash t : F} (\pi_1 \rho) \,a \, ,
      \setsem{\Gamma; \Phi \,|\, \Delta \vdash t : F } (\pi_2 \rho) \,b) \in
    \relsem{\Gamma; \Phi \vdash F} \rho$.
\end{thm}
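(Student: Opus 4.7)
The statement is essentially a repackaging of Theorem~\ref{thm:at-gen}, so no new induction on $t$ is needed; the work is in translating between the abstract ``morphism of environment transformers'' formulation there and the concrete pointwise statement sought here. Concretely, Theorem~\ref{thm:at-gen} already provides the triple
\[(\setsem{\Gamma;\Phi~|~\Delta \vdash t : F},\, \setsem{\Gamma;\Phi~|~\Delta \vdash t : F},\, \relsem{\Gamma;\Phi~|~\Delta \vdash t : F})\]
as a morphism from $\sem{\Gamma;\Phi\vdash\Delta}$ to $\sem{\Gamma;\Phi\vdash F}$, together with Equation~\ref{eq:projs}, which identifies the first two components of this morphism's value at any relation environment $\rho$ as $\setsem{\Gamma;\Phi~|~\Delta \vdash t : F}(\pi_1\rho)$ and $\setsem{\Gamma;\Phi~|~\Delta \vdash t : F}(\pi_2\rho)$, respectively.

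The plan, then, is first to evaluate the morphism of Theorem~\ref{thm:at-gen} at the given $\rho$, obtaining a morphism $\sem{\Gamma;\Phi\vdash\Delta}\rho \to \sem{\Gamma;\Phi\vdash F}\rho$ in $RT_0$. By Lemma~\ref{lem:rel-transf-morph} this makes sense: both source and target are $0$-ary relation transformers, i.e., relations. Second, I would unpack what it means for a pair of set-level components $(\delta^1,\delta^2)$ to be a morphism of $0$-ary relation transformers from $(A,B,R)$ to $(A',B',R')$: namely, that $(\delta^1 a,\delta^2 b) \in R'$ whenever $(a,b) \in R$. Applying this with $R = \relsem{\Gamma;\Phi\vdash\Delta}\rho$ and $R' = \relsem{\Gamma;\Phi\vdash F}\rho$, and substituting the identification of $\delta^1,\delta^2$ supplied by Equation~\ref{eq:projs}, yields exactly the conclusion.

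Since everything reduces to invoking Theorem~\ref{thm:at-gen} and unwinding the definition of a morphism in $RT_0$, there is no genuine obstacle. The only subtle point to keep straight is bookkeeping: the hypothesis $(a,b) \in \relsem{\Gamma;\Phi\vdash\Delta}\rho$ uses the relational interpretation of a \emph{context}, which by definition of $\relsem{\Gamma;\Phi\vdash\Delta}$ as a product of the relational interpretations of the component types indeed coincides, coordinatewise, with the hypothesis needed to feed the morphism provided by Theorem~\ref{thm:at-gen}. Once that coherence between set- and relational-interpretations of contexts is observed (itself an immediate consequence of the product definition), the derivation is a one-line application of Theorem~\ref{thm:at-gen} combined with Equation~\ref{eq:projs}.
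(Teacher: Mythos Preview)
Your proposal is correct and matches the paper's approach: the paper states that Theorem~\ref{thm:at-gen-rel} is an immediate consequence of Theorem~\ref{thm:at-gen}, and your argument is precisely the unpacking of that immediacy via Equation~\ref{eq:projs} and the definition of a morphism in $RT_0$.
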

\noindent
Finally, the Abstraction Theorem is the instantiation of
Theorem~\ref{thm:at-gen-rel} to closed terms of closed type:
\begin{thm}[Abstraction Theorem]\label{thm:abstraction}
If $\vdash t : F$, then $(\setsem{\vdash t : F},\setsem{\vdash t
  : F}) \in \relsem{\vdash F}$.
\end{thm}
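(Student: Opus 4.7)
My plan is to derive the Abstraction Theorem as a direct specialization of Theorem~\ref{thm:at-gen-rel} to the case where the type constructor context, functorial context, and term context are all empty. The two facts I will lean on, both already established in the paper, are: (i)~when $\vdash t : F$ is closed of closed type, both $\setsem{\vdash t : F}\rho$ and $\relsem{\vdash F}\rho$ are independent of the environment $\rho$ (as noted immediately after Definition~\ref{def:set-interp} and after Definition~\ref{def:rel-sem}, where the notations $\setsem{\vdash t : F}$ and $\relsem{\vdash F}$ are introduced); and (ii)~the interpretation of the empty term context is the terminal object of $\rel$, which contains the unique pair $(\ast,\ast)$.

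The steps are straightforward. First, fix an arbitrary relation environment $\rho \in \relenv$; the particular choice is immaterial since the conclusion does not mention $\rho$. Because $\Delta = \emptyset$, we have $\relsem{\emptyset;\emptyset \vdash \emptyset}\rho = 1$ in $\rel$, and in particular $(\ast,\ast) \in \relsem{\emptyset;\emptyset \vdash \emptyset}\rho$. Second, apply Theorem~\ref{thm:at-gen-rel} with $\Gamma = \Phi = \Delta = \emptyset$ and this pair: the conclusion becomes
\[
  \bigl(\setsem{\vdash t : F}(\pi_1\rho)\,\ast,\ \setsem{\vdash t : F}(\pi_2\rho)\,\ast\bigr)\ \in\ \relsem{\vdash F}\rho.
\]
Third, invoke closedness of $t$ and $F$: the natural transformations $\setsem{\vdash t : F}(\pi_1\rho)$ and $\setsem{\vdash t : F}(\pi_2\rho)$ both collapse, under the canonical identification of morphisms from the terminal object with elements of their codomain, to the same element of $\setsem{\vdash F}$, which the paper abbreviates as $\setsem{\vdash t : F}$. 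Likewise $\relsem{\vdash F}\rho$ simplifies to $\relsem{\vdash F}$. Substituting these identifications yields exactly $(\setsem{\vdash t : F},\setsem{\vdash t : F}) \in \relsem{\vdash F}$, as required.

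I do not anticipate any real obstacle: all of the conceptual and technical work has been absorbed into Theorem~\ref{thm:at-gen}, whose proof handles the delicate cases of $L$-abstraction, application, $\map$, $\tin$, and $\fold$. The only minor bookkeeping is the identification of a closed term with an element (rather than a natural transformation from the terminal presheaf on $\setenv$), and this is exactly the convention the paper has already adopted when writing $\dsem{\vdash t : F}$ in place of $\dsem{\emptyset;\emptyset\,|\,\emptyset \vdash t : F}$. Consequently the proof reduces to a one-line application of Theorem~\ref{thm:at-gen-rel} together with these standard identifications.
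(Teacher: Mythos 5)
Your proposal is correct and is exactly the paper's own argument: the paper simply states that the Abstraction Theorem is the instantiation of Theorem~\ref{thm:at-gen-rel} to closed terms of closed type, and your write-up just makes explicit the routine bookkeeping (the terminal term context and the environment-independence of closed interpretations) that this instantiation involves.
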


\section{Free Theorems for Nested Types}\label{sec:ftnt}

In this section we show how Theorem~\ref{thm:at-gen} and its
consequences can be used to prove free theorems. Those in
Sections~\ref{sec:ft-adt},~\ref{sec:short-cut},
and~\ref{sec:short-cut-nested} go beyond mere naturality.  We also
show that we can extend short cut fusion for lists~\cite{glp93} to
nested types, thereby formally proving correctness of the
categorically inspired theorem from~\cite{jg10}.

\subsection{Free Theorem for Type of Polymorphic
  Bottom}\label{sec:bottom}

Suppose $ \vdash g : \Nat^\alpha \,\onet\,\alpha$ and $g^\set =
\setsem{\vdash g : \Nat^\alpha \,\onet\,\alpha}$. Then $g^\set$ is a
natural transformation from the constantly $1$-valued functor to the
identity functor in $\set$.  In particular, for every $S : \set$,
$g^\set_S : 1 \to S$. Note, however, that if $S = \emptyset$, then
there can be no such morphism, so no such natural transformation, and
thus no term $\vdash g : \Nat^\alpha \onet \,\alpha$, can exist.  That
is, our calculus admits no (non-terminating) terms with the closed
type $\Nat^\alpha \onet \,\alpha$ of the polymorphic bottom.

\subsection{Free Theorem for Type of Polymorphic
  Identity}\label{sec:identity}

Suppose $ \vdash g : \Nat^\alpha \,\alpha\,\alpha$ and $g^\set =
\setsem{\vdash g : \Nat^\alpha \,\alpha\,\alpha}$.  Then $g^\set$ is a
natural transformation from the identity functor on $\set$ to
itself. If $S$ is any set, if $a$ is any element of $S$, and if $K_a
:S \to S$ is the constantly $a$-valued morphism on $S$, then
naturality of $g^\set$ gives that $g^\set_S \circ K_a = K_a \circ
g^\set_S$, i.e., $g^\set_S \, a = a$, i.e., $g^\set_S = \id_S$.  That
is, $g^\set$ is the identity natural transformation for the identity
functor on $\set$. So every closed term $g$ of closed type
$\Nat^\alpha\alpha\,\alpha$ always denotes the identity natural
transformation for the identity functor on $\set$, i.e., every closed
term $g$ of type $\Nat^\alpha\alpha\,\alpha$ denotes the polymorphic
identity function.

\subsection{Standard Free Theorems for ADTs and Nested
  Types}\label{sec:ft-adt}

We can derive in our calculus even those free theorems for polymorphic
functions over ADTs that are not consequences of naturality.  We can,
e.g., prove the free theorem for $\mathit{filter}$'s type as follows:

\begin{thm}
If $g : A \to B$, $\rho \in \relenv$, $\rho \alpha = (A, B, \graph{g})$,
$(a, b) \in \relsem{\alpha ;\emptyset \vdash \Delta} \rho$, $(s \circ
g, s) \in \relsem{\alpha; \emptyset \vdash \Nat^\emptyset \alpha \,
  \mathit{Bool}} \rho$, and
\[\mathit{filter} = \setsem{\alpha; \emptyset \,|\, \Delta \vdash t :
  \filtype}\]
\noindent
for some $t$, then
\[  \mathit{map}_{\mathit{List}} \,g \circ \mathit{filter} \, (\pi_1
\rho) \, a \, (s \circ g) = \mathit{filter}\, (\pi_2\rho) \, b \, s
\circ \mathit{map}_{\mathit{List}} \,g\]
\end{thm}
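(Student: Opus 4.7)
The plan is to invoke the Abstraction Theorem in its open-term form and then peel the nested $\Nat$-type interpretation apart layer by layer. First, I would apply Theorem~\ref{thm:at-gen-rel} to the typing judgment $\alpha;\emptyset \,|\, \Delta \vdash t : \filtype$ at the relation environment $\rho$; together with the hypothesis $(a,b) \in \relsem{\alpha;\emptyset \vdash \Delta}\rho$, this yields
\[(\mathit{filter}(\pi_1\rho)\,a,\; \mathit{filter}(\pi_2\rho)\,b) \in \relsem{\alpha;\emptyset \vdash \filtype}\rho.\]

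Next, I would unpack this membership using the $\Nat^\emptyset$ clause of Definition~\ref{def:rel-sem}. Since $\filtype$ is a nest of $\Nat^\emptyset$s, its relational interpretation reduces (via empty $\overline{\alpha}$ and empty $\overline{R}$) to the familiar parametricity condition that related inputs produce related outputs. Applying the above related pair to $(s \circ g, s) \in \relsem{\alpha;\emptyset \vdash \Nat^\emptyset \alpha\,\mathit{Bool}}\rho$ gives
\[(\mathit{filter}(\pi_1\rho)\,a\,(s\circ g),\; \mathit{filter}(\pi_2\rho)\,b\,s) \in \relsem{\alpha;\emptyset \vdash \Nat^\emptyset (\mathit{List}\,\alpha)(\mathit{List}\,\alpha)}\rho,\]
and unpacking once more exhibits these two maps as forming a morphism in $\rel$ from $\relsem{\alpha;\emptyset \vdash \mathit{List}\,\alpha}\rho$ to itself.

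The final step is to convert this morphism in $\rel$ into the desired equation. Since $\rho\alpha = \graph{g} = \graph{g}$ and $g$ extends to a morphism $\pi_1\rho \to \pi_2\rho$ in $\setenv$ with $\rho = \graph{g}$, the Graph Lemma (Lemma~\ref{lem:graph}) identifies
\[\relsem{\alpha;\emptyset \vdash \mathit{List}\,\alpha}\rho \;=\; \graph{\setsem{\alpha;\emptyset \vdash \mathit{List}\,\alpha}\,g}.\]
Using the identities of Section~\ref{sec:Nat-type-terms} relating the functorial action of a $\mu$-type interpretation to its semantic $\mathit{map}$, we have $\setsem{\alpha;\emptyset \vdash \mathit{List}\,\alpha}\,g = \mathit{map}_{\mathit{List}}\,g$, so the pair exhibited above is a morphism in $\rel$ from $\graph{\mathit{map}_{\mathit{List}}\,g}$ to itself. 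By definition of $\rel$-morphism on graph relations, this is exactly
\[\mathit{map}_{\mathit{List}}\,g \circ \mathit{filter}(\pi_1\rho)\,a\,(s\circ g) \;=\; \mathit{filter}(\pi_2\rho)\,b\,s \circ \mathit{map}_{\mathit{List}}\,g,\]
which is the conclusion.

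The main obstacle will be the bookkeeping in the last step: carefully chasing the Graph Lemma through the nested-type representation $\mathit{List}\,\alpha = (\mu \phi.\lambda\beta.\,\onet + \beta \times \phi\beta)\,\alpha$, and verifying that the functorial action $\setsem{\mathit{List}\,\alpha}\,g$ really does coincide with $\mathit{map}_{\mathit{List}}\,g$ via the $\mu$-clause of Definition~\ref{def:set-sem-funcs} together with Equation~\ref{eq:mu}. Conceptually this is routine, but it is where the nested-type machinery of the model actually does the work that takes the theorem beyond mere naturality of $\mathit{filter}$ in $\alpha$.
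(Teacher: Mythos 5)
Your proposal is correct and follows essentially the same route as the paper's proof: apply the relational Abstraction Theorem to the open term $t$ with the hypothesis on $(a,b)$, unfold the nested $\Nat^\emptyset$ interpretations to feed in $(s\circ g, s)$, and then use the Graph Lemma (together with the demotion identities, since $\alpha$ sits in the non-functorial context) to identify $\relsem{\alpha;\emptyset\vdash \mathit{List}\,\alpha}\rho$ with $\graph{\mathit{map}_{\mathit{List}}\,g}$, from which the commuting square is immediate. The only cosmetic difference is that you cite Theorem~\ref{thm:at-gen-rel} explicitly where the paper invokes Theorem~\ref{thm:abstraction}, and you phrase the final step as a $\rel$-morphism between graph relations rather than pointwise membership, but these are the same argument.
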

\proof
By Theorem~\ref{thm:abstraction}, \[(\mathit{filter}\, (\pi_1 \rho)\,
a, \mathit{filter}\, (\pi_2 \rho)\, b) \in \relsem{\alpha; \emptyset
  \vdash \filtype} \rho\] so if $(s', s) \in \relsem{\alpha; \emptyset
  \vdash \Nat^\emptyset \alpha \, \mathit{Bool}} \rho = \rho\alpha \to
\Eq_{\mathit{Bool}}$ and $(xs', xs) \in \relsem{\alpha; \emptyset
  \vdash List \, \alpha} \rho$ then
\begin{equation}\label{eq:filter-thm-list}
  (\mathit{filter}\, (\pi_1\rho) \,a \,s' \,xs', \mathit{filter} \,
  (\pi_2\rho) \,b \,s \,xs) \in \relsem{\alpha; \emptyset \vdash
    \mathit{List} \, \alpha} \rho
\end{equation}
If $\rho\alpha = (A, B, \graph{g})$, then $\relsem{\alpha; \emptyset
  \vdash \mathit{List} \, \alpha} \rho =
\graph{\mathit{map}_{\mathit{List}} \, g}$ by Lemma~\ref{lem:graph}
and demotion.  Moreover, $xs =
\mathit{map}_{\mathit{List}} \,g \,xs'$ and $(s', s) \in \graph{g} \to
\Eq_{\mathit{Bool}}$, so $s' = s \circ g$. The result
follows from
Equation~\ref{eq:filter-thm-list}.\qed

A similar proof establishes the analogous result for, say, generalized
rose trees.
\begin{thm}
  If $g : A \to B$,
$F, G : \set \to \set$,
  $\eta : F \to G$ in $\set$, $\rho \in \relenv$, $\rho \alpha =
 (A, B, \graph{g})$, $\rho \psi = (F, G, \graph{\eta})$, $(a, b) \in
 \relsem{\alpha, \psi ;\emptyset \vdash \Delta} \rho$, $(s \circ
 g, s) \in \relsem{\alpha; \emptyset \vdash \Nat^\emptyset \alpha \,
   \mathit{Bool}} \rho$, and
 \[ \mathit{filter} = \setsem{\alpha, \psi; \emptyset \,|\, \Delta
      \vdash t : \filtypeGRose}  \]
for some $t$, then
\[ \semmap_{\mathit{GRose}}\, \eta\, (g + 1) \circ \mathit{filter} \,
(\pi_1 \rho) \, a \, (s \circ g) = \mathit{filter} \, (\pi_2\rho) \, b
\, s \circ
\semmap_{\mathit{GRose}}\, \eta\, g\]
\end{thm}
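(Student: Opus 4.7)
The plan is to mirror the proof of the analogous filter theorem for lists given immediately above. First I apply Theorem~\ref{thm:at-gen-rel} to the term $t$, using the hypothesis $(a,b) \in \relsem{\alpha,\psi;\emptyset \vdash \Delta}\rho$, to obtain
\[
(\mathit{filter}\,(\pi_1\rho)\,a,\ \mathit{filter}\,(\pi_2\rho)\,b)
\in
\relsem{\alpha,\psi;\emptyset \vdash \filtypeGRose}\rho.
\]
Unpacking the clause of $\relsem{\cdot}$ for the two nested $\Nat$-types then says that for every $(s',s) \in \relsem{\alpha;\emptyset \vdash \Nat^\emptyset\,\alpha\,\mathit{Bool}}\rho$ and every $(t',t) \in \relsem{\alpha,\psi;\emptyset \vdash \mathit{GRose}\,\psi\,\alpha}\rho$, the pair $(\mathit{filter}\,(\pi_1\rho)\,a\,s'\,t',\ \mathit{filter}\,(\pi_2\rho)\,b\,s\,t)$ lies in $\relsem{\alpha,\psi;\emptyset \vdash \mathit{GRose}\,\psi\,(\alpha+\onet)}\rho$.

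The crux is then the identification of these two relational interpretations with graphs,
\[
\relsem{\alpha,\psi;\emptyset \vdash \mathit{GRose}\,\psi\,\alpha}\rho = \graph{\semmap_{\mathit{GRose}}\,\eta\,g}
\quad\text{and}\quad
\relsem{\alpha,\psi;\emptyset \vdash \mathit{GRose}\,\psi\,(\alpha+\onet)}\rho = \graph{\semmap_{\mathit{GRose}}\,\eta\,(g+1)}.
\]
In the $\mathit{List}$ case this step was handled by Lemma~\ref{lem:graph} together with demotion. The main obstacle here is that $\mathit{GRose}\,\psi\,\alpha$ can only be formed with both $\psi$ and $\alpha$ in the non-functorial context $\Gamma$, because of the restriction on the functorial context of $\mu$-bodies in Definition~\ref{def:wftypes}. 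We therefore cannot first promote $\psi$ and $\alpha$ to $\Phi$, exhibit $\rho$ as the graph environment $\graph{f}$ of a morphism in $\setenv$, apply the Graph Lemma, and then demote. Instead I would verify the identification directly, following the construction of $\mu T_{H,\rho}$ from Equation~\ref{eq:mu} as the colimit of the iterates $T^n_{H,\rho} K_0$ for $H = \onet + \alpha \times \psi\beta$: Lemma~\ref{lem:eq-reln-equalities} lets one convert $(\rho\psi)^\ast = \graph{\eta}^\ast$ applied to a graph of the form $\graph{h}$ into the graph of $G h \circ \eta$, so by induction on $n$ each iterate is the graph of the $n$-th approximation of $\semmap_{\mathit{GRose}}\,\eta\,g$; passing to the colimit in $[\rel,\rel]$ yields the claimed graph, and the same argument with $g$ replaced by $g+1$ handles the second identification.

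With both identifications in place, I instantiate the unpacked Abstraction-Theorem conclusion with $(s',s) = (s\circ g, s)$, which lies in $\relsem{\alpha;\emptyset \vdash \Nat^\emptyset \alpha\,\mathit{Bool}}\rho$ by hypothesis, and, for an arbitrary $t' \in \mathit{GRose}\,F\,A$, with the pair $(t',\ \semmap_{\mathit{GRose}}\,\eta\,g\,t')$, which lies in the first graph by construction. Reading off the conclusion through the second identification gives
\[
\semmap_{\mathit{GRose}}\,\eta\,(g+1)\bigl(\mathit{filter}\,(\pi_1\rho)\,a\,(s\circ g)\,t'\bigr) = \mathit{filter}\,(\pi_2\rho)\,b\,s\,\bigl(\semmap_{\mathit{GRose}}\,\eta\,g\,t'\bigr),
\]
and since $t' \in \mathit{GRose}\,F\,A$ was arbitrary, this is precisely the displayed equality of the theorem by extensionality.
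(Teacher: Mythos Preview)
Your proposal is correct and follows exactly the structure the paper intends: the paper offers no explicit proof for this theorem, stating only that ``a similar proof establishes the analogous result'' to the $\mathit{List}$ case. Your use of Theorem~\ref{thm:at-gen-rel}, unpacking of the two $\Nat$-layers, identification of the relational interpretations of $\mathit{GRose}$ with graph relations, and final instantiation with $(s\circ g,s)$ and $(t',\semmap_{\mathit{GRose}}\,\eta\,g\,t')$ all match that template.

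Where you go further than the paper is in the graph-identification step, and your observation there is genuinely sharper than the paper's one-line claim. For $\mathit{List}$ the paper invokes ``Lemma~\ref{lem:graph} and demotion,'' which works because $\mathit{List}\,\alpha$ can be formed with $\alpha$ functorial (via the nested-type presentation), so one may apply the Graph Lemma in the functorial context and then demote. As you note, the formation rule for $\mu$-types forces both $\psi$ and $\alpha$ to lie in $\Gamma$ for $\mathit{GRose}\,\psi\,\alpha$, and since morphisms of set environments are the identity on $\tvars$, no $f:\rho\to\rho'$ can have $\graph{f}\psi=\graph{\eta}$ with $\eta$ nontrivial; the Graph Lemma is therefore not literally available. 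Your direct verification via the iterates $T^n_{H,\rho}K_0$, using Lemma~\ref{lem:eq-reln-equalities} to reduce $\graph{\eta}^*\graph{h_n}$ to a graph at each stage and then passing to the colimit, is a correct way to establish $\relsem{\alpha,\psi;\emptyset\vdash\mathit{GRose}\,\psi\,\alpha}\rho=\graph{\semmap_{\mathit{GRose}}\,\eta\,g}$ (and its $g{+}1$ variant). In effect you supply the argument the paper leaves implicit; the trade-off is that your route is specific to the particular body $H=\onet+\alpha\times\psi\beta$, whereas the Graph Lemma, when it applies, handles all types uniformly.
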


\noindent
This is not surprising since rose trees are essentially
ADT-like. However, as noted in Section~\ref{sec:terms}, our calculus
cannot express the type of a polymorphic filter function for a proper
nested type.

\subsection{Short Cut Fusion for Lists}\label{sec:short-cut}

We can recover standard short cut fusion for lists~\cite{glp93} in our
calculus:
\begin{thm}
If \,$\vdash F$, $\vdash H$, and
$\hat{g} \, = \, \setsem{\beta; \emptyset \,|\, \emptyset \vdash g :
  \Nat^{\emptyset} (\Nat^{\emptyset} (\onet + F \times \beta)\,
  \beta)\, \beta}$ for some $g$, and if $c \in \setsem{\vdash F}
\,\times\, \setsem{\vdash H} \to \setsem{\vdash H}$ and $n \in
\setsem{\vdash H}$, then
\[\mathit{fold}_{1 + \setsem{\vdash F} \times \_}\, n\, c\; (\hat{g}\; (\mathit{List}\,
\setsem{\vdash F})\,\mathit{nil} \,\mathit{cons}) = \hat{g} \,\setsem{\vdash
  H}\, n\, c \]
\end{thm}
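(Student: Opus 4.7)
The statement is the classical short cut fusion law for lists, and I would derive it as a direct consequence of parametricity. Let $f := \mathit{fold}_{1 + \setsem{\vdash F} \times \_}\,n\,c$, regarded as the unique algebra morphism from the initial algebra with structure $[\mathit{nil},\mathit{cons}] : 1 + \setsem{\vdash F} \times \mathit{List}\,\setsem{\vdash F} \to \mathit{List}\,\setsem{\vdash F}$ to the algebra $[n,c] : 1 + \setsem{\vdash F} \times \setsem{\vdash H} \to \setsem{\vdash H}$. Choose the relation environment $\rho \in \relenv$ with $\rho\beta = \graph{f}$, so that $\pi_1 \rho\,\beta = \mathit{List}\,\setsem{\vdash F}$ and $\pi_2 \rho\,\beta = \setsem{\vdash H}$.

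First I would invoke Theorem~\ref{thm:at-gen-rel} on $g$ at $\rho$ to obtain
\[(\hat g,\, \hat g) \in \relsem{\beta;\emptyset \vdash \Nat^\emptyset (\Nat^\emptyset (\onet + F \times \beta)\,\beta)\,\beta}\rho.\]
Unfolding the definition of $\relsem{\cdot}$ on the outer $\Nat$-type, this says precisely that whenever two algebras of type $\onet + F \times \beta \to \beta$ are related by $\relsem{\beta;\emptyset \vdash \Nat^\emptyset (\onet + F \times \beta)\,\beta}\rho$, then the two values $\hat g$ produces from them are related by $\rho\beta = \graph{f}$.

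The key remaining step is to verify that $[\mathit{nil},\mathit{cons}]$ and $[n,c]$ are indeed related in this sense. Since $F$ is a closed type, the Identity Extension Lemma (Theorem~\ref{thm:iel}) gives $\relsem{\beta;\emptyset \vdash F}\rho = \Eq_{\setsem{\vdash F}}$, and then the compositional clauses of Definition~\ref{def:rel-sem} for $+$ and $\times$ together with the Graph Lemma (Lemma~\ref{lem:graph}) yield
\[\relsem{\beta;\emptyset \vdash \onet + F \times \beta}\rho \;=\; \graph{\id_1 + \id_{\setsem{\vdash F}} \times f}.\]
Hence the two algebras are related exactly when $f \circ [\mathit{nil},\mathit{cons}] = [n,c] \circ (\id_1 + \id_{\setsem{\vdash F}} \times f)$, which is the defining equation of $f$ as a fold.

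Instantiating the parametricity statement for $\hat g$ at this related pair of algebras then gives $(\hat g\,(\mathit{List}\,\setsem{\vdash F})\,\mathit{nil}\,\mathit{cons},\; \hat g\,\setsem{\vdash H}\,n\,c) \in \graph{f}$, which unpacks to the desired equation. The main, though not deep, obstacle is the careful bookkeeping needed to identify the separately supplied arguments $\mathit{nil}$ and $\mathit{cons}$ with the single algebra map $[\mathit{nil},\mathit{cons}]$, and to confirm that the relational interpretation of the polynomial $\onet + F \times \beta$ on the graph environment really is the graph of the expected morphism; once this is discharged, short cut fusion drops out immediately from the Abstraction Theorem.
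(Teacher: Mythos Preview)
Your proposal is correct and follows essentially the same route as the paper's own proof: choose the relation environment with $\rho\beta = \graph{\mathit{fold}\,n\,c}$, apply parametricity for $g$ at $\rho$, check that $(\mathit{nil},\mathit{cons})$ and $(n,c)$ are related algebras, and read off the conclusion from membership in the graph relation. The only differences are cosmetic: you invoke Theorem~\ref{thm:at-gen-rel} (which is in fact the correct one, since $\beta$ is free in the context) where the paper cites Theorem~\ref{thm:abstraction}, and you make the use of the IEL and Graph Lemma explicit when computing $\relsem{\beta;\emptyset \vdash \onet + F \times \beta}\rho$, whereas the paper leaves this step implicit.
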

\begin{proof}
Theorem~\ref{thm:abstraction} gives
that, for any $\rho \in \relenv$,
\[\begin{array}{lll}
(\hat{g} \,(\pi_1 \rho), \,\hat{g}\, (\pi_2 \rho)) & \in
& \relsem{\beta; \emptyset \vdash \Nat^{\emptyset} (\Nat^{\emptyset}
  (\onet + F \times \beta)\, \beta)\, \beta} \rho \\
& \cong & (((\relsem{\vdash F} \rho \times
\rho\beta) \to \rho\beta) \times \rho\beta) \to \rho\beta
\end{array}\]
so if $(c', c) \in \relsem{\vdash F} \rho \times
\rho\beta \to \rho\beta$ and $(n', n) \in \rho\beta$ then
$(\hat{g} \,(\pi_1 \rho)\, n'\, c', \hat{g}\, (\pi_2 \rho)\, n\, c)
\in \rho \beta$.
In addition,
\begin{align*}
  &\setsem{\vdash \fold_{\onet + F
    \times \beta}^{H} : \Nat^{\emptyset} (\Nat^{\emptyset} (\onet
  + F \times H)\, H)\, (\Nat^{\emptyset} (\mu \alpha. \onet
  + F \times \alpha)\, H)} \\ =\;
& \mathit{fold}_{1 + \setsem{\vdash F} \times \_}
\end{align*}
so that if $c \in
\setsem{\vdash F} \,\times\, \setsem{\vdash H} \to
\setsem{\vdash H}$ and $n \in \setsem{\vdash H}$,
 then
\[(n, c) \in \setsem{\vdash \Nat^{\emptyset} (\onet + F \times H)\,
  H}\] The instantiation
\[\begin{array}{lll}
\pi_1 \rho \beta & = & \setsem{\vdash \mu \alpha. \onet + F \times
  \alpha}=\mathit{List}\,\setsem{\vdash F}\\
\pi_2 \rho \beta & = & \setsem{\vdash H}\\
\rho \beta & = & \graph{\mathit{fold}_{1 + \setsem{\vdash F} \times \_}\, n\, c} :
\rel(\pi_1 \rho \beta, \pi_2 \rho \beta)\\
c' & = & \mathit{cons}\\
n' & = & \mathit{nil}
\end{array}\]
thus gives that \[(\hat{g}\;(\mathit{List}\,\setsem{\vdash
  F})\,\mathit{nil} \,\mathit{cons}, \hat{g} \,\setsem{\vdash H} \,
n\, c) \in \graph{\mathit{fold}_{1 + \setsem{\vdash F} \times \_}\,
  n\, c}\]
i.e.,
\[\mathit{fold}_{1 + \setsem{\vdash F} \times \_}\, n\, c\; (\hat{g}\;
(\mathit{List}\,\setsem{\vdash F})\, \mathit{nil}\, \mathit{cons})
= \hat{g}\,\setsem{\vdash H}\, n\, c \qedhere\]
\end{proof}

We can extend short cut fusion results to arbitrary ADTs, as
in~\cite{joh02,pit98}.

\subsection{Short Cut Fusion for Arbitrary Nested
  Types}\label{sec:short-cut-nested}

We can extend short cut fusion for lists~\cite{glp93} to nested types,
thereby formally prove correctness of the categorically inspired
theorem from~\cite{jg10}.  We have:
\begin{thm}\label{thm:short-cut-nested}
If $\emptyset;\phi,\alpha \vdash F$, \,$\emptyset; \alpha
\vdash K$, \,
$H : [\set,\set] \to [\set,\set]$ is defined by
\[\begin{array}{lll}
H\,f\,x & = & \setsem{\emptyset; \phi, \alpha \vdash F}[\phi :=
  f][\alpha := x]\\
\end{array}\]
and
\[\hat{g} = \setsem{\phi;\emptyset\,|\,\emptyset \vdash g :
\Nat^\emptyset\,(\Nat^\alpha\,F\,(\phi\alpha))\,(\Nat^\alpha\,\onet \,
(\phi\alpha))}\] for some $g$, then, for every $B \in H
\setsem{\emptyset;\alpha \vdash K} \rightarrow \setsem{\emptyset;
  \alpha \vdash K}$,
\[\mathit{fold}_{H}\, B \, (\hat{g}\; \mu H \; \mathit{in}_{H}) = \hat{g}
\,\setsem{\emptyset;\alpha \vdash K}\, B\]
\end{thm}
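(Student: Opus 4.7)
The plan is to follow the template of the short cut fusion proof for lists in Section~\ref{sec:short-cut}, but with the $0$-ary graph relation used there replaced by the $1$-ary graph relation transformer $\graph{\mathit{fold}_H B}$ of the natural transformation $\mathit{fold}_H B : \mu H \to \setsem{\emptyset; \alpha \vdash K}$, in the sense of Definition~\ref{dfn:graph-nat-transf}. Accordingly, I would choose a relation environment $\rho$ whose action on the $1$-ary type constructor variable $\phi$ is the relation transformer $(\mu H,\, \lambda A.\,\setsem{\emptyset;\alpha \vdash K}[\alpha := A],\, \graph{\mathit{fold}_H B}^*)$, and whose action on all other variables is immaterial.

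Applying Theorem~\ref{thm:abstraction} (via its open-term consequence Theorem~\ref{thm:at-gen-rel}) to $g$ at this $\rho$ would then give
\[
(\hat{g}\,(\pi_1 \rho),\, \hat{g}\,(\pi_2 \rho)) \in \relsem{\phi; \emptyset \vdash \Nat^\emptyset\,(\Nat^\alpha F\,(\phi \alpha))\,(\Nat^\alpha \onet\,(\phi \alpha))}\rho.
\]
To exploit this, I would feed it the pair of algebra structures $(\mathit{in}_H, B)$: the former is the initial $H$-algebra on $\mu H = \pi_1 \rho \phi$, and the latter is the given $H$-algebra on $\setsem{\emptyset;\alpha \vdash K} = \pi_2 \rho \phi$.

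The main obstacle is verifying that $(\mathit{in}_H, B) \in \relsem{\phi; \alpha \vdash \Nat^\alpha F\,(\phi \alpha)}\rho$. Unfolding Definition~\ref{def:rel-sem}, this reduces to showing, for every $R : \rel(A, B')$ and every $(x, y) \in \relsem{\emptyset; \phi, \alpha \vdash F}\rho[\alpha := R]$, that $(\mathit{in}_{H,A}\,x,\, B_{B'}\,y) \in \graph{\mathit{fold}_H B}^*\,R$. The key tools here are the Graph Lemma (Lemma~\ref{lem:graph}), the substitution identities (\ref{eq:subs-var})--(\ref{eq:subs-const-morph}), the characterization of graph relation transformers on graph relations from Lemma~\ref{lem:eq-reln-equalities}, and $\omega$-cocontinuity of the environment transformer $\sem{F}$ given by Lemma~\ref{lem:rel-transf-morph}. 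Together these let the relation $\relsem{F}\rho[\alpha := R]$ be re-expressed in terms of the functorial action of $H$ on $\graph{\mathit{fold}_H B}$ and $R$, after which the fundamental fold equation $\mathit{fold}_H B \circ \mathit{in}_H = B \circ H(\mathit{fold}_H B)$ for the initial $H$-algebra delivers the required membership. Threading these relational identities through the $1$-ary graph construction is the delicate part; a structural induction on $F$ handling sums, products, variable occurrences, and nested $\mu$-types should make the verification systematic.

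Granted this key membership, Theorem~\ref{thm:at-gen-rel} yields
\[
(\hat{g}\,\mu H\,\mathit{in}_H,\, \hat{g}\,\setsem{\emptyset;\alpha \vdash K}\,B) \in \relsem{\phi; \emptyset \vdash \Nat^\alpha \onet\,(\phi \alpha)}\rho.
\]
By the IEL (Theorem~\ref{thm:iel}), specialising this relational membership at equality relations collapses $\relsem{\onet}$ to equality, forcing the two natural transformations to have their components pairwise $\graph{\mathit{fold}_H B}$-related. By the definition of the graph of a natural transformation, this is precisely the equation $\mathit{fold}_H B \circ (\hat{g}\,\mu H\,\mathit{in}_H) = \hat{g}\,\setsem{\emptyset;\alpha \vdash K}\,B$ claimed in the theorem.
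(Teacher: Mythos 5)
Your proposal is correct and follows essentially the same route as the paper's proof: instantiate $\rho\phi$ as the graph relation transformer of $\mathit{fold}_H\,B$, apply the Abstraction Theorem to $g$, verify $(\mathit{in}_H, B) \in \relsem{\phi;\emptyset\vdash \Nat^\alpha F\,(\phi\alpha)}\rho$ via the Graph Lemma and the algebra-morphism equation for $\mathit{fold}_H\,B$, and read off the fusion equation. The only inessential difference is that your proposed structural induction on $F$ for the key membership is not needed --- demotion plus a single application of Lemma~\ref{lem:graph} rewrites $\relsem{\emptyset;\phi,\alpha\vdash F}\graph{\mathit{fold}_H B}$ as $\graph{\mathit{map}_H(\mathit{fold}_H B)}$ outright, which is exactly how the paper discharges it.
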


\proof
Theorem~\ref{thm:abstraction} gives that, for any
$\rho \in \relenv$,
\[\begin{array}{lll}
(\hat{g} \,(\pi_1 \rho), \hat{g}\, (\pi_2 \rho)) & \in &
\relsem{\phi;\emptyset\vdash \Nat^{\emptyset} (\Nat^\alpha
  F\, (\phi\alpha))\, (\Nat^\alpha\,\onet \, (\phi\alpha))}
\rho\\
& = & \relsem{\phi;\emptyset\vdash \Nat^\alpha F\,
  (\phi\alpha)}\rho \to \relsem{\phi;\emptyset\vdash
  \Nat^\alpha\,\onet \, (\phi\alpha)}\rho\\
& = & \relsem{\phi;\emptyset\vdash \Nat^\alpha F\,
  (\phi\alpha)}\rho \to \rho \phi
\end{array}\]
\noindent
so if $(A, B) \in \relsem{\phi;\emptyset\vdash \Nat^\alpha F\,
  (\phi\alpha)}\rho$ then $(\hat{g} \,(\pi_1 \rho)\, A, \hat{g}\, (\pi_2 \rho)\,
B) \in \rho \phi$.
Also,
\[\setsem{\vdash \fold_F^K :
  \Nat^{\emptyset}\, (\Nat^\alpha F[\phi := K]\,K)\, (\Nat^\alpha
  ((\mu \phi.\lambda\alpha.F)\alpha)\,K)} = \mathit{fold}_H\]
Now let $A = \mathit{in}_H : H (\mu H) \Rightarrow
\mu H$,\, $B : H\setsem{\emptyset;\alpha\vdash K} \Rightarrow
\setsem{\emptyset;\alpha \vdash K}$,\, $\rho \phi =
\graph{\mathit{fold}_H\, B}$,\, $\pi_1 \rho \phi = \mu H$,\, $\pi_2
\rho \phi = \setsem{\emptyset;\alpha\vdash K}$,\, $\rho \phi :
\rel(\pi_1 \rho \phi, \pi_2 \rho \phi)$,\, $A : \setsem{\phi;
  \emptyset \vdash \Nat^\alpha F \, (\phi \alpha)} (\pi_1 \rho)$,\,
and $B : \setsem{\phi; \emptyset \vdash \Nat^\alpha F \, (\phi
  \alpha)} (\pi_2 \rho)$.
Demotion ensures that \[A = \mathit{in}_H : H(\mu H) \Rightarrow \mu H
= \setsem{\phi;\emptyset \vdash \Nat^\alpha F
  \,(\phi\alpha)}(\pi_1\rho)\]
and demotion and Lemma~\ref{lem:graph} together give that
\[\begin{array}{lll}
(A,B) \,=\, (\mathit{in}_H,B) & \in & \relsem{\phi;\emptyset\vdash
  \Nat^\alpha F\, (\phi\alpha)}\rho\\
& = & \lambda A. \relsem{\phi;\alpha\vdash F}[\phi :=
  \graph{\mathit{fold}_H\, B}][\alpha := A] \Rightarrow
 \graph{\mathit{fold}_H\, B} \\
& = & \relsem{\emptyset;\phi,\alpha\vdash F}
  \graph{\mathit{fold}_H\, B} \Rightarrow \graph{\mathit{fold}_H\,
    B}\\
  & = & \graph{\setsem{\emptyset;\phi,\alpha\vdash F}
    \,(\mathit{fold}_H\,B)} \Rightarrow \graph{\mathit{fold}_H\, B}\\
  & = & \graph{\mathit{map}_H \,(\mathit{fold}_H\,B)} \Rightarrow
\graph{\mathit{fold}_H\, B}\\
\end{array}\]
since if $(x,y) \in \graph{\mathit{map}_H \,(\mathit{fold}_H\,B)}$,
then \[\mathit{fold}_H\, B\, (\mathit{in}_H\,x) = B\,y = B\,
(\mathit{map}_H \,(\mathit{fold}_H\,B) \, x)\] by the definition of
$\mathit{fold}_H$ as a (indeed, the unique) morphism from
$\mathit{in}_H$ to $B$.  Thus, \[(\hat{g} \,(\pi_1 \rho)\, A,  \hat{g}\, (\pi_2
\rho)\, B) \in \graph{\mathit{fold}_H\, B},\] i.e., $\mathit{fold}_H \,
B \, (\hat{g}\, (\pi_1 \rho) \, \mathit{in}_H) = \hat{g}\,(\pi_2 \rho)\,B$.  But
since $\phi$ is the only free variable in $\hat{g}$, this simplifies to
$\mathit{fold}_H\, B \, (\hat{g}\, \mu H\, \mathit{in}_H) =
\hat{g}\,\setsem{\emptyset;\alpha\vdash K}\,B$. \qed

\medskip
As in~\cite{jg10}, replacing $\onet$ with any type $\emptyset;\alpha
\vdash C$ generalizes Theorem~\ref{thm:short-cut-nested} to deliver a
more general free theorem whose conclusion is $\mathit{fold}_{H}\, B
\; \circ \; \hat{g}\; \mu H \; \mathit{in}_{H} = \hat{g}
\,\setsem{\emptyset;\alpha \vdash K}\, B$.

\medskip
Although it is standard to prove that the parametric model constructed
verifies the existence of initial algebras, this is unnecessary here
since initial algebras are built directly into our model.

\section{Parametricity for GADTs}\label{sec:GADTs}

As discussed in Section~\ref{sec:intro}, type indices for nested types
can be any types, including, in the case of truly nested types like
that of bushes, types involving the very same nested type that is
being defined. But every data constructor for a nested type must still
have as its return type exactly the instance being defined. For
example, the data constructors for the instance $\mathtt{PTree\; A}$ of the
nested type $\mathtt{PTree}$ are $\mathtt{pleaf :: PTree\; A}$ and
$\mathtt{pnode :: PTree\; (A \times A) \to PTree\; A}$, and
the data constructors for the instance $\mathtt{Bush \;A}$ of the truly
nested type $\mathtt{Bush}$ are $\mathtt{bnil :: Bush\; A}$ and
$\mathtt{bcons :: A \to Bush\; (Bush\; A) \to Bush\; A}$.

Generalized algebraic data types (GADTs) --- also known as guarded
recursive data types~\cite{xcc03} or first-class phantom
types~\cite{ch03} --- generalize nested types by relaxing the above
restriction to allow the return types of data constructors to be
different instances of the data type than the one being defined. For
example, the GADT
\[\begin{array}{l}
\mathtt{data\; Seq \;(A : Set)\;:\;Set\;where}\\
\hspace*{0.4in}\mathtt{sconst\;:\; A \rightarrow Seq\;A}\\
\hspace*{0.4in}\mathtt{spair\;:\;Seq\;A \rightarrow Seq\;B \rightarrow
  Seq\;(A\times B)}\\
\hspace*{0.4in}\mathtt{sseq\;:\;(Nat \rightarrow Seq\;A) \rightarrow
  Seq\;(Nat \to A)}
\end{array}\]
\noindent
has data constructors $\mathtt{spair}$ and $\mathtt{sseq}$ with
return types $\mathtt{Seq\; (A \times B)}$ and $\mathtt{Seq\;(Nat \to
  A)}$. These types are not only at different instances of \verb|Seq|
from the instance $\mathtt{Seq\; A}$ being defined, but also at
different instances from one another. The resulting interdependence of
different instances of GADTs means that they can express more
constraints than ADTs and nested types. For example, the ADT
$\mathtt{List}$ expresses the invariant that all of the data in the
lists it defines is of the same type, while the nested type
$\mathtt{PTree}$ expresses this invariant \emph{as well as} the
invariant that all of the lists it defines have lengths that are
powers of 2. The GADT $\mathtt{Seq}$ enforces \emph{even more general}
well-formedness conditions for sequences of values that simply cannot
be expressed with ADTs and nested types alone.

GADTs are widely used in modern functional languages, such as Haskell,
as well as in proof assistants, such as Agda, that are based on
dependent type theories. A natural next step in the line of work
reported in this paper is therefore to extend our parametricity
results to GADTs. A promising starting point for this endeavor is the
observation from~\cite{jg08} that the data objects of GADTs can be
represented using object-level left Kan extensions over discrete
categories. The more recent results of~\cite{jp19} further show that
adding a carefully designed object-level left Kan extension construct
to a calculus supporting primitive nested types preserves the
cocontinuity needed for primitive GADTs to have well-defined, properly
functorial interpretations. Together this suggests extending the type
system in Definition~\ref{def:wftypes} with such a left Kan extension
construct, and extending the calculus in Figure~\ref{fig:terms} with
corresponding categorically inspired constructs to introduce and
eliminate terms of these richer types. This approach exactly mirrors
the (entirely standard) approach taken above for product, coproduct,
and fixpoint types. In this section we outline the obvious approach to
extending our model from Section~\ref{sec:term-interp} to a parametric
model when some classes of primitive GADTs are incorporated in this
manner. However, as we argue at the end of this section, this naive
approach fails because the IEL does not hold. Unfortunately, the IEL
not holding derails more than just parametricity: it also affects the
well-definedness of the term semantics (in particular, the semantics
of $L$-terms).

\subsection{Left Kan Extensions}\label{sec:lke}

We begin by recalling the definition of a left Kan extension and
establishing some useful notation and results for them.

\begin{defi}\label{def:lke}
If $F : \set^k \to \set$ and $\ol{K} : \set^k \to \set^h$ are functors
over $\set$, then the \emph{left Kan extension of $F$ along $\ol K$} is
a functor $\mathit{Lan}_{\ol K}\,F : \set^h \to \set$ together with a
natural transformation $\eta : F \to (\mathit{Lan}_{\ol K}\,F)
\circ \ol K$
such that, for every functor $G : \set^h \to \set$ and natural
transformation $\gamma : F \to G \circ \ol K$, there exists a unique
natural transformation $\mu : \mathit{Lan}_{\ol K}\,F \to G$ such that
$(\mu {\ol K}) \circ \eta = \gamma$. This is depicted in the
following diagram:

\[
\begin{tikzcd}[row sep = huge]
\set^k
\ar[rr, "{F}"{name=Fa, above}, ""{name=F, below}]
\ar[rd, "{\ol K}"']
&& \set \\
& \set^h
\ar[Rightarrow, bend right = 25, from=F, "{\eta}"']
\ar[ur, bend left, "{\mathit{Lan}_{\ol K}\,F}"{name=Lan, description}]
\ar[ur, bend right, "{G}"'{name=L, right}, ""'{name=Lr, right}]
\ar[Rightarrow, bend left = 45, from=Fa, to=Lr, "{\gamma}" near start]
\ar[Rightarrow, dashed, from=Lan, to=L, "{\mu}"']
\end{tikzcd}
\]

\noindent
Replacing $\set$ by $\rel$ everywhere in Definition~\ref{def:lke} we
can similarly define the left Kan extension of $F$ along $\ol K$ for
functors $F$ and $\ol K$ over $\rel$.
\end{defi}

An alternative presentation characterizes the left Kan extension
$(\mathit{Lan}_{\ol K} F, \eta)$ in terms of the bijection between
natural transformations from $F$ to $G \circ \ol{K}$ and natural
transformations from $\mathit{Lan}_{\ol K} F$ to $G$, for which
$\eta$ is the unit. If Agda were to support a primitive
$\mathsf{Lan}$ for left Kan extensions, we could use this bijection
to rewrite the type of each data constructor for the GADT
$\mathtt{Seq}$ to arrive at the following equivalent representation:
\[\begin{array}{l}
\mathtt{data\; Seq \;(A : Set)\;:\;Set\;where}\\
\hspace*{0.4in}\mathtt{sconst\;:\; A \rightarrow Seq\;A}\\
\hspace*{0.4in}\mathtt{spair\;:\;(\mathsf{Lan}_{\lambda C\,D.\,C \times D}
  (\lambda C\,D.\,Seq\;C\times Seq\;D))\;A \rightarrow Seq\;A}\\
\hspace*{0.4in}\mathtt{sseq\;:\;(\mathsf{Lan}_{\lambda C.\,Nat \to C}
  (\lambda C.\,Nat \rightarrow Seq\;C))\;A \rightarrow
  Seq\;A}
\end{array}\]
\noindent
\!\!Our calculus will represent $\mathtt{Seq}$ and other GADTs in
precisely this way.

\medskip
A third representation of left Kan extensions in locally presentable
categories is given in terms of colimits. Writing ${\mathcal C}_0$ for
the full subcategory of finitely presentable objects in the locally
presentable category $\mathcal C$, the left Kan extension can be
expressed as
\[
(\mathit{Lan}_{\ol K} F)\ol A \;=\; \colim{\ol{X : \,{\mathcal
        C}_0},\, \ol{f : K{\ol X} \to A}}{F \ol X}\]
Thus, if $F : \set^k \to \set$, $\ol{K} :
  \set^k \to \set^h$, $\ol{A} : \set^h$, and $\set_0$ is the full
  subcategory of finitely presentable objects in $\set$, --- i.e., is
  the category of finite sets --- we  have that
\begin{equation}\label{eq:colim-form}
(\mathit{Lan}_{\ol K} F) \ol{A} = \colim{\ol{S} : \set_0^k, \,\ol{f :
      K\ol{S} \to  A}}{F \ol{S}}
\end{equation}
If $\ol{S} : \set_0^k$ and $\ol{f : K\ol{S} \to A}$, let $j_{\ol{S},
  \ol{f}} : F \ol{S} \to (\mathit{Lan}_{\ol K} F) \ol{A}$ be the morphism
indexed by $\ol{S}$ and $\ol{f}$ mapping the cocone into the colimit
in Equation~\ref{eq:colim-form}.
For any $h$-tuple of functions
$\ol{g : A \to B}$, the functorial action $(\mathit{Lan}_{\ol K
  }F) \ol{g}$
is the unique function from $(\mathit{Lan}_{\ol K} F) \ol{A}$ to
$(\mathit{Lan}_{\ol K} F) \ol{B}$ such that, for all $\ol{S} : \set^k_0$ and
$\ol{f : K\ol{S} \to A}$,\,
\begin{equation}\label{eq:cocone-def}
(\mathit{Lan}_{\ol K}F) \ol{g} \circ j_{\ol{S},\, \ol{f}}
= j'_{\ol{S},\, \ol{g\, \circ f}}
: F \ol{S} \to (\mathit{Lan}_{\ol K}F) \ol{B}
\end{equation}
holds, where $j$ is the morphism mapping the cocone into
$(\mathit{Lan}_{\ol K} F) \ol{A} = \colim{\ol{S} : \set_0^k, \,\ol{f :
    K\ol{S} \to A}}{F \ol{S}}$ and $j'$ is the morphism mapping the
cocone into $(\mathit{Lan}_{\ol K} F) \ol{B} = \colim{\ol{S} : \set_0^k,
  \,\ol{f : K\ol{S} \to B}}{F \ol{S}}$.  Moreover, if $\alpha : F \to
F'$ is a natural transformation, then $\mathit{Lan}_{\ol K} \alpha :
(\mathit{Lan}_{\ol K} F) \to (\mathit{Lan}_{\ol K} F')$ is defined to be the
induced natural transformation whose component $(\mathit{Lan}_{\ol K}
\alpha) \ol{A} : (\mathit{Lan}_{\ol K} F) \ol{A} \to (\mathit{Lan}_{\ol K} F')
\ol{A}$ is the unique function such that $(\mathit{Lan}_{\ol K} \alpha)
\ol{A} \,\circ\, j_{\ol{S}, \ol{f}} = j'_{\ol{S}, \ol{f}} \,\circ\,
\alpha_{\ol{S}}$, where $j$ is the cocone into $(\mathit{Lan}_{\ol K} F)
\ol{A}$ and $j'$ is the cocone into $(\mathit{Lan}_{\ol K} F') \ol{A}$. We
can similarly represent left Kan extensions of functors over $\rel$ in
terms of colimits. In that setting we will denote the morphism mapping
the cocone into the colimit $\iota$. We will make good use of both of
these representations in Section~\ref{sec:term-sem} below.

\medskip
More about each of the above three representations of left Kan
extensions and the connections between them can be found in,
e.g.,~\cite{rie16}.

\subsection{Extending the Calculus}\label{sec:ext-calc}

We incorporate GADTs into our calculus by first adding to
Definition~\ref{def:wftypes} the following type formation rule for
$\Lan$-types:

\[
\AXC{$\Gamma; \Phi,\ol\alpha^0 \vdash F$}
\AXC{$\ol{\Gamma;\ol\alpha^0 \vdash K}$}
\AXC{$\ol{\Gamma;\Phi \vdash A}$}
\TIC{$\Gamma;\Phi \vdash (\Lan^{\ol{\alpha^0}}_{\ol K} F) \ol A$}
\DisplayProof
\]

\noindent
Here, the type constructor $\Lan$ binds the variables in $\ol\alpha$,
and these variables must always have arity $0$. In addition, the
vectors $\ol K$ and $\ol A$ must have the same length.

Intuitively, $\Lan^{\ol{\alpha^0}}_{\ol K} F$ is a syntactic representation
of the left Kan extension of the functor in the variables in
$\ol\alpha$ denoted by $F$ along the functor in the variables in
$\ol\alpha$ denoted by $\ol K$. Using $\Lan$-types, we can therefore
represent the GADT $\mathtt{Seq}$ in our calculus as
 \begin{align*}
\mathtt{Seq}\,\alpha &\df \left( \mu \phi.\lambda \beta.
\beta + \left(\Lan^{\gamma_1,\gamma_2}_{\gamma_1 \times \gamma_2}
(\phi \gamma_1 \times \phi \gamma_2) \right) \beta \right.
\left. + \left(\Lan^{\gamma}_{\mathtt{Nat} \to \gamma} (\mathtt{Nat} \to \phi
\gamma) \right) \beta \right) \alpha
\end{align*}
\noindent
where $\mathtt{Nat} = \mu \alpha. 1 + \alpha$ and $\mathtt{Nat} \to C$
abbreviates $(\Lan^\emptyset_{\mathtt{Nat}} 1)\,C$, which equals
$\colim{f : \mathtt{Nat} \to C}{1}$ by Equation~\ref{eq:colim-form}
(here, $k = 0$). This representation of $\mathit{Seq}$ corresponds
exactly to the rewriting in Section~\ref{sec:lke}.  As explained in
Section~IV.\,D of~\cite{jp19}, the more general construct
$\Lan^{\ol{\alpha^0}}_{\ol K} F$ allowing extensions along vectors of
functors as depicted above makes it possible to represent GADTs with
two or more type arguments that depend on one another. Such GADTs
cannot be represented using just unary $\Lan$-types, i.e, $\Lan$-types
of the form $(\Lan^{\ol{\alpha^0}}_K F) A$.

As we will see below, in order to define the term-formation rules for
terms of $\Lan$-types we also need to generalize $\Nat$-types to bind
type constructor variables not just of arity $0$, but of arbitrary
arity. Accordingly, we replace the rule for $\Nat$-types in
Definition~\ref{def:wftypes} with

\[
\AXC{$\Gamma;\Phi \vdash F$}
\AXC{$\Gamma;\Phi  \vdash G$}
\BIC{$\Gamma;\emptyset \vdash \Nat^\Phi F \,G$}
\DisplayProof
\]

\noindent
Of course, we could have performed this replacement from the outset of
the present paper. But since using the above rule instead of the one
currently in Definition~\ref{def:wftypes} only makes the notation of
our calculus heavier without increasing its expressivity, we have
chosen not to.

Finally, adding $\Lan$-types to the types of our calculus also
requires the extension of Definition~\ref{def:second-order-subst} with
the following new clause:

\[((\Lan^{\ol{\alpha}}_{\ol K}\,F) \ol A)[\phi :=_{\ol{\beta}} G] =
(\Lan^{\ol\alpha}_{\ol K}\,F[\phi :=_{\ol{\beta}} G]) \ol{A[\phi
    :=_{\ol{\beta}} G]}\]

We must also extend our term calculus to accommodate GADTs. To this
end, we first give introduction and elimination rules appropriate to
our generalized $\Nat$-types, replacing the tenth and eleventh rules
in Figure~\ref{fig:terms} with

\[
\AXC{$\Gamma; \Phi \vdash F$}
\AXC{$\Gamma; \Phi \vdash G$}
\AXC{$\Gamma; \Phi \,|\, \Delta, x : F \vdash t: G$}
\TIC{$\Gamma; \emptyset
  \,|\, \Delta \vdash L_\Phi x.t : \Nat^\Phi \,F \,G$}
\DisplayProof
\]
and
\[
\AXC{$\ol{\Gamma;\Phi,\ol\beta \vdash K}$}
\AXC{$\Gamma; \emptyset
  \,|\, \Delta \vdash t : \Nat^{\ol\psi} \,F \,G$}
\AXC{$\Gamma;\Phi \,|\, \Delta \vdash s: F[\overline{\psi :=_{\ol\beta} K}]$}
\TIC{$\Gamma;\Phi\,|\, \Delta \vdash t_{\ol K} s:
  G[\overline{\psi :=_{\ol\beta} K}]$}
\DisplayProof
\]
respectively. Next, we add to the term calculus in
Figure~\ref{fig:terms} introduction and elimination rules for terms of
$\Lan$-types according to Definition~\ref{def:lke}. This gives the
rules

\[
\AXC{$\Gamma; \Phi,\ol{\alpha}\vdash F$}
\AXC{$\ol{\Gamma; \ol{\alpha} \vdash K}$}
\BIC{$\Gamma;\emptyset~|~\emptyset \vdash \int_{\ol K,F} : \Nat^{\Phi,\ol\alpha}\, F\, (\Lan^{\ol\alpha}_{\ol{K}}\,  F) \ol{K}$}
\DisplayProof
\]
and
\[
\AXC{$\Gamma;\emptyset~|~\Delta \vdash t : \Nat^{\Phi,\ol\alpha}\,
  F\;G[\ol{\beta := K}]$}
\UIC{$\Gamma;\emptyset~|~\Delta \vdash \partial^{G, \ol K}_F t :
\Nat^{\Phi, \ol\beta}\, (\Lan^{\ol\alpha}_{\ol K}\,F)\ol\beta\, G$}
\DisplayProof
\]

\noindent
respectively. Note that both of these rules make essential use of our
generalized $\Nat$-types. If $\Phi$ were required to be empty then we
would not be able to express, e.g., the types
\[\Nat^{\phi, \gamma,  \delta} (\phi \gamma \times
\phi \delta) \; (\phi (\gamma \times \phi))\] and
\[\Nat^{\phi, \beta}\,(\Lan^{\gamma,\delta}_{\gamma \times
  \delta} \,(\phi \gamma \times \phi \delta))\beta \; (\phi \beta)\]
associated with $\mathtt{Seq}$'s data constructor $\mathtt{spair}$.

\subsection{Extending the Type Semantics}\label{sec:type-sem}

To construct a parametric model of the kind we seek, we must give
suitable interpretations of our generalized $\Nat$-types and
$\Lan$-types, as well as of terms of both such types, in both $\set$
and $\rel$. (Note that it is having both $\mathsf{Set}$ and
$\mathsf{Rel}$ interpretations, and having these be interconnected in
the specific way parametricity demands, that is at issue here; if we
are only interested in having \emph{some} model of the type calculus,
then the one in~\cite{jp19} will do.) Recalling that the type
constructor variables bound by $\Nat$ in the formation rule for
generalized $\Nat$-types are interpreted in $\set$ as functors and in
$\rel$ as relation transformers, we let $k_i$ be the arity of the
$i^{th}$ element of the sequence $\Phi$ and extend the set and
relational interpretations for $\Nat$ types from
Section~\ref{sec:type-interp} as follows:
\begin{align*}
\setsem{\Gamma; \emptyset \vdash \Nat^\Phi\,F\,G}\rho
&= \{\eta : \lambda \ol{K}. \,\setsem{\Gamma; \Phi \vdash F}\rho[\Phi := \ol{K}]
  \Rightarrow \lambda \ol{K}.\,\setsem{\Gamma; \Phi \vdash G}\rho[\Phi := \ol{K}]~|~\\
&\hspace{0.3in} \forall \ol{K = (K^1, K^2, K^*) : RT_k}.\\
&\hspace{0.4in} (\eta_{\overline{K^1}}, \eta_{\overline{K^2}})
: \relsem{\Gamma; \Phi \vdash F}\Eq_{\rho}[\Phi := \ol{K}]
\rightarrow \relsem{\Gamma; \Phi \vdash G}\Eq_{\rho}[\Phi := \ol{K}] \} \\
\end{align*}
\begin{align*}
\relsem{\Gamma; \emptyset \vdash \Nat^\Phi \,F\,G}\rho
&= \{\eta : \lambda \ol{K}.\,\relsem{\Gamma; \Phi \vdash F}\rho[\Phi := \ol{K}]
\Rightarrow \lambda \ol{K}. \,\relsem{\Gamma; \Phi \vdash G}\rho[\Phi := \ol{K}]\}\\
&=
  \{(\eta_1,\eta_2) \in \setsem{\Gamma; \emptyset
    \vdash \Nat^\Phi
    \,F\,G} (\pi_1 \rho) \times \setsem{
    \Gamma;\emptyset
    \vdash \Nat^\Phi \,F\,G} (\pi_2
  \rho)~|~\\
& \hspace{0.3in} \forall \ol{K = (K^1, K^2, K^*) : RT_k}.\\
& \hspace{0.4in} ((\eta_1)_{\ol{K^1}},(\eta_2)_{\ol{K^2}}) \in
  (\relsem{\Gamma; \Phi \vdash G}\rho[\Phi := \ol{K}])^{\relsem{\Gamma;\Phi\vdash F}\rho[\Phi := \ol{K}]} \}\\
\end{align*}

Of course, we intend to interpret $\Lan$-types as actual left Kan
extensions. To get started, we define their set and relational
interpretations to be
\[
  \setsem{\Gamma;\Phi \vdash
    (\Lan^{\ol{\alpha}}_{\ol{K}}\,F)\ol{A}}\rho = (
  \textit{Lan}_{\lambda \ol{S}.\,\ol{\setsem{\Gamma; \ol\alpha \vdash K}
      \rho[\ol{\alpha := S}]}}\, \lambda \ol{S}.\,\setsem{\Gamma; \Phi, \ol\alpha
    \vdash F} \rho[\ol{\alpha := S}])\, \ol{\setsem{\Gamma; \Phi
      \vdash A}\rho}
\]
and
\[
  \relsem{\Gamma;\Phi \vdash
    (\Lan^{\ol{\alpha}}_{\ol{K}}\,F)\ol{A}}\rho = (
  \textit{Lan}_{\lambda \ol R.\,\ol{\relsem{\Gamma; \ol\alpha \vdash
        K}\rho[\ol{\alpha := R}]}}\, \lambda \ol R.\,\relsem{\Gamma;
    \Phi, \ol\alpha \vdash F}\rho[\ol{\alpha := R}])\,
  \ol{\relsem{\Gamma; \Phi \vdash A}\rho}
\]

\noindent
respectively. But in order to guarantee that the IEL, and thus
parametricity, continues to hold for the model we are constructing for
our extended calculus, we might think we have to cut down the set
interpretations of $\Lan$-types by restricting to the subcollection of
those interpretations that are well-behaved with respect to the
IEL\@. This would mirror the cutting down we have already seen in the
set interpretations of both our original $\Nat$-types and our of
generalized $\Nat$-types. However, as we will see in
Proposition~\ref{prop:full-lan} below, this will not be possible if we
want the set and relational interpretations of our $\Lan$-types to
include the set and relational interpretations of the terms given by
the $\Lan$-introduction rule. Since we do indeed want this, the upshot
is that the above interpretations of $\Lan$-types are, in fact, the
only possibility.

We must also define the functorial actions of the above set and
relational interpretations of $\Lan$-types on morphisms. If $f : \rho
\to \rho'$ is a morphism of set environments, then
\[
\setsem{\Gamma;\Phi \vdash (\Lan^{\ol{\alpha}}_{\ol{K}}\,F)\ol{A}}f
: \setsem{\Gamma;\Phi \vdash (\Lan^{\ol{\alpha}}_{\ol{K}}\,F)\ol{A}}\rho
\to \setsem{\Gamma;\Phi \vdash (\Lan^{\ol{\alpha}}_{\ol{K}}\,F)\ol{A}}\rho'
\]

\noindent
is defined to be
\begin{multline*}
\big(\textit{Lan}_{\lambda \ol{S}.\,\ol{\setsem{\Gamma; \ol\alpha
      \vdash K}\rho[\ol{\alpha := S}]}}\,
      \lambda \ol{S}.\,\setsem{\Gamma; \Phi, \ol\alpha \vdash
        F}f[\ol{\alpha := \id_S}]\big)\,
    \ol{\setsem{\Gamma; \Phi \vdash A}\rho'} \\
\circ\;
  \big(\textit{Lan}_{\lambda \ol{S}.\,\ol{\setsem{\Gamma; \ol\alpha
        \vdash K}\rho[\ol{\alpha := S}]}}\,
      \lambda \ol{S}.\,\setsem{\Gamma; \Phi, \ol\alpha \vdash
        F}\rho[\ol{\alpha := S}]\big)\,
    \ol{\setsem{\Gamma; \Phi \vdash A}f}
\end{multline*}
or, equivalently by naturality,
\begin{multline*}
\big(\textit{Lan}_{\lambda \ol{S}.\,\ol{\setsem{\Gamma; \ol\alpha
      \vdash K}\rho[\ol{\alpha := S}]}}\,
      \lambda \ol{S}.\,\setsem{\Gamma; \Phi, \ol\alpha \vdash
        F}\rho'[\ol{\alpha := S}]\big)\,
    \ol{\setsem{\Gamma; \Phi \vdash A}f} \\
\circ\;
  \big(\textit{Lan}_{\lambda \ol{S}.\,\ol{\setsem{\Gamma; \ol\alpha
        \vdash K}\rho[\ol{\alpha := S}]}}\,
     \lambda \ol{S}.\, \setsem{\Gamma; \Phi, \ol\alpha \vdash
       F}f[\ol{\alpha := \id_S}]\big)\,
    \ol{\setsem{\Gamma; \Phi \vdash A}\rho}
\end{multline*}

\noindent
The functorial action for the relational interpretations of
$\Lan$-types is defined analogously.

\subsection{Extending the Term Semantics}\label{sec:term-sem}

Recalling the notation for the colimit representation of left Kan
extensions from Section~\ref{sec:lke}, we can define the set and
relational interpretations of the terms from the introduction and
elimination rules for $\Lan$ types from Section~\ref{sec:ext-calc} by
\[
\setsem{\Gamma;\emptyset~|~\emptyset \vdash \int_{\ol K,F}
: \Nat^{\Phi, \ol\alpha}\, F\, (\Lan^{\ol\alpha}_{\ol{K}}\,  F) \ol{K}}\rho\,d
= \eta
\]
and
\[
\setsem{\Gamma;\emptyset~|~\Delta \vdash \partial^{G, \ol K}_F t
: \Nat^{\Phi, \ol\beta}\, (\Lan^{\ol\alpha}_{\ol K}\,F) \ol \beta\; G}\rho
= \mu
\]
respectively, where, for each $\ol{N : RT_k}$, $\eta_{\ol{N}}$
is the natural transformation associated with the left Kan extension

\[\begin{tikzcd} [column sep = large, row sep = large]
\set^{|\ol \alpha|}
\ar[rr, "\lambda \ol{A}.\,\setsem{\Gamma; \Phi, \ol\alpha
    \vdash F}\rho {[\Phi, \ol \alpha := \ol{N}, \ol{A} ]}"{name=F}]
\ar[dr, "\ol{\lambda \ol{A}.\,\setsem{\Gamma; \ol\alpha \vdash
      K}\rho {[\ol{\alpha := A} ]}}\;\;\;\;"{left}] && \set \\
&\set^{|\ol K|} \ar[ur, "\;\;\mathit{Lan}_{\ol{\lambda \ol A.\,
      \setsem{\Gamma; \ol\alpha \vdash K}\rho {[\ol{\alpha := A}
  ]}}}\, \lambda \ol A.\, \setsem{\Gamma; \Phi, \ol\alpha \vdash
    F}\rho {[\Phi, \ol \alpha :=  \ol N, \ol A ]}"{right}]
\ar[d, Rightarrow, shorten <= 2mm, shorten >= 1mm,
          from=F, to=2-2,  "\eta_{\ol{N}}"{}]
\end{tikzcd}\]
and, for all $\ol{N : RT_k}$ and $d : \setsem{\Gamma;\emptyset \vdash
  \Delta}$,
\[\begin{array}{l}
\mu_{\ol{N}\, d} :
\mathit{Lan}_{\ol{\lambda \ol A.\,\setsem{\Gamma; \ol\alpha \vdash K}\rho
  {[\ol{\alpha := A} ]}}} \lambda \ol A.\,\setsem{\Gamma; \Phi,
  \ol\alpha \vdash F}\rho {[\Phi, \ol \alpha := \ol N, \ol A ]}\\[2ex]
\hspace*{0.8in} \to \lambda \ol B.\, \setsem{\Gamma;
  \Phi, \ol\beta \vdash G} \rho[\Phi, \ol \beta := \ol N, \ol B]
\end{array}\]
is the unique natural transformation such that
\begin{equation}\label{eq:defn-of-mu-sub-N}
(\mu_{\ol{N}\, d} \, \ol{\lambda \ol A.\,\setsem{\Gamma; \ol\alpha \vdash K}
  \rho[\ol{\alpha := A}]}) \circ \eta_{\ol{N}}
= (\setsem{\Gamma; \emptyset \,|\, \Delta \vdash t : \Nat^{\Phi,
  \ol\alpha} \, F \,\, G[\ol{\beta := K}]}\rho \, d)_{\ol{N}}
\end{equation}
as given by the universal property of the left Kan extension.  The
relational interpretations\\ $\relsem{\Gamma;\emptyset~|~\emptyset
  \vdash \int_{\ol K,F} : \Nat^{\Phi, \ol\alpha}\, F\,
  (\Lan^{\ol\alpha}_{\ol{K}}\, F) \ol{K}}$ and
$\setsem{\Gamma;\emptyset~|~\Delta \vdash \partial^{G, \ol K}_F t :
  \Nat^{\Phi, \ol\beta}\, (\Lan^{\ol\alpha}_{\ol K}\,F) \ol \beta\;
  G}$ can then be defined entirely analogously.

\medskip
We conclude this subsection by showing, as promised above, that no
cutting down of set and relational interpretations of $\Lan$-types by
taking subsets is possible. We have:

\begin{prop}\label{prop:full-lan}
Let $F : \set^k \to \set$ and $\ol K : \set^k \to \set^h$ be functors.
If $L : \set^h \to \set$ is a functor such that
\begin{itemize}
\item $L \ol{A} \subseteq
  (\textit{Lan}_{\ol K} F)\ol{A}$ for all $\ol{A : \set}$,
\item $(\textit{Lan}_{\ol K} F)\ol{f}\,|_{L \ol A}\; x \in L
  \ol{B}$ for all $\ol{f : A \to B}$ and $x \in L \ol A$, and
\item $\eta_{\ol{A}}\, y \in L ( \ol{K \ol{A}} )$ for all $\ol{A :
  \set}$ and $y \in F \ol{A}$,
\end{itemize}
then $L = \textit{Lan}_{\ol K} F$.
\end{prop}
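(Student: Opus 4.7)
The plan is to exploit the colimit presentation of left Kan extensions from Equation~\ref{eq:colim-form} together with the standard identification of the unit $\eta_{\ol S}$ with the cocone morphism $j_{\ol S, \ol{\id_{K \ol S}}}$. The three hypotheses then combine as a ``base case plus closure under functorial action'' pair that forces $L \ol A$ to contain every element of the colimit, and hence all of $(\textit{Lan}_{\ol K} F) \ol A$.

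First I would recall that, under the colimit construction of $\textit{Lan}_{\ol K} F$, the component $\eta_{\ol S} : F \ol S \to (\textit{Lan}_{\ol K} F)(\ol{K \ol S})$ of the unit coincides with the cocone morphism $j_{\ol S, \ol{\id_{K \ol S}}}$ from Section~\ref{sec:lke}. Hypothesis~(3) thus asserts that for every $\ol S : \set_0^k$ and every $y \in F \ol S$, we have $j_{\ol S, \ol{\id_{K \ol S}}}(y) \in L(\ol{K \ol S})$.

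Next I would extend this to arbitrary cocone morphisms. For any $\ol A : \set^h$, $\ol S : \set_0^k$, $\ol{f : K \ol S \to A}$, and $y \in F \ol S$, specializing Equation~\ref{eq:cocone-def} to $\ol g = \ol f$ applied at the identity-indexed cocone morphism gives
\[
j_{\ol S, \ol f}(y) \;=\; (\textit{Lan}_{\ol K} F)\,\ol f\,\bigl(j_{\ol S, \ol{\id_{K \ol S}}}(y)\bigr) \;=\; (\textit{Lan}_{\ol K} F)\,\ol f\,\bigl(\eta_{\ol S}(y)\bigr)
\]
Applying hypothesis~(3) to place $\eta_{\ol S}(y)$ inside $L(\ol{K \ol S})$, and then hypothesis~(2) to propagate this membership along $(\textit{Lan}_{\ol K} F)\,\ol f$, yields $j_{\ol S, \ol f}(y) \in L \ol A$. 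Since Equation~\ref{eq:colim-form} presents $(\textit{Lan}_{\ol K} F) \ol A$ as the colimit of the diagram whose cocone morphisms are exactly the $j_{\ol S, \ol f}$, every element of $(\textit{Lan}_{\ol K} F) \ol A$ arises in this form, so $(\textit{Lan}_{\ol K} F) \ol A \subseteq L \ol A$. Hypothesis~(1) provides the reverse inclusion, and the subfunctoriality that hypothesis~(2) certifies forces the action of $L$ on morphisms to be the restriction of, hence equal to, that of $\textit{Lan}_{\ol K} F$.

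The proof has essentially no obstacle: the argument is mechanical once the colimit presentation is in hand, and its only subtlety is the standard identification of $\eta_{\ol S}$ with $j_{\ol S, \ol{\id_{K \ol S}}}$. What makes the proposition noteworthy is its consequence rather than its proof: it shows that the set and relational interpretations of $\Lan$-types cannot be restricted to any proper subfunctor while still accommodating an interpretation of $\int_{\ol K, F}$ via $\eta$. This rigidity is precisely what will block the ``cutting down'' step used to enforce the IEL for $\Nat$-types, and thereby obstructs the naive extension of our parametric model to GADTs.
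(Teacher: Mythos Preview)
Your proof is correct and follows essentially the same route as the paper: both use the colimit presentation (Equation~\ref{eq:colim-form}) to write an arbitrary element of $(\textit{Lan}_{\ol K} F)\ol A$ as $j_{\ol S,\ol f}\,w$, then factor this as $(\textit{Lan}_{\ol K} F)\ol f\,(\eta_{\ol S}\,w)$ via Equation~\ref{eq:cocone-def}, and apply hypotheses~(3), (2), and~(1) in turn. Your explicit remark that hypothesis~(2) also pins down the action on morphisms is a small addition the paper leaves implicit.
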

\begin{proof}
For all $\ol{A : \set}$ and $z \in (\textit{Lan}_{\ol K} F)\ol{A}$
there exist $\ol{S : \set_0}$, $\ol{f : K \ol{S} \to A}$, and $w \in F
\ol{S}$ such that $z = \iota_{\ol{S}, \ol{f}}\, w$.  Thus, $z =
\iota_{\ol{S}, \ol{f}}\, w = (\textit{Lan}_{\ol K} F) \ol{f}
(\eta_{\ol{S}} \,w)$ by Equation~\ref{eq:cocone-def}. Then by the
third assumption above we have that $\eta_{\ol{S}}\, w \in L (\ol{K
  \ol{S}})$, and by the second assumption above we have that $z \in L
\ol{A}$. This gives $(\textit{Lan}_{\ol K} F)\ol{A} \subseteq L \ol
A$. Finally, by the first assumption above, we therefore have that $L
\ol{A} = (\textit{Lan}_{\ol K} F)\ol{A}$.
\end{proof}
Thus, if $L$ were a restriction of $\lambda \ol B. \setsem{\Gamma; \Phi,\ol\beta
  \vdash (\Lan^{\ol\alpha}_{\ol K} F){\ol \beta}}\rho[\Phi := \ol
  N][\ol{\beta := B}]$,
if the functorial
action of $L$ were a restriction of that of $\lambda \ol B. \setsem{\Gamma;
  \Phi,\ol\beta \vdash (\Lan^{\ol\alpha}_{\ol K} F){\ol
    \beta}}\rho[\Phi := \ol N][\ol{\beta := B}]$, and if
$L(\ol{\setsem{\Gamma;\ol\alpha \vdash K}\rho[\ol{\alpha := A}]})$
contained $\setsem{\Gamma;\emptyset~|~\emptyset \vdash \!\int_{\ol
    K,F} : \Nat^{\Phi, \ol\alpha}\, F\, (\Lan^{\ol\alpha}_{\ol{K}}\,
  F) \ol{K}}\rho\,d\,\ol N\,\ol A\,y$ for all $\ol N$, $\ol A$, and
$y$, then $L$ would have to be the entirety of
 $\lambda \ol B. \setsem{\Gamma; \Phi,\ol\beta
  \vdash (\Lan^{\ol\alpha}_{\ol K} F){\ol \beta}}\rho[\Phi := \ol
  N][\ol{\beta := B}]$.
An analogous result holds for relational interpretations of
$\Lan$-types. This shows that the machinery of this section does not
give a well-defined relational semantics and, given the expected
properties of the term semantics, this cannot be
fixed by restricting the relational semantics of $\mathsf{Lan}$-types
from Section~\ref{sec:type-sem}.

\subsection{Parametricity and GADTs}

Having extended our calculus with both $\Lan$-types and terms of such
types, and having given sensible set and relational interpretations
for these types and terms, we now need to verify that these
interpretations give rise to a parametric model. The first step in
this process is to extend Lemma~\ref{lem:rel-transf-morph} to
$\Lan$-types by adding a clause for $\Lan$-types to the proof.
Unfortunately, however, Lemma~\ref{lem:rel-transf-morph} does not
extend to arbitrary $\Lan$-types, as the following example shows.

\begin{exa}\label{ex:no-rel-transf}
Consider the type $\emptyset; \alpha \vdash
(\Lan^{\emptyset}_{\onet}\, \onet) \alpha$. The analogue for $\rel$ of
Equation~\ref{eq:colim-form} gives
\[
\begin{array}{rl}
\relsem{ \emptyset; \alpha \vdash (\Lan^{\emptyset}_{\onet} \onet)
  \alpha } \rho [\alpha := (1, 0, 0)]
&= (\textit{Lan}_{\Eq_1} \Eq_1)(1, 0, 0) \\
&= \colim{f : \Eq_1 \to (1, 0, 0)}{\Eq_1} \\
&= (0, 0, 0)
\end{array}
\]
Here, the last equality holds because there are no morphisms in $\rel$
from $\Eq_1$ to $(1, 0, 0)$, and because $(0,0,0)$ is the initial
object in that category. On the other hand, for the set interpretation
with respect to the first projection of the relation environment
$\rho[\alpha := (1,0,0)]$, Equation~\ref{eq:colim-form} gives
\[
\begin{array}{rl}
\setsem{ \emptyset; \alpha \vdash (\Lan^{\emptyset}_{\onet} \onet)
  \alpha } ( \pi_1 ( \rho [\alpha := (1, 0, 0)] ) )
&= \setsem{ \emptyset; \alpha \vdash (\Lan^{\emptyset}_{\onet} \onet)
  \alpha } (\pi_1 \rho) [\alpha := 1] \\
&= (\textit{Lan}_{1} 1) 1 \\
&= \colim{1 \to 1}{1} \\
&= 1
\end{array}
\]
because $\id_1$ is the unique arrow from $1$ to $1$ in $\set$.
Lemma~\ref{lem:rel-transf-morph} therefore cannot hold.
\end{exa}

The problem in Example~\ref{ex:no-rel-transf} lies in the fact that
the type along whose interpretation we extend contains constants, i.e.,
subtypes constructed from $\onet$. Indeed, if $\ol{\emptyset;
  \overline{\alpha} \vdash K}$ consists only of polynomial (i.e.,
sum-of-products) types not containing constants (i.e., formed only
from $+$, $\times$, and the variables in $\alpha$), then
Lemma~\ref{lem:rel-transf-morph} actually does holds for $\Gamma; \Phi
\vdash (\Lan^{\alpha}_{\ol{K}} F)\ol{A}$. This is proved in the
following proposition, which covers the case of
Lemma~\ref{lem:rel-transf-morph} for such $\Lan$-types.

\begin{prop}
If $\ol{\emptyset; \overline{\alpha} \vdash K}$ consists only of
polynomial types not containing constants, then
\[(\setsem{ \Gamma; \Phi \vdash (\Lan^{\ol \alpha}_{\ol{K}} F)\ol{A}
}, \setsem{ \Gamma; \Phi \vdash (\Lan^{\ol \alpha}_{\ol{K}} F)\ol{A} },
\relsem{ \Gamma; \Phi \vdash (\Lan^{\ol \alpha}_{\ol{K}} F)\ol{A} })\]
is an $\omega$-cocontinuous environment transformer.
\end{prop}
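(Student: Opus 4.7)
The plan is to verify each clause of Definition~\ref{def:relenv-functor} in the case $k = 0$ appropriate to types, in parallel with how Lemma~\ref{lem:rel-transf-morph} was established for the other type formers. The $\omega$-cocontinuity of $\setsem{\Gamma;\Phi \vdash (\Lan^{\ol\alpha}_{\ol K} F)\ol A} : \setenv \to \set$ and of $\relsem{\Gamma;\Phi \vdash (\Lan^{\ol\alpha}_{\ol K} F)\ol A} : \relenv \to \rel$ will follow from the inductive hypothesis applied to $F$, each $K_i$, and each $A_j$, together with the fact that taking left Kan extensions along $\omega$-cocontinuous functors between locally finitely presentable categories is itself $\omega$-cocontinuous, as is already used for the other type formers.

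The substantive clauses concern the domain and codomain of the relational interpretation and the compatibility clauses involving $\pi_1$ and $\pi_2$. Expanding both sides via the colimit formula in Equation~\ref{eq:colim-form} and its $\rel$-analogue, and using that $\pi_1 : \rel \to \set$ preserves colimits (since the assignment $B \mapsto (B, 1, B \times 1)$ is right adjoint to it), the task reduces to showing that the comparison functor $P$ between the two comma categories indexing these colimits, sending $(\ol R, \ol f)$ to $(\pi_1 \ol R, \pi_1 \ol f)$, is \emph{final}, so that $\colim{\ol R, \ol f}{\setsem{F}(\pi_1\rho)[\ol{\alpha := \pi_1 R}]}$ collapses to the set-theoretic left Kan extension evaluated at $\setsem{\ol A}(\pi_1\rho)$.

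The key lemma driving finality is the following: if each $K_i$ is built only from $+$, $\times$, and the variables in $\ol\alpha$, then
\[
\relsem{K_i}\rho[\ol{\alpha_j := (S_j, 0, 0)}] \;=\; \bigl(\setsem{K_i}(\pi_1\rho)[\ol{\alpha_j := S_j}],\, 0,\, 0\bigr)
\]
A straightforward structural induction on $K_i$ proves this, using that $+$ and $\times$ act componentwise on relations and that $0 + 0 = 0$ and $0 \times X = 0$; the absence of $\onet$-summands is precisely what prevents the second projection from escaping $0$ and thereby triggering the pathology of Example~\ref{ex:no-rel-transf}. Given the lemma, morphisms from $(\setsem{K_i}\cdots,\, 0,\, 0)$ into any $\relsem{A_j}\rho$ in $\rel$ are in natural bijection with morphisms $\setsem{K_i}\cdots \to \setsem{A_j}(\pi_1\rho)$ in $\set$, as any map out of $0$ vacuously satisfies the relation condition and is uniquely determined. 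This delivers both non-emptiness and connectedness of each slice $(\ol S, \ol g) \downarrow P$: the standard lift $\bigl(\ol{(S_j, 0, 0)}, \ol{f_0}, \id\bigr)$ always exists, and any other lift $(\ol R, \ol f, \ol \phi)$ of $(\ol S, \ol g)$ receives a canonical morphism from it, obtained by pairing $\ol \phi$ with the unique empty maps into the second projections of the $R_j$'s.

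The symmetric argument handles the $\pi_2$ clause, and the remaining morphism-action clauses of Definition~\ref{def:relenv-functor} reduce to the same finality analysis applied pointwise along $f : \rho \to \rho'$. The main obstacle I anticipate is the connectedness step of the finality argument, where one must verify that the comparison morphisms genuinely live in $\rel$ rather than merely in $\set \times \set$; the key lemma makes this manageable by forcing the source relation to be of the degenerate form $(\,\cdot\,,\,0,\,0)$, which trivialises the relation-preservation condition.
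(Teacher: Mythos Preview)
Your approach is correct and rests on precisely the same key observation as the paper's: because each $K_i$ is a constant-free polynomial in $\ol\alpha$, one has $\relsem{\emptyset;\ol\alpha \vdash K_i}\rho[\ol{\alpha := (S,0,0)}] = (\setsem{\emptyset;\ol\alpha \vdash K_i}(\pi_1\rho)[\ol{\alpha := S}],\,0,\,0)$, so every $\set$-index $(\ol S,\ol g)$ lifts to the $\rel$-index $(\ol{(S,0,0)},\ol{(g,!)})$. Where you package the colimit comparison as \emph{finality} of the projection functor $P$ between the two comma categories, the paper instead builds explicit mutual inverses $h$ and $k$, defined by $h \circ \pi_1\iota_{\ol R,\ol m} = j_{\ol{\pi_1 R},\ol{\pi_1 m}}$ and $k \circ j_{\ol S,\ol f} = \pi_1\iota_{\ol{(S,0,0)},\ol{(f,!)}}$, and verifies $h\circ k = \id$ and $k\circ h = \id$ directly. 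These are the same argument in different clothing: your weak-initiality of the canonical lift in each slice $(\ol S,\ol g)\downarrow P$ is exactly the fact that drives $k\circ h = \id$ in the paper's computation. Your presentation is more compact; the paper's is more self-contained.

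One caution: your final sentence reducing the morphism clause to ``the same finality analysis applied pointwise along $f$'' is a bit quick. The indexing comma categories themselves change with $\rho$, so this is not literally a single finality statement; the paper treats the morphism clause (its Equations~\ref{eq:part-one} and~\ref{eq:part-two}) separately, transporting through the already-established isomorphisms $h,k$ and invoking the colimit universal property again. The argument goes through, but it is more than a one-liner.
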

\begin{proof}
We need only show that
\begin{equation}\label{eq:former}
\pi_i \,( \relsem{ \Gamma; \Phi \vdash (\Lan^{\ol \alpha}_{\ol{K}} F)\ol{A}
} \rho ) = \setsem{ \Gamma; \Phi \vdash (\Lan^{\ol \alpha}_{\ol{K}}
  F)\ol{A} } (\pi_i \rho)
\end{equation}
and
\begin{equation}\label{eq:latter}
\pi_i \,( \relsem{ \Gamma; \Phi \vdash (\Lan^{\ol \alpha}_{\ol{K}} F)\ol{A}
} f ) = \setsem{ \Gamma; \Phi \vdash (\Lan^{\ol \alpha}_{\ol{K}}
  F)\ol{A} } (\pi_i f)
\end{equation}
for all $\rho$, $\rho'$, $f : \rho \to \rho'$, and $i \in \{1, 2\}$.

To prove Equation~\ref{eq:former} we first observe that, by (the
analogue for $\rel$ of) Equation~\ref{eq:colim-form}, we have
\[
\relsem{ \Gamma; \Phi \vdash (\Lan^{\ol \alpha}_{\ol{K}} F)\ol{A} } \rho =
\colim{\ol{R : \rel_0},\, \ol{m : \relsem{ \emptyset; \ol\alpha \vdash K }
    \rho[\ol{\alpha := R}] \to \relsem{ \Gamma; \Phi \vdash A }
    \rho}}{ \relsem{ \Gamma; \Phi, \ol\alpha \vdash F }
  \rho[\ol{\alpha := R}] }
\]
with $\iota$ mapping the cocone into this colimit.  Since each
projection $\pi_i$ is cocontinuous, we therefore have that
\begin{align*}
&\pi_i\, ( \relsem{ \Gamma; \Phi \vdash (\Lan^{\ol \alpha}_{\ol{K}}
  F)\ol{A} } \rho ) \\
  &\quad {} = \colim{\ol{R : \rel_0}, \ol{m : \relsem{
      \emptyset; \ol\alpha \vdash K } \rho[\ol{\alpha := R}] \to
    \relsem{ \Gamma; \Phi \vdash A } \rho}}{ \setsem{ \Gamma; \Phi,
    \ol\alpha \vdash F } (\pi_i \rho) [\ol{\alpha := \pi_i R}]}
\end{align*}
with $\pi_i \iota$ mapping the cocone into this colimit.  On the
other hand, Equation~\ref{eq:colim-form} also gives
\begin{align*}
&\setsem{ \Gamma; \Phi \vdash (\Lan^{\ol \alpha}_{\ol{K}} F)\ol{A} } (\pi_i \rho) \\
&\quad {} = \colim{\ol{S : \set_0}, \ol{f : \setsem{ \emptyset;
      \ol\alpha \vdash K } (\pi_i \rho)[\ol{\alpha := S}] \to \setsem{
      \Gamma; \Phi \vdash A } (\pi_i \rho)}}{ \setsem{ \Gamma; \Phi,
    \ol\alpha \vdash F } (\pi_i \rho) [\ol{\alpha := S}]}
\end{align*}
with $j$ mapping the cocone into this colimit.

When $i = 1$ we prove that
\[
\pi_1 \,( \relsem{ \Gamma; \Phi \vdash (\Lan^{\ol \alpha}_{\ol{K}}
  F)\ol{A} } \rho ) = \setsem{ \Gamma; \Phi \vdash (\Lan^{\ol
    \alpha}_{\ol{K}} F)\ol{A} } (\pi_1 \rho)
\]
by
providing a pair of inverse functions; the proof when $i=2$ is
entirely analogous. To this end, we define
\[
h : \pi_1\, ( \relsem{ \Gamma; \Phi \vdash (\Lan^{\ol \alpha}_{\ol{K}}
  F)\ol{A} } \rho ) \to \setsem{ \Gamma; \Phi \vdash (\Lan^{\ol
    \alpha}_{\ol{K}} F)\ol{A} } (\pi_1 \rho)
\]
to be the unique function given by the universal property of the
colimit representation of $h$'s domain. That is, we take $h$ to be the
unique morphism such that, for any $\ol{R : \rel_0}$ and
\[\ol{m :
  \relsem{ \emptyset; \ol\alpha \vdash K } \rho[\ol{\alpha := R}] \to
  \relsem{ \Gamma; \Phi \vdash A } \rho}\] we have
\begin{equation}\label{eq:cocone-proj}
h \circ (\pi_1 \iota_{\ol{R},\, \ol{m}}) = j_{\ol{\pi_1 R},
  \,\ol{\pi_1 m}}
\end{equation}

Now, since $K$ does not contain constants, $\setsem{ \emptyset;
  \ol\alpha \vdash K } (\pi_2 \rho) [\ol{\alpha := 0}] = 0$. Thus, for
any $\ol{S : \set_0}$ and $\ol{f : \setsem{ \emptyset; \ol\alpha
    \vdash K } (\pi_1 \rho)[\ol{\alpha := S}] \to \setsem{ \Gamma;
    \Phi \vdash A } (\pi_1 \rho)} $ we have relations $\ol{(S, 0, 0)}$
and morphisms $\ol{(f, !) : \relsem{ \emptyset; \ol\alpha \vdash K }
  \rho[\ol{\alpha := (S, 0, 0)}] \to \relsem{ \Gamma; \Phi \vdash A }
  \rho}$, where, for any set $X$, we suppress the sub- and
superscripts and write $!$ for the unique morphism $!^0_X$ from $0$ to
$X$. We then define
\[
k : \setsem{ \Gamma; \Phi \vdash (\Lan^{\ol \alpha}_{\ol{K}} F)\ol{A}
} (\pi_1 \rho) \to \pi_1 \,( \relsem{ \Gamma; \Phi \vdash (\Lan^{\ol
    \alpha}_{\ol{K}} F)\ol{A} } \rho )
\]
to be the unique function given by the universal property of the
colimit representation of $k$'s domain. That is, we take $k$ to be the
unique morphism such that, for any $\ol{S : \set_0}$ and \[\ol{f :
  \setsem{ \emptyset; \ol\alpha \vdash K } (\pi_1 \rho)[\ol{\alpha :=
      S}] \to \setsem{ \Gamma; \Phi \vdash A } (\pi_1 \rho)}\] we have
$k \circ j_{\ol{S},\, \ol{f}} = \pi_1 \iota_{\ol{(S, 0, 0)},\, \ol{(f,
    !)}}$.

To see that $h$ and $k$ are mutually inverse we first observe that
\[
h \circ k \circ j_{\ol{S},\, \ol{f}}
= h \circ (\pi_1 \iota_{\ol{(S, 0, 0)}, \,\ol{(f, !)}})
= j_{\ol{\pi_1 (S, 0, 0)}, \,\ol{\pi_1 (f, !)}}
= j_{\ol{S},\, \ol{f}}
\]
for all $\ol{S : \set_0}$ and $\ol{f : \setsem{ \emptyset; \ol\alpha
    \vdash K } (\pi_1 \rho)[\ol{\alpha := S}] \to \setsem{ \Gamma;
    \Phi \vdash A } (\pi_1 \rho)}$, so that $h \circ k = \id$.
To see that  $k \circ h  = \id$ we first observe that
\begin{equation}\label{eq:k-comp-h}
k \circ h \circ (\pi_1 \iota_{\ol{R},\, \ol{m}}) = k \circ
j_{\ol{\pi_1 R}, \,\ol{\pi_1 m}} = \pi_1 \iota_{\ol{(\pi_1 R, 0,
    0)},\, \ol{(\pi_1 m, !)}}
\end{equation}
for all $\ol{R : \rel_0}$ and $\ol{m : \relsem{ \emptyset; \ol\alpha
    \vdash K } \rho[\ol{\alpha := R}] \to \relsem{ \Gamma; \Phi \vdash
    A } \rho}$. Then note that, for each sequence of morphisms
$\ol{(\id_{\pi_1 R}, !) : (\pi_1 R, 0, 0) \to R}$, we have that
\[\ol{m
\,\circ\, \relsem{ \emptyset; \ol\alpha \vdash K } \id_\rho[\ol{\alpha
    := (\id_{\pi_1 R}, !)}] = (\pi_1 m, !)}\]
in the indexing category for the colimit representation of the
codomain of $k$. This implies that
\[\iota_{\ol{R},\, \ol{m}} \,\circ\, \relsem{ \Gamma; \Phi,
  \ol\alpha \vdash F } \id_{\rho}[\ol{\alpha := (\id_{\pi_1 R}, !)}]
= \iota_{\ol{(\pi_1 R, 0, 0)},\, \ol{(\pi_1 m, !)}}\] Projecting the
first component thus gives
\[(\pi_1 \iota_{\ol{R},\,
  \ol{m}}) \, \circ\, \pi_1 ( \relsem{ \Gamma; \Phi, \ol\alpha \vdash
  F } \id_{\rho}[\ol{\alpha := (\id_{\pi_1 R}, !)}]) = \pi_1
\iota_{\ol{(\pi_1 R, 0, 0)},\, \ol{(\pi_1 m, !)}}\]
Now, the induction hypothesis on $\Gamma;\Phi,\ol\alpha
\vdash F$ gives that
\[\begin{array}{lll}
 &  & \pi_1 ( \relsem{ \Gamma; \Phi, \ol\alpha \vdash F}
\id_{\rho}[\ol{\alpha := (\id_{\pi_1 R}, !)}])\\
& = & \setsem{ \Gamma; \Phi, \ol\alpha \vdash F } \id_{\pi_1
  \rho}[\ol{\alpha := \id_{\pi_1 R}}]\\
& = & \id_{\setsem{ \Gamma; \Phi, \ol\alpha \vdash F } (\pi_1
  \rho)[\ol{\alpha := \pi_1 R}]}
\end{array}\]
so that, in fact, $\pi_1 \iota_{\ol R,\,\ol m} =
\pi_1\iota_{\ol{(\pi_1 R, 0, 0)},\, \ol{(\pi_1 m, !)}}$.  Finally, by
Equation~\ref{eq:k-comp-h} we have $k \circ h \circ (\pi_1
\iota_{\ol{R},\, \ol{m}}) = \pi_1 \iota_{\ol{(\pi_1 R, 0, 0)},\,
  \ol{(\pi_1 m, !)}}  = \pi_1 \iota_{\ol{R},\, \ol{m}}$ for all $\ol
R$ and $\ol m$, and thus $k \circ h = \id$.

To prove Equation~\ref{eq:latter}, first recall that
$\relsem{ \Gamma; \Phi \vdash (\Lan^{\ol \alpha}_{\ol{K}} F)\ol{A} }
f$ is defined to be
\begin{multline*}
\big(\textit{Lan}_{\ol{\lambda \ol{R}.\,\relsem{\Gamma; \ol\alpha
      \vdash K}\rho[\ol{\alpha := R}]}}\, \lambda
\ol{R}.\,\relsem{\Gamma; \Phi, \ol\alpha \vdash F}f[\ol{\alpha :=
    \id_R}]\big)\, \ol{\relsem{\Gamma; \Phi \vdash A}\rho'} \\ \circ\;
\big(\textit{Lan}_{\ol{\lambda \ol{R}.\,\relsem{\Gamma; \ol\alpha
      \vdash K}\rho[\ol{\alpha := R}]}}\, \lambda
\ol{R}.\,\relsem{\Gamma; \Phi, \ol\alpha \vdash F}\rho[\ol{\alpha :=
    R}]\big)\, \ol{\relsem{\Gamma; \Phi \vdash A}f}
\end{multline*}
and $\setsem{\Gamma; \Phi \vdash (\Lan^{\ol \alpha}_{\ol{K}} F)\ol{A}
} (\pi_i f)$ is defined to be
\begin{multline*}
\big(\textit{Lan}_{\ol{\lambda \ol{S}.\,\setsem{\Gamma; \ol\alpha
      \vdash K}(\pi_i \rho)[\ol{\alpha := S}]}}\, \lambda
\ol{S}.\,\setsem{\Gamma; \Phi, \ol\alpha \vdash F}(\pi_i f)[\ol{\alpha
    := \id_S}]\big)\, \ol{\setsem{\Gamma; \Phi \vdash A}(\pi_i
  \rho')}\\ \circ\; \big(\textit{Lan}_{\ol{\lambda
    \ol{S}.\,\setsem{\Gamma; \ol\alpha \vdash K}(\pi_i
    \rho)[\ol{\alpha := S}]}}\, \lambda \ol{S}.\,\setsem{\Gamma; \Phi,
  \ol\alpha \vdash F}(\pi_i \rho)[\ol{\alpha := S}]\big)\,
\ol{\relsem{\Gamma; \Phi \vdash A}(\pi_i f)}
\end{multline*}
It therefore suffices to show that
\begin{multline}\label{eq:part-one}
\pi_i \Big( \big(\textit{Lan}_{\ol{\lambda \ol{R}.\,\relsem{\Gamma; \ol\alpha
      \vdash K}\rho[\ol{\alpha := R}]}}\,
      \lambda \ol{R}.\,\relsem{\Gamma; \Phi, \ol\alpha \vdash F}f[\ol{\alpha := \id_R}]\big)\,
    \ol{\relsem{\Gamma; \Phi \vdash A}\rho'} \Big) \\
= \big(\textit{Lan}_{\ol{\lambda \ol{S}.\,\setsem{\Gamma; \ol\alpha
      \vdash K} (\pi_1 \rho) [\ol{\alpha := S}]}}\,
      \lambda \ol{S}.\,\setsem{\Gamma; \Phi, \ol\alpha \vdash F}(\pi_i f)[\ol{\alpha := \id_S}]\big)\,
    \ol{\setsem{\Gamma; \Phi \vdash A}(\pi_i \rho')}
\end{multline}
and that
\begin{multline}\label{eq:part-two}
\pi_i \Big( \big(\textit{Lan}_{\ol{\lambda \ol{R}.\,\relsem{\Gamma; \ol\alpha
        \vdash K}\rho[\ol{\alpha := R}]}}\,
      \lambda \ol{R}.\,\relsem{\Gamma; \Phi, \ol\alpha \vdash F}\rho[\ol{\alpha := R}]\big)\,
    \ol{\relsem{\Gamma; \Phi \vdash A}f} \Big) \\
= \big(\textit{Lan}_{\ol{\lambda \ol{S}.\,\setsem{\Gamma; \ol\alpha
        \vdash K} (\pi_i \rho) [\ol{\alpha := S}]}}\,
      \lambda \ol{S}.\,\setsem{\Gamma; \Phi, \ol\alpha \vdash F} (\pi_i \rho) [\ol{\alpha := S}]\big)\,
    \ol{\setsem{\Gamma; \Phi \vdash A} (\pi_i f)}
\end{multline}

\medskip
To prove Equation~\ref{eq:part-one} we first observe that the
morphisms $\iota'_{\ol{R}, \ol{m}} \,\circ\, \relsem{\Gamma; \Phi,
  \ol\alpha \vdash F}f[\ol{\alpha := \id_R}]$ form a cocone for
\[
\big(\textit{Lan}_{\ol{\lambda \ol{R}.\,\relsem{\Gamma; \ol\alpha
      \vdash K}\rho[\ol{\alpha := R}]}}\, \lambda
\ol{R}.\,\relsem{\Gamma; \Phi, \ol\alpha \vdash F}\rho[\ol{\alpha :=
    R}]\big)\, \ol{\relsem{\Gamma; \Phi \vdash A}\rho'}
\]
with vertex
\[
\big(\textit{Lan}_{\ol{\lambda \ol{R}.\,\relsem{\Gamma; \ol\alpha
      \vdash K}\rho[\ol{\alpha := R}]}}\, \lambda
\ol{R}.\,\relsem{\Gamma; \Phi, \ol\alpha \vdash F}\rho'[\ol{\alpha :=
    R}]\big)\, \ol{\relsem{\Gamma; \Phi \vdash A}\rho'}
\]
The universal property (Equation~\ref{eq:cocone-def}) of
\[
\big(\textit{Lan}_{\ol{\lambda \ol{R}.\,\relsem{\Gamma; \ol\alpha
      \vdash K}\rho[\ol{\alpha := R}]}}\, \lambda
\ol{R}.\,\relsem{\Gamma; \Phi, \ol\alpha \vdash F}f[\ol{\alpha :=
    \id_R}]\big)\, \ol{\relsem{\Gamma; \Phi \vdash A}\rho'}
\]
therefore gives that, for all $\ol{R : \rel_0}$ and $\ol{m :
  \relsem{\Gamma; \ol\alpha \vdash K}\rho[\ol{\alpha := R}] \to
  \relsem{\Gamma; \Phi \vdash A}\rho'}$,
\[\begin{array}{lll}
& & \big(\textit{Lan}_{\ol{\lambda \ol{R}.\,\relsem{\Gamma; \ol\alpha
      \vdash K}\rho[\ol{\alpha := R}]}}\, \lambda
\ol{R}.\,\relsem{\Gamma; \Phi, \ol\alpha \vdash F}f[\ol{\alpha :=
    \id_R}]\big)\, \ol{\relsem{\Gamma; \Phi \vdash A}\rho'} \,\circ\,
\iota_{\ol{R}, \ol{m}}\\[2ex]
& = & \iota'_{\ol{R}, \ol{m}} \,\circ\, \relsem{\Gamma; \Phi,
  \ol\alpha \vdash F}f[\ol{\alpha := \id_R}]
\end{array}\]
Here, $\iota$ is the morphism mapping the cocone into the colimit
\[
\big(\textit{Lan}_{\ol{\lambda \ol{R}.\,\relsem{\Gamma; \ol\alpha
      \vdash K}\rho[\ol{\alpha := R}]}}\, \lambda
\ol{R}.\,\relsem{\Gamma; \Phi, \ol\alpha \vdash F}\rho[\ol{\alpha :=
    R}]\big)\, \ol{\relsem{\Gamma; \Phi \vdash A}\rho'}
\]
and $\iota'$ is the morphism mapping the cocone into the colimit
\[
\big(\textit{Lan}_{\ol{\lambda \ol{R}.\,\relsem{\Gamma; \ol\alpha
      \vdash K}\rho[\ol{\alpha := R}]}}\, \lambda
\ol{R}.\,\relsem{\Gamma; \Phi, \ol\alpha \vdash F}\rho'[\ol{\alpha :=
    R}]\big)\, \ol{\relsem{\Gamma; \Phi \vdash A}\rho'}
\]
Projecting, together with the induction hypothesis for $\Gamma; \Phi,
\ol\alpha\vdash F$, thus gives
\[\begin{array}{lll}
 & & \pi_i \Big( \big(\textit{Lan}_{\ol{\lambda \ol{R}.\,\relsem{\Gamma;
      \ol\alpha \vdash K}\rho[\ol{\alpha := R}]}}\, \lambda
\ol{R}.\,\relsem{\Gamma; \Phi, \ol\alpha \vdash F}f[\ol{\alpha :=
    \id_R}]\big)\, \ol{\relsem{\Gamma; \Phi \vdash A}\rho'} \Big)
\,\circ\, \pi_i \iota_{\ol{R}, \ol{m}} \\[2ex]
& = & \pi_i \iota'_{\ol{R},
  \ol{m}} \,\circ\, \setsem{\Gamma; \Phi, \ol\alpha \vdash F} (\pi_i
f)[\ol{\alpha := \id_{\pi_i R}}]
\end{array}\]
Equation~\ref{eq:cocone-proj} and its analogue for
  $\iota'$ and $j'$ then give
\[\begin{array}{lll}
& & \pi_i \Big( \big(\textit{Lan}_{\ol{\lambda
    \ol{R}.\,\relsem{\Gamma; \ol\alpha \vdash K}\rho[\ol{\alpha :=
        R}]}}\, \lambda \ol{R}.\,\relsem{\Gamma; \Phi, \ol\alpha
  \vdash F}f[\ol{\alpha := \id_R}]\big)\, \ol{\relsem{\Gamma; \Phi
    \vdash A}\rho'} \Big) \,\circ \, k \,\circ\, j_{\ol{\pi_i R},
  \ol{\pi_i m}} \\[2ex]
&  = & \pi_i \Big( \big(\textit{Lan}_{\ol{\lambda \ol{R}.\,\relsem{\Gamma;
      \ol\alpha \vdash K}\rho[\ol{\alpha := R}]}}\, \lambda
\ol{R}.\,\relsem{\Gamma; \Phi, \ol\alpha \vdash F}f[\ol{\alpha :=
    \id_R}]\big)\, \ol{\relsem{\Gamma; \Phi \vdash A}\rho'} \Big)
\,\circ\, k \,\circ\, h \,\circ \,\pi_i \iota_{\ol{R}, \ol{m}} \\[2ex]
& = & \pi_i \Big( \big(\textit{Lan}_{\ol{\lambda \ol{R}.\,\relsem{\Gamma;
      \ol\alpha \vdash K}\rho[\ol{\alpha := R}]}}\, \lambda
\ol{R}.\,\relsem{\Gamma; \Phi, \ol\alpha \vdash F}f[\ol{\alpha :=
    \id_R}]\big)\, \ol{\relsem{\Gamma; \Phi \vdash A}\rho'} \Big)
\,\circ\, \pi_i \iota_{\ol{R}, \ol{m}} \\[2ex]
& = & \pi_i \iota'_{\ol{R},
  \ol{m}} \,\circ\, \setsem{\Gamma; \Phi, \ol\alpha \vdash F} (\pi_i
f)[\ol{\alpha := \id_{\pi_i R}}]\\[2ex]
& = & k' \, \circ \, h'\,\circ\,\pi_i \iota'_{\ol{R},
  \ol{m}} \,\circ\, \setsem{\Gamma; \Phi, \ol\alpha \vdash F} (\pi_i
f)[\ol{\alpha := \id_{\pi_i R}}]\\[2ex]
& = & k' \,\circ \, j'_{\ol{\pi_i R}, \ol{\pi_i m}} \,\circ\, \setsem{\Gamma; \Phi,
      \ol\alpha \vdash F} (\pi_i f)[\ol{\alpha := \id_{\pi_i R}}]
\end{array}\]
where $j$ is the morphism mapping the cocone into the colimit
\[
\big(\textit{Lan}_{\ol{\lambda \ol{S}.\,\setsem{\Gamma; \ol\alpha
      \vdash K}(\pi_i \rho)[\ol{\alpha := S}]}}\, \lambda
\ol{S}.\,\setsem{\Gamma; \Phi, \ol\alpha \vdash F}(\pi_i
\rho)[\ol{\alpha := S}]\big)\, \ol{\setsem{\Gamma; \Phi \vdash
    A}(\pi_i \rho')}
\]
and $j'$ is the morphism mapping the cocone into the colimit
\[
\big(\textit{Lan}_{\ol{\lambda \ol{S}.\,\setsem{\Gamma; \ol\alpha
      \vdash K}\rho[\ol{\alpha := S}]}}\, \lambda
\ol{R}.\,\setsem{\Gamma; \Phi, \ol\alpha \vdash F}(\pi_i
\rho')[\ol{\alpha := S}]\big)\, \ol{\setsem{\Gamma; \Phi \vdash
    A}(\pi_i \rho')}
\]
Since both $k$ and $k'$ are isomorphisms, we have that, up to
isomorphism,
\[\begin{array}{lll}
& & \pi_i \Big( \big(\textit{Lan}_{\ol{\lambda
    \ol{R}.\,\relsem{\Gamma; \ol\alpha \vdash K}\rho[\ol{\alpha :=
        R}]}}\, \lambda \ol{R}.\,\relsem{\Gamma; \Phi, \ol\alpha
  \vdash F}f[\ol{\alpha := \id_R}]\big)\, \ol{\relsem{\Gamma; \Phi
    \vdash A}\rho'} \Big) \circ\, j_{\ol{\pi_i R},
  \ol{\pi_i m}} \\[2ex]
& = & j'_{\ol{\pi_i R}, \ol{\pi_i m}} \,\circ\, \setsem{\Gamma; \Phi,
      \ol\alpha \vdash F} (\pi_i f)[\ol{\alpha := \id_{\pi_i R}}]
\end{array}\]
By the surjectivity of each $\pi_i : \rel \to \set$ we have that
\[\begin{array}{lll}
& & \pi_i \Big( \big(\textit{Lan}_{\ol{\lambda
    \ol{R}.\,\relsem{\Gamma;
      \ol\alpha \vdash K}\rho[\ol{\alpha := R}]}}\, \lambda
\ol{R}.\,\relsem{\Gamma; \Phi, \ol\alpha \vdash F}f[\ol{\alpha :=
    \id_R}]\big)\, \ol{\relsem{\Gamma; \Phi \vdash A}\rho'} \Big)
\,\circ\, j_{\ol{S}, \ol{g}} \\[2ex]
& = & j'_{\ol{S}, \ol{g}} \,\circ\, \setsem{\Gamma; \Phi, \ol\alpha
  \vdash F} (\pi_i f)[\ol{\alpha := \id_{S}}]
\end{array}\]
for all $\ol{S: \set_0}$ and $\ol{g : \setsem{\Gamma; \ol\alpha \vdash
    K}(\pi_i \rho)[\ol{\alpha := S}] \to \setsem{\Gamma; \Phi \vdash
    A}(\pi_i \rho')}$.  That is, \[\pi_i \Big(
\big(\textit{Lan}_{\ol{\lambda \ol{R}.\,\relsem{\Gamma; \ol\alpha
      \vdash K}\rho[\ol{\alpha := R}]}}\, \lambda
\ol{R}.\,\relsem{\Gamma; \Phi, \ol\alpha \vdash F}f[\ol{\alpha :=
    \id_R}]\big)\, \ol{\relsem{\Gamma; \Phi \vdash A}\rho'} \Big)\]
satisfies the universal property (Equation~\ref{eq:cocone-def}) of
\[
\big(\textit{Lan}_{\ol{\lambda \ol{S}.\,\setsem{\Gamma; \ol\alpha
      \vdash K} (\pi_1 \rho) [\ol{\alpha := S}]}}\, \lambda
\ol{S}.\,\setsem{\Gamma; \Phi, \ol\alpha \vdash F}(\pi_i f)[\ol{\alpha
    := \id_S}]\big)\, \ol{\setsem{\Gamma; \Phi \vdash A}(\pi_i \rho')}
\]
The two expressions must therefore be equal, and thus
Equation~\ref{eq:part-one} holds.

\medskip
To prove Equation~\ref{eq:part-two} we first observe that
  the universal property (Equation~\ref{eq:cocone-def}) of
\[
\big(\textit{Lan}_{\ol{\lambda \ol{R}.\,\relsem{\Gamma; \ol\alpha
      \vdash K}\rho[\ol{\alpha := R}]}}\, \lambda
\ol{R}.\,\relsem{\Gamma; \Phi, \ol\alpha \vdash F}\rho[\ol{\alpha :=
    R}]\big)\, \ol{\relsem{\Gamma; \Phi \vdash A}f}
\] gives that,
for every $\ol{R : \rel_0}$ and $\ol{m : \relsem{\Gamma; \ol\alpha
    \vdash K}\rho[\ol{\alpha := R}] \to \relsem{\Gamma; \Phi \vdash
    A}\rho}$,
\[\begin{array}{lll}
 & & \big(\textit{Lan}_{\ol{\lambda \ol{R}.\,\relsem{\Gamma; \ol\alpha
      \vdash K}\rho[\ol{\alpha := R}]}}\, \lambda
\ol{R}.\,\relsem{\Gamma; \Phi, \ol\alpha \vdash F}\rho[\ol{\alpha :=
    R}]\big)\, \ol{\relsem{\Gamma; \Phi \vdash A}f} \,\circ\,
\iota_{\ol{R}, \ol{m}}\\[2ex]
& = & \iota'_{\ol{R}, \ol{\relsem{\Gamma; \Phi \vdash A}f \,\circ\,
    m}}
\end{array}\]
where $\iota$ is the cocone into the colimit
\[
\big(\textit{Lan}_{\ol{\lambda \ol{R}.\,\relsem{\Gamma; \ol\alpha
      \vdash K}\rho[\ol{\alpha := R}]}}\, \lambda
\ol{R}.\,\relsem{\Gamma; \Phi, \ol\alpha \vdash F}\rho[\ol{\alpha :=
    R}]\big)\, \ol{\relsem{\Gamma; \Phi \vdash A}\rho}
\]
and $\iota'$ is the cocone into the colimit
\[\big(\textit{Lan}_{\ol{\lambda \ol{R}.\,\relsem{\Gamma; \ol\alpha
 \vdash K}\rho[\ol{\alpha := R}]}}\, \lambda \ol{R}.\,\relsem{\Gamma;
  \Phi, \ol\alpha \vdash F}\rho[\ol{\alpha := R}]\big)\,
  \ol{\relsem{\Gamma; \Phi \vdash A}\rho'}
\]
Projecting, together with the induction hypothesis for
$\Gamma;\Phi,\ol\alpha\vdash F$, thus gives
\[\begin{array}{lll}
 & & \pi_i \Big( \big(\textit{Lan}_{\ol{\lambda
    \ol{R}.\,\relsem{\Gamma; \ol\alpha
    \vdash K}\rho[\ol{\alpha := R}]}}\,
    \lambda \ol{R}.\,\relsem{\Gamma; \Phi, \ol\alpha \vdash
      F}\rho[\ol{\alpha := R}]\big)\,
    \ol{\relsem{\Gamma; \Phi \vdash A}f} \Big)
\,\circ\, \pi_i \iota_{\ol{R}, \ol{m}} \\[2ex]
& = & \pi_i \iota'_{\ol{R}, \ol{\relsem{\Gamma; \Phi \vdash A}f \,\circ\,
    m}}
\end{array}\]
Equation~\ref{eq:cocone-proj} and its analogue for
  $\iota'$ and $j'$ then give that, up to isomorphism,
\[\begin{array}{lll}
 & & \pi_i \Big( \big(\textit{Lan}_{\ol{\lambda \ol{R}.\,\relsem{\Gamma;
      \ol\alpha
    \vdash K}\rho[\ol{\alpha := R}]}}\,
    \lambda \ol{R}.\,\relsem{\Gamma; \Phi, \ol\alpha \vdash
      F}\rho[\ol{\alpha := R}]\big)\,
    \ol{\relsem{\Gamma; \Phi \vdash A}f} \Big)
\,\circ\, j_{\ol{\pi_i R}, \ol{\pi_i m}} \\[2ex]
& = & j'_{\ol{\pi_i R}, \ol{\relsem{\Gamma; \Phi \vdash A}(\pi_i f)
    \,\circ\, \pi_i m}}
\end{array}\]
where $j$ is the cocone into the colimit
\[
\big(\textit{Lan}_{\ol{\lambda \ol{S}.\,\setsem{\Gamma; \ol\alpha
      \vdash K}(\pi_i \rho)[\ol{\alpha := S}]}}\, \lambda
\ol{S}.\,\setsem{\Gamma; \Phi, \ol\alpha \vdash F}(\pi_i
\rho)[\ol{\alpha := S}]\big)\, \ol{\setsem{\Gamma; \Phi \vdash
    A}(\pi_i \rho)}
\]
and $j'$ is the cocone into the colimit
\[
\big(\textit{Lan}_{\ol{\lambda \ol{S}.\,\setsem{\Gamma; \ol\alpha
      \vdash K}(\pi_i \rho)[\ol{\alpha := S}]}}\, \lambda
\ol{S}.\,\setsem{\Gamma; \Phi, \ol\alpha \vdash F}(\pi_i
\rho)[\ol{\alpha := S}]\big)\, \ol{\relsem{\Gamma; \Phi \vdash
    A}(\pi_i \rho')}
\]
By the surjectivity of each $\pi_i : \rel \to \set$ we have that
\[\begin{array}{lll}
 & & \pi_i \Big( \big(\textit{Lan}_{\ol{\lambda
    \ol{R}.\,\relsem{\Gamma; \ol\alpha \vdash K}\rho[\ol{\alpha :=
        R}]}}\, \lambda \ol{R}.\,\relsem{\Gamma; \Phi, \ol\alpha
  \vdash F}\rho[\ol{\alpha := R}]\big)\, \ol{\relsem{\Gamma; \Phi
    \vdash A}f} \Big) \,\circ\, j_{\ol{S}, \ol{g}} \\[2ex]
& = & j'_{\ol{S}, \ol{\relsem{\Gamma; \Phi \vdash A}(\pi_i f) \,\circ\, g}}
\end{array}\]
for every $\ol{S: \set_0}$ and $\ol{g : \setsem{\Gamma; \ol\alpha
    \vdash K}(\pi_i \rho)[\ol{\alpha := S}] \to \setsem{\Gamma; \Phi
    \vdash A}(\pi_i \rho)}$. That is,
\[\pi_i \Big( \big(\textit{Lan}_{\ol{\lambda
    \ol{R}.\,\relsem{\Gamma; \ol\alpha \vdash K}\rho[\ol{\alpha :=
        R}]}}\, \lambda \ol{R}.\,\relsem{\Gamma; \Phi, \ol\alpha
  \vdash F}\rho[\ol{\alpha := R}]\big)\, \ol{\relsem{\Gamma; \Phi
    \vdash A}f} \Big)\]
satisfies the universal property (Equation~\ref{eq:cocone-def}) of
\[
\big(\textit{Lan}_{\ol{\lambda \ol{S}.\,\setsem{\Gamma; \ol\alpha
      \vdash K} (\pi_i \rho) [\ol{\alpha := S}]}}\, \lambda
\ol{S}.\,\setsem{\Gamma; \Phi, \ol\alpha \vdash F} (\pi_i \rho)
   [\ol{\alpha := S}]\big)\, \ol{\setsem{\Gamma; \Phi \vdash A} (\pi_i
     f)}
\]
The two expressions must therefore be equal, and thus
Equation~\ref{eq:part-two} holds.
\end{proof}

The requirement that the types in $\ol{K}$ be constant-free
polynomials is quite restrictive. In fact, it precludes the expression
of many GADTs commonly used in practice, such as the following GADT
$\mathtt{Expr}$ of typed expressions:
\[
\begin{array}{l}
\mathtt{data\; Expr \;(A : Set)\;:\;Set\;where}\\
\hspace*{0.4in}\mathtt{const\;:\; Int \to Expr\; Int}\\
\hspace*{0.4in}\mathtt{is\_zero\;:\; Expr\; Int \rightarrow Expr\; Bool}\\
\hspace*{0.4in}\mathtt{if\;:\; Expr\; Bool \to Expr\; A \to Expr\; A \to Expr\; A}
\end{array}
\]

Even when extending along constant-free polynomials, it is unclear how
to prove the Identity Extension Lemma for $\Lan$-types or, indeed,
whether it holds at all in their presence. To prove the IEL for
$\Nat$-types, for example, it is necessary to cut down the semantic
interpretation by requiring that the natural transformations in the
set interpretation preserve equalities. The impossibility of
restricting the semantic interpretations of $\Lan$-types, as shown in
Proposition~\ref{prop:full-lan}, suggests that the IEL may fail for
them. As noted above, this would derail not just parametricity, but
the well-definedness of the term semantics as well.  We have been able
neither to prove nor disprove the IEL thus far, but our substantial
and lengthy investigations into the issue lead us to suspect that it
may not actually be possible to define a functorial semantics for
$\Lan$-types that gives rise to parametric models for languages
supporting even constant-free polynomial primitive GADTs.

\section{Conclusion and Directions for Future Work}\label{sec:conclusion}

We have constructed a parametric model for a calculus providing
primitives for constructing nested types directly as fixpoints, rather
than representing them via their Church encodings. We have also used
the Abstraction Theorem for this model to derive free theorems for
nested types. This was not possible before~\cite{jp19} because such
types were not previously known to have well-defined interpretations
in locally finitely presentable categories (here, $\set$ and $\rel$),
and, to our knowledge, no term calculus for them existed either. The
key to obtaining our parametric model is the delicate threading of
functoriality and its accompanying naturality conditions throughout
our model construction.

We were surprised to find that, although GADTs were shown
in~\cite{jp19} to have appropriately cocontinuous functorial semantics
in terms of left Kan extensions, our model construction does not
extend to give a parametric model for them. It may be possible to
modify the categories of relations and relation transformers so that
if $(A,B,R)$ is a relation then $\pi_1 : R \to A$ and $\pi_2 : R \to
B$ are always surjective; this would prohibit situations like that in
Example~\ref{ex:no-rel-transf}, and might therefore allow us to
recover our IEL and, ultimately, an Abstraction Theorem appropriate to
GADTs.  If this turns out to be possible, then generalizing the
resulting model construction to locally $\lambda$-presentable
categories for $\lambda > \omega$ would make it possible to handle
broader classes of GADTs. (As shown in~\cite{jp19}, $\lambda >
\omega_1$ is required to interpret even common GADTs.) We could even
attempt to carry out our construction in locally $\lambda$-presentable
cartesian closed categories (lpcccs) $\cal C$ whose categories of
(abstract) relations, obtained by pullback as in~\cite{jac99}, are
also lpcccs and are appropriately fibred over $\cal C$. This would
give a framework for constructing parametric models for calculi with
primitive GADTs that is based on locally $\lambda$-presentable
fibrations, for some appropriate definition thereof.

The expressivity of folds for nested types has long been a vexing
issue (see, e.g.,~\cite{bm98}), and this is naturally inherited by the
calculus presented here. Since it codes all recursion using standard
folds, and since folds for nested types must return natural
transformations, many standard functions over nested types cannot be
represented in this calculus. Another important direction for future
work is therefore to improve the expressivity of our calculus by
adding, say, generalized folds~\cite{bp99}, or Mendler
iterators~\cite{amu05}, or otherwise extending standard folds to
express computations whose results are not natural transformations. In
particular, we may wish to add term-level fixpoints as, e.g.,
in~\cite{pit00}. This would require the categories interpreting types
to be not just locally $\lambda$-presentable, but also to support some
kind of domain structure. At the moment it seems that such an endeavor
will have to wait for a generalization of the results presented here
to at least locally $\omega_1$-presentable categories, however:
$\omega$-CPO, the most natural category of domains to replace $\set$,
is not locally finitely presentable.

\medskip
\noindent
{\small \sc Acknowledgment} We thank Daniel Jeffries for his
collaboration on~\cite{jgj21}, for his contributions to a preliminary
version of this paper, and for supporting publication of the present
version in his absence. We also thank the anonymous reviewers for
their detailed and thorough reviews, which have greatly improved the
paper.

\bibliographystyle{alpha}
\bibliography{references}

\end{document}